\documentclass[12,reqno]{amsart}
\sloppy

\usepackage{graphicx}

\usepackage{hyperref}
\usepackage{a4wide}

\usepackage{amssymb,amsthm,amsmath}
\usepackage{dsfont}
\usepackage{pgfplots}
\usepackage{enumitem} 
\usepackage{mathtools}
\usepackage{amsfonts}
\usepackage{empheq, cases}
\usepackage{float}
\usepackage{array,booktabs}

\usepackage{colortbl}
\usepackage{xcolor}
\usepackage{tablefootnote}
\usepackage[justification=centering]{caption}

\numberwithin{equation}{section}
\numberwithin{figure}{section}

\usetikzlibrary{arrows,decorations.markings}

 \renewcommand{\i}{{\rm i}}

 \newcommand{\R}{\mathbb{R}}
 \newcommand{\C}{\mathbb{C}}
 
 \newcommand\given[1][]{\:#1\vert\:}

\newtheorem{assumption}{Assumption}
\newtheorem{theorem}{Theorem}

\newtheorem{lemma}{Lemma}

\newtheorem{prop}{Proposition}
\date{}
\title[Valuation of life insurance policies]{On the valuation of life insurance policies for dependent coupled lives}

\author{Kira Henshaw}
\address{University of Liverpool}
\email{kirahenshaw@gmail.com}

\author{Cedric H. A. Koffi}
\address{Institute for Financial and Actuarial Mathematics, Department of Mathematical Sciences, University of Liverpool, L69 7ZL, United Kingdom}

\author{Olivier Menoukeu Pamen}
\address{Institute for Financial and Actuarial Mathematics, Department of Mathematical Sciences, University of Liverpool, L69 7ZL, United Kingdom}
\email{menoukeu@liverpool.ac.uk}

\author{Raghid Zeineddine}
\address{Institute for Financial and Actuarial Mathematics, Department of Mathematical Sciences, University of Liverpool, L69 7ZL, United Kingdom}
\email{raghid.zeineddine@liverpool.ac.uk}

\thanks{O. Menoukeu Pamen acknowledges the funding provided by the Alexander von Humboldt Foundation, under the programme financed by the German Federal Ministry of Education and Research entitled German Research Chair No 01DG15010}

\subjclass{Primary 91G05, 91G20; Secondary 65D30, 60G51}

\keywords{Life insurance; L\'evy processes; hybrid model;  broken-heart syndrome; bereavement effect; joint mortality}

\begin{document}

\maketitle
	\begin{abstract}
In this paper, we investigate a complex variation of the standard joint life annuity policy by introducing three distinct contingent benefits for the surviving member(s) of a couple, along with a contingent benefit for their beneficiaries if both members pass away. Our objective is to price this innovative insurance policy and analyse its sensitivity to key model parameters, particularly those related to the joint mortality framework. We employ the $QP$-rule (described in Section \ref{secgenset}), which combines the real-world probability measure $P$ for mortality risk with risk-neutral valuation under $Q$ for financial market risks. The model enables explicit pricing expressions, computed using efficient numerical methods. Our results highlight the interdependent risks faced by couples, such as broken-heart syndrome, providing valuable insights for insurers and policyholders regarding the pricing influences of these factors.

		\end{abstract}

\section{Introduction}
A standard life insurance policy is a contract between an individual and an insurance company that provides periodic payouts, or annuities, to the insured throughout their lifetime. A well-known variation is the last-survivor annuity or joint life annuity, which continues annuity payments until the last member of a couple dies. For more details on these products and their valuation, see \cite{Brown, Frees, Luciano, Luciano2, Sari} and others. Joint life tables differ from individual life tables, making it important to model the joint survivorship of couples. Cohabiting couples' lives are often interdependent, especially with short-term mortality impacts like broken-heart syndrome \cite{Fitz, Ward}. This effect, along with long-term impacts such as life circumstances bereavement \cite{Hougaard}, has been modeled in continuous time by \cite{jevtic2017joint}, a model we adopt in this paper.

In our paper, we introduce a complex variation of the standard joint life annuity policy by offering three distinct contingent benefits to the surviving member(s) of a couple, along with a contingent benefit to their beneficiaries in the event that both members pass away. To the best of our knowledge, this represents a novel type of insurance policy specifically designed for couples.

Our primary aim is to determine the price of this innovative insurance policy and explore the sensitivity of its pricing to various model parameters, particularly those within the joint mortality framework. The joint mortality model plays a crucial role in capturing the interdependent risks between the couple, such as the broken-heart syndrome and other bereavement effects. Understanding how these parameters impact the policy price helps inform both the insurance provider and the policyholder of the associated risks.

Below, we provide a detailed description of the policy, explaining its contingent benefits and outlining the structure that distinguishes it from other life insurance products.

Let $T^*>0$ be a finite time horizon, and let $(S_t)_{0\leq t \leq T^*}$ be the price of a generic equity index, the dynamic of this price is described later in \eqref{Stock price}. The insurance policy we examine provides various benefits to policyholders based on the notional $I$, in exchange for a premium of the policy that must be paid when entering the contract. The notional $I$ is invested in the equity index. Our first aim is to determine this premium. At maturity $T\in (0,T^*]$, the contract offers a \textsl{Guaranteed Minimum Accumulation Benefit (GMAB)},  which allows for growth opportunities based on the equity index’s performance. Additionally, the contract permits early termination, referred to as surrender, in which case a \textsl{Surrender Benefit (SB)} is paid by the insurance company. Moreover, the contract includes a  \textsl{Death Benefit (DB)}, payable upon the death of either member of the couple. These benefits are standard guarantees in variable annuity contracts, as seen in sources like \cite{Baci, VA1, VA2, Escobar}. Let $\tau_{x_1}$ and $\tau_{x_2}$ denote the remaining lifetimes of the two members of the couple aged $x_1$ and $x_2$, respectively at time 0, and let $\tau^s$ denote the random time of  surrender. Mathematical models for these random times are given in Section \ref{secmomo}. Below, we provide a detailed description of each benefit.

 At maturity $T$, if the surrender option has not been exercised and at least one member of the couple is still alive, then the GMAB pays 
\[ \max( I S_T, I e^{\delta T}), \]
where  $\delta>0$ is a guaranteed rate. The GMAB provides opportunities for growth based on the performance of the equity index, while also offering protection through the guaranteed benefit $I e^{\delta T}$. 
Using the random times introduced earlier, the guaranteed minimum accumulation benefit is expressed as:
\begin{equation}
\text{GMAB}(T)=\max(IS_T,Ie^{\delta T})\mathds{1}_{\{\tau^s>T\}}\mathds{1}_{\{\tau_{x_1}>T\}\cup\{\tau_{x_2}>T\}}. \label{formula GMAB}
\end{equation}
In other words, at time $T$, if there is no early surrender and at least one member of the couple is alive, they receive the GMAB payoff. If both members die before maturity $T$, no GMAB is paid.

Concerning the surrender benefit, let ${\bf t}:=(t_0,t_1, \ldots, t_K)^\top$ with $0=t_0<t_1< \ldots <t_K< T$. Early surrender is possible at any time point $t_i\in {\bf t}$ with $i = 1, \ldots,K-1$. If at least one member of the couple is alive at time $t_i\in {\bf t}$, and the contract is cancelled, the policyholder is entitled to a refund equal to the current value of their share underlying index ($IS_{t_i}$), reduced by a surrender penalty.  The surrender benefit pays  at time $t_i$:
\[    IS_{t_i}- IS_{t_i}\bar P(t_i), \]
where $\bar P:[0,T] \to [0,1)$ is a decreasing penalty function with $\bar P(T)=0$, meaning that early surrender incurs a higher penalty. Let $\widetilde P:=1- \bar P$, so $ \widetilde P:[0,T] \to (0,1]$ is an increasing function with $\widetilde P(T)=1$. Thus the surrender benefit simplifies to: 
\[    IS_{t_i}\widetilde P(t_i). \]
 Using the random times $\tau_{x_1}, \tau_{x_2}$, and $\tau^s$, the surrender benefit at  time $t_i\in {\bf t}$ is given by

\begin{equation}
\text{SB}(t_i)=IS_{t_i}\widetilde P(t_i)\mathds{1}_{\{\tau^s=t_i\}} \mathds{1}_{\{\tau_{x_1}>t_i\}\cup\{\tau_{x_2}>t_i\}}. \label{formula SB}
\end{equation}
In other words, at time $t_i$,  if only the member of the couple aged $x_1$ at time 0 is still alive and they decide to terminate the contract, they receive the Surrender Benefit (SB).  The same applies if only the member aged $x_2$ is alive and decides to terminate the contract. If both members are alive, the decision to terminate the contract must be mutual, as neither can cancel without the other’s agreement.

Regarding the Death Benefit (DB), let $\bar{\bf t}:=(\bar t_0,\bar{t}_1, \ldots, \bar{t}_N)^\top$ with $0=\bar t_0 < \bar{t}_1< \ldots < \bar{t}_N=T$, where the mortality is monitored at these time points. If no surrender occurs before $\bar{t_i}$, and only one member dies in $[\bar t_{i-1}, \bar t_i[$, the surviving member receives
\[ \max(IS_{\bar{t}_i}, Ie^{\delta \bar{t}_i}).\]
If both members die in $[\bar t_{i-1}, \bar t_i[$, the beneficiaries receive
\[\alpha  \max(IS_{\bar{t}_i}, Ie^{\delta \bar{t}_i}),\] 
where $1<\alpha <2$, indicating a higher payout when both members die in the same interval.

Although the time scales ${\bf t}$ and $\bar {\bf t}$ are arbitrary, it is natural to assume that ${\bf t}\subset \bar {\bf t}$, allowing surrender at annual intervals and monitoring of death more frequently, such as quarterly. Using the random times $\tau_{x_1}, \tau_{x_2}$, and $\tau^s$, the Death Benefit at  time $\bar t_i\in {\bf \bar t}$ is given by:
\begin{equation}
\text{DB}(\bar{t}_i)=\max(IS_{\bar{t}_i},Ie^{\delta \bar{t}_i})\mathds{1}_{\{\tau^s\geq \bar{t}_i\}}[\mathds{1}_{\{\bar{t}_{i-1}\leq\tau_{x_1}<\bar{t}_i\}}+\mathds{1}_{\{\bar{t}_{i-1}\leq\tau_{x_2}<\bar{t}_i\}} +(\alpha-2)\mathds{1}_{\{\bar{t}_{i-1}\leq\tau_{x_1}<\bar{t}_i\}}\mathds{1}_{\{\bar{t}_{i-1}\leq\tau_{x_2}<\bar{t}_i\}}], \label{formula DB}
\end{equation}
where $1< \alpha <2$.

It is clear that the benefits of our insurance policy are linked to the financial market, as is the case with the surrender risk, based on the definition of surrender intensity provided in Section \ref{secmomo}. It is commonly agreed that market valuation should be conducted under a risk-neutral probability measure $Q$. Simultaneously, the insurer faces mortality risk from the policyholders, which must be evaluated under the real-world probability measure $P$. To address this, we use the $QP$-rule, presented in Section \ref{secgenset}, which combines market valuation under $Q$ with mortality risk evaluation under $P$. 

Our work contributes to the existing literature in several key ways. First, we price a life insurance policy designed for a couple or spousal pair, modeled as two cohabiting adults, using the flexible continuous-time joint mortality model of \cite{jevtic2017joint}. Second, the policy provides three contingent benefits, all linked to the financial market, to the couple and their beneficiaries. A notable feature of the death benefit is that it covers not only the surviving spouse but also the beneficiaries in the event of both partners' deaths. Third, we employ the $QP$-rule (described in Section \ref{secgenset}) to combine the real-world probability measure $P$ for mortality risk with the risk-neutral valuation under $Q$ for financial market risks. Lastly, the paper presents a sensitivity analysis of the policy price, focusing particularly on the mortality model parameters and demonstrating the impact of the broken-heart syndrome on pricing.

The paper is organised as follows. In Section \ref{secgenset} we provide some preliminary results about  L\'evy processes, we introduce the model for the financial market and we introduce the $QP$-rule. In Section \ref{secmomo} we introduce the mortality and surrender models. Section \ref{secpricing} focuses on the pricing framework of the policy and states the main theorems; then the proofs of these theorems are given in Section 5. The numerical implementation and the sensitivity analysis are  presented in Section 6. Finally, some technical representations are given in the appendix.

\section{The general setup}\label{secgenset}
Let $T^*>0$ be a finite time horizon, let  $(\Omega, \mathcal{H})$ be a measurable space and let $P$ be the real-world probability measure defined on $\mathcal{H}$. Let $\mathbb{F}=(\mathcal F_t)_{t\in [0,T^*]}$ be a filtration satisfying the usual conditions, and let $\mathcal{F}:=\mathcal{F}_{T^*}$. For any $t\in [0,T^*]$, $\mathcal{F}_t$ contains only the available information about the financial market up to time $t$; we have $\mathcal{F} \subset \mathcal{H}$, in fact $\mathcal{H}$ contains more information than $\mathcal{F}$ such as national mortality rates, life-tables, health situations of the clients of the insurance company offering the insurance contract that we have in this paper, etc. Let $Q$ be a probability measure  defined on $\mathcal{F}$, we assume that $Q$ is equivalent to the restriction of $P$ to $\mathcal{F}$. 

\subsection{Preliminaries}
In this section, we present some preliminary results on Lévy processes, which drive the dynamics of the equity and interest rate markets.

  Consider the stochastic basis $(\Omega, \mathcal F, \mathbb{F}, Q)$, 
let $L^1$ and $L^2$ be two independent time-inhomogeneous L\'evy processes, i.e., continuous-time processes with independent increments, and the following characteristic function:
\begin{align}\label{eq:FourierTransformL}
  \mathbb{E}_Q\Big[e^{iuL^j_t}\Big]= \exp \Big\{\int_0^t\Big(iub_s^j -\frac12c_s^ju^2 
  + \int_{\mathbb{R}}\big(e^{iux}-1-iux {\bf 1}_{|x|\leq 1}\big)F^j_s(\mathrm{d}x)\Big)\mathrm{d}s\Big\}
\end{align}
for $j=1,2$; where the functions $(b_s^j, c_s^j, F_s^j)_{s\in [0, T^*]},\, j=1,2$ satisfy
the integrability condition
\[
\int_0^{T^*}\Big(|b_s^j| + c_s^j+ \int_{\mathbb{R}}(\min{\{|x|^2, 1\}})F_s^j(\mathrm{d}x)\Big)\mathrm{d}s < \infty.
\]

We require the existence of exponential moments of a certain order. Therefore, we introduce the following assumption.

\begin{assumption}[Exponential moments]
\label{ass:EM}
For $j=1,2$, there exist positive constants $M_j$ and $\epsilon_j$, such that for each $u \in [-(1+\epsilon_j)M_j, (1+\epsilon_j)M_j]$ 
\[
\int_0^{T^*}\int_{\{|x|> 1\}}e^{ux}\, F^j_s(\mathrm{d}x)\mathrm{d}s < \infty .
\]
\end{assumption}

It is worth pointing out that standard L\'evy processes, such as hyperbolic, normal Inverse Gaussian, Variance Gamma and CGMY processes, satisfy the above condition.
The cumulant function  of $L^j, j=1,2$ is defined as follows
\[
\theta^j_s(z) = b_s^jz + \frac12c_s^jz^2 + \int_{\mathbb{R}}(e^{zx} -1 -zx)F_s^j(\mathrm{d}x),
\]
for any $z\in \mathbb{C}$ such that $Re(z) \in [-(1+\epsilon_j)M_j, (1+\epsilon_j)M_j]$. 
Consequently for any $z\in \mathbb{C}$ verifying the previous condition, we have 
\[
\mathbb{E}_Q[\exp(zL^j_t)]< \infty \text{ and }	\mathbb{E}_Q\big[\exp(zL^j_t)\big] = \exp\Big(\int_0^t \theta^j_s(z) \mathrm{d}s \Big).
\]
We have the following important result, which will very useful in the sequel (see \cite{EberleinRai} for more details).  Let $f:\mathbb{R}^+ \to \mathbb{C}$ be a left-continuous function with $|Re(f)| \leq M_j$, then
\begin{equation}\label{cumulant}
	\mathbb{E}_Q\Big[\exp\Big(\int_0^tf(s)\mathrm{d}L^j_s\Big)\Big] = \exp\Big(\int_0^t \theta^j_s(f(s)) \mathrm{d}s \Big),
\end{equation}
where the integrals are defined component-wise for the real and imaginary part. 

\subsection{Financial and Bond  markets}

In \cite{Eberlein-R}, the authors developed two versions of a hybrid model for equity and interest rate markets. Their approach is termed "hybrid" because it accounts for the stochastic dependence between the two markets, a well-known phenomenon. In their models, the dynamics of both the equity and interest rate markets are driven by time-inhomogeneous Lévy processes. The first version of their model was adopted in \cite{VA1, VA2}; in this paper, we adopt their second version, called the Hybrid Lévy Forward Rate Model. This model is described as follows: Consider the stochastic basis $(\Omega, \mathcal F, \mathbb{F}, Q)$ introduced in the previous subsection, let $L^1=(L^1_t)_{t\in [0,T^*]}$ and $L^2=(L^2_t)_{t\in [0,T^*]}$ be two independent time-inhomogeneous L\'evy processes. The instantaneous forward rates $(f(t,T)_{0 \le t \le T \le T^*})$ are given by  
\begin{equation}
	f(t,T) = f(0,T) + \int_0^t \alpha(s,T)\mathrm{d}s - \int_0^t \sigma_1(s,T)\mathrm{d}L^1_s +\int_0^t\beta(s,T)\mathrm{d}L^2_s, \quad 0 \le t \le T \le T^*, \label{forward rate}
\end{equation}
where $f(0,T)$ is  a deterministic and bounded function. 
The drift functions $\alpha(\cdot,T)$ and $\beta(\cdot,T)$ as well as the volatility  $\sigma_1(\cdot,T)$  could be chosen to satisfy the usual conditions of measurability and boundedness (see \cite{EJR}, equation (2.5)). But, for the sake of simplicity in the calibration of the model, the authors considered the following standard choice for $\sigma_1(\cdot, T)$ and $\beta(\cdot, T)$ 
\begin{equation}\label{sigma un}
\sigma_1(s, T):=	\begin{cases}
	                 a\exp(-a(T-s)), \, &\text{ if } s\leq T\\
	                 0, \,  &\text{ if } s>T
                	\end{cases}
\end{equation}
where $a$ is a positive constant, and 
\begin{equation}\label{beta un}
	\beta(s, T):=	\begin{cases}
		b\exp(-b(T-s)), \,  &\text{ if } s\leq T\\
		0, \,  &\text{ if } s>T
	\end{cases}
\end{equation}
where $b\neq 0$.
The price  of a zero coupon bond at time $t$ with maturity $T\ge t$ is 
\[
B(t,T)= \exp\Big(-\int_t^T f(t,s)\mathrm{d}s\Big).
\]
 We denote 
\begin{eqnarray*}
A(u,T)&:=& \int_u^T \alpha(u,s)\mathrm{d}s,  \ \ \Sigma_1(u,T):= \int_u^T \sigma_1(u,s)\mathrm{d}s,\\
\Sigma_2(u,T)&:= & \int_u^T \beta(u,s)\mathrm{d}s.
\end{eqnarray*} 
The short rate is defined by $r(t):=f(t,t)$ and applying Lemma \ref{lemm1a}  (see Appendix \ref{app:prelim}), we have the following formula for $B(t,T)$
\[
B(t,T)=B(0,T)\exp\Big(\int_0^{t}r(s)\mathrm{d}s  - \int_0^{t}A(u,T)\mathrm{d}u + \int_0^{t}\Sigma_1(u,T)\mathrm{d}L^1_u - \int_0^{t}\Sigma_2(u,T)\mathrm{d}L^2_u\Big).
\]
 We further assume that 
\begin{eqnarray}
\Sigma_1(u,T)&\leq & M_1, \label{AssM1}\\
|\Sigma_2(u,T)| &\leq & \frac{M_2}{3}, \label{AssM22}
\end{eqnarray}
where $M_1$ and $M_2$ are the constants given in Assumption \ref{ass:EM}. This implies that the exponential of each stochastic integral has a finite expectation under $Q$.

For the market to be arbitrage free, we require that $(B_t^{-1}B(t,T))_{0 \le t \le T}$, where $B_t = \exp\big(\int_0^tr(s)\mathrm{d}s\big)$, is a martingale. Using the independence of $L^1$ and $L^2$ and equation \eqref{cumulant},  we deduce that   $Q$  is  a martingale measure if
\begin{equation}\label{dc}
   A(u,T)= \theta^1_u(\Sigma_1(u,T)) +\theta^2_u(-\Sigma_2(u,T)), \: \: \: 0 \le u \le T.
\end{equation}
Therefore, we assume in the following that \eqref{dc} holds.

The price of a generic equity index is given by 
\begin{equation}
S_t=S_0\exp\Big(\int_0^t r(s)ds+\int_0^t \sigma_2(s)\mathrm{d}L^2_s -\omega(t)\Big), \label{Stock price}
\end{equation}  
where the volatility  $\sigma_2(\cdot)$ is assumed to satisfy the usual conditions of measurability and boundedness. We assume, without loss of generality, that $S_0=1$. We assume that  
\begin{equation}
\sigma_2(s)\leq \frac{M_2}{3}, \label{AssM2}
\end{equation}
where $M_2$ is the constant given in Assumption \ref{ass:EM}; this implies that the exponential of the stochastic integral has a finite expectation under $Q$. The drift term $\omega(t) $ in \eqref{Stock price} is chosen such that the discounted price $(B_t^{-1}S_t)_{t \in [0,T^*]}$ is a $Q$-martingale. The formula given in \eqref{cumulant} implies that
  \begin{align}\label{dcII}
  \omega(t)= \int_0^t \theta^2_s(\sigma_2(s)) \mathrm{d}s.
  \end{align}

So we deduce from the above explanation that the probability measure $Q$ is a risk-neutral martingale measure. The risk-neutral martingale measure is not unique because a financial market built on (exponential) L\'evy processes is in general incomplete. The particular risk-neutral martingale measure $Q$ that we consider in the following is determined via calibration using market quotes for liquidly traded interest rate and equity derivatives. For a detailed description of this calibration procedure, we refer the reader to \cite{Eberlein-R}.

\subsection{The $QP$-rule}

The $QP$-rule is a key  tool in the  pricing framework of our insurance policy, as explained below. This rule combines the risk-neutral probability $Q$ and the real-world probability $P$ as described in the following proposition. It states that there exists a probability measure, denoted by $Q\odot P$ on $(\Omega, \mathcal{H})$ which extends the risk-neutral probability $Q$ to $(\Omega,\mathcal{H})$ in a $P$-reasonable way. For the history of this tool, the proof of this proposition, and the rich literature and results on this subject, we refer the reader to \cite{Artzner}. 

\begin{prop}\label{QP}
There exist a unique probability measure, denoted $Q\odot  P$, on $(\Omega, \mathcal{H})$, such that $Q\odot P=Q$ on $\mathcal{F}$ and for all $A\in \mathcal{H}$ it holds that $Q\odot P(A|\mathcal{F})=P(A|\mathcal{F})$. This measure $Q\odot P$ satisfies for any random variable $X\geq 0$:
\begin{itemize}
\item[(a)] for any $\sigma$-algebra $\mathcal{G}\subset \mathcal{F}$,
\begin{equation}
\mathbb{E}_{Q\odot P}[X|\mathcal{G}]=\mathbb{E}_Q[\mathbb{E}_P[X|\mathcal{F}]|\mathcal{G}], \label{QP1}
\end{equation}

\item[(b)] for any $\sigma$-algebra $\mathcal{G}$ satisfying $\mathcal{F}\subset \mathcal{G}\subset \mathcal{H}$,
\begin{equation}
\mathbb{E}_{Q\odot P}[X|\mathcal{G}]=\mathbb{E}_P[X|\mathcal{G}]. \label{QP2}
\end{equation}
\end{itemize}
\end{prop}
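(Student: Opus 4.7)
My plan is to construct $Q \odot P$ explicitly as a measure absolutely continuous with respect to $P$, and then to verify every claim via the abstract Bayes formula. First I would invoke the standing hypothesis that $Q$ is equivalent to the restriction $P|_{\mathcal{F}}$, which yields a strictly positive, $\mathcal{F}$-measurable density $Z := dQ/dP|_{\mathcal{F}}$ with $\mathbb{E}_P[Z] = 1$. I would then \emph{define} a probability measure on $(\Omega, \mathcal{H})$ by
\begin{equation*}
(Q \odot P)(A) \;:=\; \mathbb{E}_P[Z \, \mathds{1}_A], \qquad A \in \mathcal{H}.
\end{equation*}
That this extends $Q$ on $\mathcal{F}$ is immediate from the definition of $Z$.

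Next I would verify the defining conditional identity. For any bounded measurable $X \geq 0$, Bayes' formula and the $\mathcal{F}$-measurability of $Z$ give
\begin{equation*}
\mathbb{E}_{Q \odot P}[X \mid \mathcal{F}] \;=\; \frac{\mathbb{E}_P[X Z \mid \mathcal{F}]}{\mathbb{E}_P[Z \mid \mathcal{F}]} \;=\; \mathbb{E}_P[X \mid \mathcal{F}],
\end{equation*}
which specialises to $(Q \odot P)(A \mid \mathcal{F}) = P(A \mid \mathcal{F})$ upon taking $X = \mathds{1}_A$. Uniqueness will follow by a direct reduction: for any candidate measure $\mu$ satisfying the two listed properties and any $A \in \mathcal{H}$, the identity $\mu(A) = \mathbb{E}_\mu[\mu(A \mid \mathcal{F})] = \mathbb{E}_\mu[P(A \mid \mathcal{F})]$ combined with $\mu = Q$ on $\mathcal{F}$ (and $\mathcal{F}$-measurability of $P(A \mid \mathcal{F})$) forces $\mu(A) = \mathbb{E}_Q[P(A \mid \mathcal{F})] = (Q \odot P)(A)$.

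For part (a), with $\mathcal{G} \subset \mathcal{F}$, I would chain the tower property with the identity just established:
\begin{equation*}
\mathbb{E}_{Q \odot P}[X \mid \mathcal{G}] \;=\; \mathbb{E}_{Q \odot P}\!\bigl[\mathbb{E}_{Q \odot P}[X \mid \mathcal{F}] \,\big|\, \mathcal{G}\bigr] \;=\; \mathbb{E}_{Q \odot P}\!\bigl[\mathbb{E}_P[X \mid \mathcal{F}] \,\big|\, \mathcal{G}\bigr] \;=\; \mathbb{E}_Q\!\bigl[\mathbb{E}_P[X \mid \mathcal{F}] \,\big|\, \mathcal{G}\bigr],
\end{equation*}
where the last equality uses that $\mathbb{E}_P[X \mid \mathcal{F}]$ is $\mathcal{F}$-measurable and $Q \odot P = Q$ on $\mathcal{F}$. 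For part (b), with $\mathcal{F} \subset \mathcal{G} \subset \mathcal{H}$, the crucial observation is that now $Z$ is itself $\mathcal{G}$-measurable, so the same Bayes computation collapses to $\mathbb{E}_{Q \odot P}[X \mid \mathcal{G}] = \mathbb{E}_P[X Z \mid \mathcal{G}] / \mathbb{E}_P[Z \mid \mathcal{G}] = \mathbb{E}_P[X \mid \mathcal{G}]$.

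The main (and only) obstacle I anticipate is bookkeeping around the use of Bayes' formula: I need strict positivity of $Z$ to justify the division, and I need the standard extension from indicator functions to arbitrary $X \geq 0$ through monotone convergence. No deeper analytic input beyond the equivalence of $Q$ and $P|_{\mathcal{F}}$ enters the argument, so I expect the proof to be short and essentially mechanical.
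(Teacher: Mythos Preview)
Your proof is correct and complete. The paper itself does not supply a proof of this proposition but simply refers the reader to Artzner, Eisele, and Schmidt (\emph{Insurance--Finance Arbitrage}); the argument you give---constructing $Q\odot P$ via the Radon--Nikodym density $Z = dQ/dP|_{\mathcal{F}}$ and verifying all claims through the abstract Bayes formula---is precisely the standard route taken in that reference.
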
 

We deduce from Proposition \ref{QP} that $Q\odot P$ is market-consistent because it coincides with the risk-neutral probability $Q$ on the information of the financial market, as given by $\mathcal{F}$. 

Our insurance policy is linked to  the financial market through the benefits of the policy, interest rate and surrender risk. Consequently, the insurer is exposed to financial risk. In addition, the insurer faces mortality risk associated with policyholders, which must be evaluated under the real-world probability $P$. As demonstrated in Section \ref{secpricing}, the use of the probability $Q\odot P$  in pricing the insurance policy allows for the evaluation of the mortality risk under the real-world probability measure $P$, while providing a standard risk-neutral valuation under $Q$ for the  other risks and payoffs related to the financial market.

\section{Mortality and surrender  models}\label{secmomo}

\subsection{Mortality model}
We adopt the joint mortality model from \cite{jevtic2017joint} to represent the mortality intensities of a couple or spousal pair, imagined  as two cohabiting adults who are the policyholders of the insurance contract. We suppose that the probability space $(\Omega, \mathcal H,  P)$, introduced in Section \ref{secgenset} is large enough to carry a two-dimensional Brownian motion $W=(W^1,W^2)$ independent of $\mathcal{F}$ (representing the information about the financial market), along with two unit exponentially distributed random variables $E_1$ and $E_2$, and a standard uniformly distributed random variable $U$. Within this space, we assume that $\{\mathcal{F},W,E_1,E_2,U\}$ forms an independent  collection.  Allowing for the finite time horizon  $T^*$ to be sufficiently long, the filtration generated by $W$ is denoted $(\mathcal{K}_t)_{t\in [0,T^*]}$, and the filtration generated by $(L^1, L^2)$ is denoted $(\mathcal{F}_t^{L^1,L^2})_{t\in [0,T^*]}$ (note that $\mathcal{F}_{T^*}^{L^1,L^2} \subset \mathcal{F}$).

 The model of \cite{jevtic2017joint} that we adopt here is an adaptation of the reduced-form approach commonly used in credit risk modeling, where default is modeled as a stopping time that occurs unexpectedly. The interested reader is referred to \cite{Lando} for more details about this framework. Let $\tau$ be  the first jump-time of a Cox process (also known as  doubly stochastic Poisson process) with intensity process $\lambda(t)$, representing the remaining lifetime of an individual. Then, the conditional probability that the stopping time $\tau$ exceeds time $t \geq 0$, under the probability measure $P$, satisfies the relation:
\begin{equation} \label{tau}
P(\tau > t \given[]\mathcal{K}_t) = \exp\Big({-\int_0^t\lambda(u)\mathrm{d}u}\Big).
\end{equation}
 This formulation captures the stochastic nature of mortality in the model, where the intensity $\lambda(t)$ evolves over time.

Consider two coupled lives aged $x_1$ and $x_2$ at time 0, with future lifetimes $\tau_{x_1}$ and $\tau_{x_2}$, respectively. For simplicity, we refer to the partner aged $x_1$ as $(x_1)$ and the partner aged $x_2$ as $(x_2)$.  Prior to the first death of a member of the couple, the instantaneous forces of mortality at time $t$ of $(x_1)$ and $(x_2)$ are denoted $\lambda_{x_1}(t)$ and $\lambda_{x_2}(t)$, respectively.

Following \cite{jevtic2017joint}, we assume that  $\lambda_{x_1}$ and $\lambda_{x_2}$ are Ornstein-Uhlenbeck diffusions\footnote{As noted in \cite{jevtic2017joint}, page 93, and based on the parameters in the numerical investigations, the probabilities of negative intensities over a 100-year period are exteremely small-no higher than one hundredth of a percent, and in most cases, considerably smaller (see \cite{Luciano2008} (page 10)).} 
\begin{equation}
\mathrm{d}\lambda_{x_i}(t)=\mu_i \lambda_{x_i}(t)\mathrm{d} t +\sigma_i\mathrm{d}W^i_t, \quad i\in\{1,2\}, \quad t\geq 0, \label{mortality intensities}
\end{equation}
where $W^1$ and $W^2$ are the two independent standard Brownian motions introduced above, and the parameters $\lambda_{x_i}(0)$, $\mu_i$, and $\sigma_i$ are positive constants. 

 The spouse whose death occurs first is identified as the deceased partner and denoted by $p \in \{(x_1),(x_2)\}$. Equivalently, the~spouse who survives the first death is denoted by $q \in \{(x_1),(x_2)\}$ and is referred to as the bereaved partner. The remaining lifetime of spouse $p$, denoted by $\tau_p$ is modelled as the first jump-time of a Cox process with intensity process $\lambda_{x_1}(t) + \lambda_{x_2}(t) $. So $\tau_p$ is given by
\begin{equation}
\tau_p := \inf\Big\{t \geq 0 \given[\Big] \int_0^t \lambda_{x_1}(u) + \lambda_{x_2}(u) \mathrm{d}u \geq E_1\Big\}.
\label{tau p}
\end{equation}
The standard uniformly distributed random variable $U$ is used to identify which spouse dies first by comparing it to a function of the mortality intensities at the time of the first death. Recalling that $p$ is the label assigned to the partner who dies first, it holds that
\begin{equation*}
\{(x_1)=p\} = \{\tau_{x_1} = \tau_p\} = \Big\{U \leq \frac{\lambda_{x_1}(\tau_p)}{\lambda_{x_1}(\tau_p) + \lambda_{x_2}(\tau_p)}\Big\},
\end{equation*}
\begin{equation*}
\{(x_2)=p\} = \{\tau_{x_2} = \tau_p\} = \Big\{U > \frac{\lambda_{x_1}(\tau_p)}{\lambda_{x_1}(\tau_p) + \lambda_{x_2}(\tau_p)}\Big\}.
\end{equation*}
The above equations indicate that indicates that spouse $(x_1)$ (respectively $(x_2)$) is identified as the deceased (respectively surviving) partner if $U$ falls within a specified range based on the mortality intensities.

The remaining lifetime of spouse $q$ is denoted $\tau_q$ and is defined analogously manner, as shown later in \eqref{tau q}. Since the loss of a spouse impacts the mortality of the surviving spouse, $\tilde{\lambda}_q(t)$ is introduced for $t \geq \tau_p$ as the mortality intensity of the bereaved partner after the first death. This force of mortality, $\tilde{\lambda}_q(t)$, is an adjustment of the original process $\lambda_q(t)$, and the relationship between the two rates reflects the influence of  bereavement on the health and remaining lifetime of the surviving spouse. 
This bereavement effect is therefore defined by the modification of the mortality intensity for the bereaved partner as follows
\begin{equation*}
r_q(t):=\tilde{\lambda}_q(t) - \lambda_q(t),
\end{equation*}
Note that the modified process $\tilde{\lambda}_q(t)$ is inclusive of a~structural break at $\tau_p$ representing the instant effect on the bereaved spouse's mortality. The instantaneous rise in mortality at the time of first death $\tau_p$ is modelled as a linear combination of the mortality intensities of both spouses just before the death, denoted by $\tau_p^-$. This rise is given by:
\begin{equation}
r_q(\tau_p) := \tilde{\lambda}_q(\tau_p) - \lambda_q(\tau_p^-) = \delta_q + \epsilon_q\lambda_q(\tau_p^-) + \zeta_q\lambda_p(\tau_p^-),
\label{BHSGhana:Equation3.9}
\end{equation}
where coefficients $\delta^q, \epsilon^q$ and $\zeta^q$ are assumed to be non-negative. Intuitively, this mortality jump represents the immediate effect of bereavement, reflecting the short-term dependency between the spouses, often referred to as broken-heart syndrome. The modification of $\lambda_q$ accounts for the longer-term adjustments in the surviving spouse's mortality intensity due to changes in their life circumstances following the death. By including both spouses' mortality intensities in the estimation of the bereavement jump, as seen in equation \eqref{BHSGhana:Equation3.9}, the model incorporates unobserved shared frailties. \\
This approach explains the causal relationship between the remaining lifetimes of the couple and the contagion effect caused by the loss of a spouse. As in \cite{jevtic2017joint}, we define $r_q(t)$ as a deterministic function, with its dynamics given by
\begin{equation*}
\mathrm{d}r_q(t) = - \kappa_qr_q(t)\mathrm{d}t, \\ \text{ for } t> \tau_p, \ \ \text{with} \ \ r_q({\tau_p}) = \epsilon_q\lambda_q({\tau_p^-}),
\end{equation*}
The coefficients $\delta_q$ and $\zeta_q$ in \eqref{BHSGhana:Equation3.9} are fixed at zero for computational simplicity; however, selection of positive values for $\delta_q$ and $\zeta_q$ allows to incorporate the mortality intensities of both lives, prior to the first death, in the initial value of the bereavement effect at time $\tau_p$. Therefore, we deduce that for $t\geq \tau_p$
\begin{equation}
 r_q(t)=\epsilon_q\lambda_q(\tau_p^-)e^{-\kappa_q(t-\tau_p)}.\label{rq}
\end{equation}
Applying an approach similar to the first time of death $\tau_p$, the~second time of death $\tau_q$ is described as:
\begin{eqnarray}
\tau_q := \inf\Big\{t > \tau_p \given[\Big] \int_{\tau_p}^t \tilde{\lambda}_q(u) \mathrm{d} u \geq E_2\Big\}.
\label{tau q}
\end{eqnarray}

 The joint probability density function of $(\tau_{x_1},\tau_{x_2})$ is given in Theorem \ref{joint density} below (representing the first part of \cite[Theorem 2.1.]{jevtic2017joint}), in which expectations are taken under $P$.

\begin{theorem}[\cite{jevtic2017joint}, Theorem 2.1.]
\label{joint density}
The joint probability density function $\rho(t_{1}, t_{2})$ of the remaining lifetimes of two coupled lives $(\tau_{x_1}, \tau_{x_2})$ is {given by the reduced form~expression}
{\begin{subequations}
\begin{empheq}[left={\rho(t_{1}, t_{2})=\empheqlbrace}]{align}
&{E}_P\Big[\lambda_{x_1}(t_{1}) e^{{-\int_{0}^{t_{1}}{\lambda_{x_1}(u) + \lambda_{x_2}(u)}}\mathrm{d}u} {E}_P\Big[ \tilde{\lambda}_{x_2}({t_{2}}) e^{-\int_{t_{1}}^{t_{2}}{\tilde{\lambda}_{x_2}(u)\mathrm{d}u}} \ | \ \mathcal{K}_{T^*} \Big] \Big], \hspace{0.7cm} \textit{$t_{1} < t_{2}$}\notag \\
&{E}_P\Big[ \lambda_{x_2}({t_{2}}) e^{{-\int_{0}^{t_{2}}{\lambda_{x_1}(u) + \lambda_{x_2}(u)}}\mathrm{d}u} {E}_P\Big[ \tilde{\lambda}_{x_1}({t_{1}}) e^{-\int_{t_{2}}^{t_{1}}\tilde{\lambda}_{x_1}(u)\mathrm{d}u} \ | \ \mathcal{K}_{T^*} \Big]\Big], \hspace{0.7cm} \textit{$t_{1} > t_{2}$}. \notag
\end{empheq}
\end{subequations}}

\end{theorem}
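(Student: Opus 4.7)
The plan is to condition on the $\sigma$-algebra $\mathcal{K}_{T^*}$ generated by the Brownian motions $W^1, W^2$ that drive the Ornstein--Uhlenbeck intensities $\lambda_{x_1}, \lambda_{x_2}$. Under this conditioning the paths of $\lambda_{x_1}, \lambda_{x_2}$ are deterministic and, once the first time of death $\tau_p$ and the identity of the first decedent are fixed, the modified intensity $\tilde\lambda_q$ defined in \eqref{rq} is deterministic as well. Because $\{E_1, E_2, U\}$ is a mutually independent collection that is also independent of $\mathcal{K}_{T^*}$, all remaining randomness in $(\tau_{x_1}, \tau_{x_2})$ given $\mathcal{K}_{T^*}$ is carried by these three auxiliary variables, which reduces the derivation to explicit exponential and uniform calculations.

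I will carry out three substeps. First, from the defining identity \eqref{tau p} and the independence of $E_1 \sim \mathrm{Exp}(1)$ from $\mathcal{K}_{T^*}$, it follows that $P(\tau_p > s \mid \mathcal{K}_{T^*}) = \exp(-\int_0^s(\lambda_{x_1}(u) + \lambda_{x_2}(u))\,du)$, so the conditional density of $\tau_p$ at $s$ is $(\lambda_{x_1}(s) + \lambda_{x_2}(s))\exp(-\int_0^s(\lambda_{x_1}+\lambda_{x_2})\,du)$. Second, by independence of $U$ from $(\mathcal{K}_{T^*}, E_1)$, the conditional probability of $\{(x_1) = p\}$ given $\tau_p = s$ and $\mathcal{K}_{T^*}$ is exactly $\lambda_{x_1}(s)/(\lambda_{x_1}(s) + \lambda_{x_2}(s))$. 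Third, from \eqref{tau q} and the independence of $E_2$ from the remaining data, after the first death one obtains $P(\tau_q > t \mid \tau_p = s, (x_1) = p, \mathcal{K}_{T^*}) = \exp(-\int_s^t \tilde\lambda_{x_2}(u)\,du)$, which gives the conditional density $\tilde\lambda_{x_2}(t)\exp(-\int_s^t \tilde\lambda_{x_2}(u)\,du)$ at $t > s$.

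In the case $t_1 < t_2$, I combine these three factors on the event $\{(x_1) = p\} \cap \{\tau_p = t_1\} \cap \{\tau_q = t_2\}$; the denominator $\lambda_{x_1}(t_1) + \lambda_{x_2}(t_1)$ from the conditional density of $\tau_p$ cancels against the corresponding denominator in the identification probability, producing the conditional joint density $\lambda_{x_1}(t_1) e^{-\int_0^{t_1}(\lambda_{x_1}+\lambda_{x_2})\,du} \tilde\lambda_{x_2}(t_2) e^{-\int_{t_1}^{t_2}\tilde\lambda_{x_2}(u)\,du}$. Taking $P$-expectation and invoking the tower property recovers the first branch of the theorem, the inner $\mathbb{E}_P[\,\cdot\mid\mathcal{K}_{T^*}]$ being essentially tautological (both inner factors are $\mathcal{K}_{T^*}$-measurable once $t_1$ is fixed) but written this way to emphasise the two-stage structure of the mortality mechanism. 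The case $t_1 > t_2$ then follows by the symmetric argument with the roles of $x_1$ and $x_2$ interchanged. The step requiring most care is promoting the almost-sure identities for conditional tails into a genuine joint-density statement on $(\tau_{x_1}, \tau_{x_2})$, which I will handle via a standard Fubini argument and pathwise differentiation in $(t_1, t_2)$.
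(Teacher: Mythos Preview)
Your argument is correct and follows the standard route for this kind of doubly-stochastic construction: condition on the intensity paths via $\mathcal{K}_{T^*}$, use the independent auxiliaries $(E_1,U,E_2)$ to compute the conditional density of $(\tau_p,\text{identity of }p,\tau_q)$, observe the cancellation of $\lambda_{x_1}(t_1)+\lambda_{x_2}(t_1)$, and then integrate out. Your remark that the inner conditional expectation $\mathbb{E}_P[\,\cdot\mid\mathcal{K}_{T^*}]$ is tautological in the present setting is also accurate: here the bereavement adjustment $r_q$ in \eqref{rq} is deterministic once $\tau_p=t_1$ and the Brownian path are fixed, so $\tilde\lambda_{x_2}$ is $\mathcal{K}_{T^*}$-measurable; the nested form in the statement reflects the more general framework of \cite{jevtic2017joint}, where $\tilde\lambda_q$ may carry additional post-bereavement randomness.

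As for comparison with the paper: the paper does not give its own proof of this result. Theorem~\ref{joint density} is quoted verbatim from \cite[Theorem~2.1]{jevtic2017joint} and used as an input, so there is no in-paper argument to weigh your approach against. Your derivation is precisely the one underlying that cited theorem.
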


For $i\in\{1,2\}$, let $\phi^i:=(\lambda_{x_i}(0), \mu_i,\sigma_i,\epsilon_{x_i},\kappa_{x_i})$, where $\lambda_{x_i}(0)$, $\mu_i$, and $\sigma_i$ are given in \eqref{mortality intensities}, and $\epsilon_{x_i}$ and $\kappa_{x_i}$ are given in \eqref{rq}. The joint density function $\rho(t_1,t_2)$ has been computed in \cite[Appendix]{jevtic2017joint}. Indeed, for $t_1<t_2$,
\begin{align}
&\rho(t_{1},t_{2})\notag\\
=&\Big(-\frac{{\sigma_1}^2}{2{\mu_1}^2}(e^{\mu_1t_1}-1)^2+e^{\mu_1t_1}\lambda_{x_1}(0)\Big)\times\exp\Big(-\frac{{\sigma_1}^2}{2{\mu_1}^2}\Big(-t_1+\frac{2}{\mu_1}(e^{\mu_1t_1}-1)-\frac{1}{2\mu_1}(e^{2\mu_1t_1}-1)\Big) \nonumber \\
&-\frac{1}{\mu_1}(e^{\mu_1t_1}-1)\lambda_{x_1}(0)\Big) \times \Big(-\frac{\sigma_2^2}{2\mu_2^2}(e^{\mu_2(t_2-t_1)}-1)^2+\Big(e^{\mu_2(t_2-t_1)}+\epsilon_{x_2}e^{-\kappa_{x_2}(t_2-t_1)}\Big) \nonumber \\
&\times \Big(\frac{\sigma_2^2}{2\mu_2^2}\Big(2(e^{\mu_2t_1}-1)-\Big(e^{\mu_2(t_2-t_1)}+\frac{\mu_2\epsilon_{x_2}}{\kappa_{x_2}}(1-e^{-\kappa_{x_2}(t_2-t_1)})\Big)(e^{2\mu_2t_1}-1)\Big)+e^{\mu_2t_1}\lambda_{x_2}(0)\Big) \nonumber \\
&\times\exp\Big(-\frac{{\sigma_2}^2}{2{\mu_2}^2}\Big(-(t_2-t_1)+\frac{2}{\mu_2}(e^{\mu_2(t_2-t_1)}-1)-\frac{1}{2\mu_2}(e^{2\mu_2(t_2-t_1)}-1)\Big) \nonumber \\
&-\frac{\sigma_2^2}{2\mu_2^2}\Big(-t_1+\frac{2}{\mu_2}\big(e^{\mu_2(t_2-t_1)}+\frac{\mu_2\epsilon_{x_2}}{\kappa_{x_2}}(1-e^{-\kappa_{x_2}(t_2-t_1)})\big)(e^{\mu_2t_1}-1) \nonumber \\ 
&-\frac{1}{2\mu_2}\big(e^{\mu_2(t_2-t_1)}+\frac{\mu_2\epsilon_{x_2}}{\kappa_{x_2}}(1-e^{-\kappa_{x_2}(t_2-t_1)})\big)^2(e^{2\mu_2t_1}-1)\Big) \nonumber \\
&-\frac{1}{\mu_2}\big(\big(e^{\mu_2(t_2-t_1)}+\frac{\mu_2\epsilon_{x_2}}{\kappa_{x_2}}(1-e^{-\kappa_{x_2}(t_2-t_1)})\big)e^{\mu_2t_1} -1\big)\lambda_{x_2}(0)\Big).\label{rho(t1,t2)}
\end{align}
 It remains to derive $\rho(t_1,t_2)$ for $t_1 > t_2$. This can be deduced from the previous formula as demonstrated below. Following \cite{jevtic2017joint}, we introduce the following notation, for $(t_1,t_2)\in [0,T^*]^2$, 
 \[
 \rho(t_1,t_2\ |\ \phi^1,\phi^2):=\rho(t_1,t_2).
 \]
 Thus, applying Theorem \ref{joint density}, for $t_1>t_2$, we have
\begin{align*}
 \rho(t_1,t_2 \ | \ \phi^1,\phi^2)=&{E}_P\Big[ \lambda_{x_2}({t_{2}}) e^{{-\int_{0}^{t_{2}}{\lambda_{x_1}(u) + \lambda_{x_2}(u)}}\mathrm{d}u} {E}_P\Big[ \tilde{\lambda}_{x_1}({t_{1}}) e^{-\int_{t_{2}}^{t_{1}}\tilde{\lambda}_{x_1}(u)\mathrm{d}u} \ | \ \mathcal{K}_{T^*} \Big]\Big],\\
 \rho(t_2,t_1 \ | \ \phi^1,\phi^2)=&{E}_P\Big[ \lambda_{x_1}({t_{2}}) e^{{-\int_{0}^{t_{2}}{\lambda_{x_1}(u) + \lambda_{x_2}(u)}}\mathrm{d}u} {E}_P\Big[ \tilde{\lambda}_{x_2}({t_{1}}) e^{-\int_{t_{2}}^{t_{1}}\tilde{\lambda}_{x_2}(u)\mathrm{d}u} \ | \ \mathcal{K}_{T^*} \Big]\Big],\\
 \rho(t_2,t_1 \ | \ \phi^2,\phi^1)=&{E}_P\Big[ \lambda_{x_2}({t_{2}}) e^{{-\int_{0}^{t_{2}}{\lambda_{x_1}(u) + \lambda_{x_2}(u)}}\mathrm{d}u} {E}_P\Big[ \tilde{\lambda}_{x_1}({t_{1}}) e^{-\int_{t_{2}}^{t_{1}}\tilde{\lambda}_{x_1}(u)\mathrm{d}u} \ | \ \mathcal{K}_{T^*} \Big]\Big].
 \end{align*}
We deduce that  $\rho(t_1,t_2 \ | \ \phi^1,\phi^2)=\rho(t_2,t_1 \ | \ \phi^2,\phi^1)$. Note that $\rho(t_2,t_1 \ | \ \phi^2,\phi^1)$  corresponds to the quantity given in \eqref{rho(t1,t2)}, in which we replace $t_1$ by $t_2$ and $t_2$  by $t_1$, and $\phi^1$  by $\phi^2$ and $\phi^2$ by $\phi^1$. Finally, we obtain that for $t_1>t_2$, $\rho(t_1,t_2)$ is the quantity given in \eqref{rho(t1,t2)} with $t_1$ instead of $t_2$ and $t_2$  instead of $t_1$, and we replace $\phi^1$ by $\phi^2$ and $\phi^2$ by $\phi^1$.
 
The following useful result that will be needed later. For $t\geq 0$,
\begin{eqnarray}
P(\{\tau_{x_1}>t\}\cap\{\tau_{x_2}>t\})= P(\tau_p>t)=\mathbb{E}_P\Big[e^{-\int_0^t\lambda_{x_1}(u)+\lambda_{x_2}(u)\mathrm{d}u}\Big],
\label{Eq0:jointprobability}
\end{eqnarray}
where the last equality follows from \eqref{tau p}. Applying the stochastic Fubini theorem, 
 the integral $\int_0^t\lambda_{x_1}(u)+\lambda_{x_2}(u)\mathrm{d}u$ is a Gaussian random variable with mean
\begin{eqnarray}
m(t)=\frac{\lambda_{x_1}(0)}{\mu_1}(e^{\mu_1t}-1)+\frac{\lambda_{x_2}(0)}{\mu_2}(e^{\mu_2t}-1)\label{m}
\end{eqnarray}
and variance
\begin{align}
\sigma^2(t)=&\Big(\frac{\sigma_1}{\mu_1}\Big)^2\Big(t+\frac{2}{\mu_1}(1-e^{\mu_1t})-\frac{1}{2\mu_1}(1-e^{2\mu_1t})\Big)\notag\\
&+\Big(\frac{\sigma_2}{\mu_2}\Big)^2\Big(t+\frac{2}{\mu_2}(1-e^{\mu_2t})-\frac{1}{2\mu _2}(1-e^{2\mu_2t})\Big). \label{variance}
\end{align} 
Using these fact in \eqref{Eq0:jointprobability} yield to
\begin{equation}
P(\{\tau_{x_1}>t\}\cap\{\tau_{x_2}>t\})=e^{\frac{\sigma^2(t)}{2}-m(t)},\label{Eq1:jointprobability}
\end{equation} 
with $m(t)$ and $\sigma^2(t)$  given in \eqref{m} and \eqref{variance} respectively.

\subsection{Surrender model}
As specified above,  $\tau^s$ denotes the random time at which the policyholder(s) may decide to surrender their policy. Surrender is permitted at discrete time points $t_i$, for $i=1,\dots,K-1$, with the convention that $\{\tau^s = \infty\}$ corresponds to no surrender. We adopt an intensity-based approach to model the surrender behaviour of the policyholder(s). 

In line with common market and academic practice (see, for example, \cite{Kolkiewicz}, \cite{Le Courtois}, \cite{Escobar}, and also \cite{VA1} and references therein for more details), we define the surrender intensity using two components: a deterministic baseline represented by a constant $C>0$, and a stochastic process $D(t)$. The baseline $C$ captures surrender behaviors driven by non-economic factors and personal circumstances, while the process $D(t)$ accounts for additional fluctuations to the baseline due to changes in market conditions.

Consistent with the explanation provided above, we construct the process $D(t)$ based on the spread between the return guaranteed to the policyholder(s) at maturity, assuming no early surrender, and the surrender benefit plus any return offered by reinvesting in the fixed income market.
Therefore, let $Y_t= \log S_t$, where $S_t$ is defined in \eqref{Stock price}. In addition, define $p(t):=- \log \widetilde P(t)$, where $\widetilde P(t)$ is the increasing deterministic function given in \eqref{formula SB}. Then
\begin{equation}
D(t)= Y_{t} -p(t) + \int_{t}^Tf(t,s)ds - \delta T, \quad 0 \leq  t \leq T. \label{Dt}
\end{equation}
Consequently, the surrender intensity is defined as 
\begin{equation}
\lambda^s_t:= \beta |D(t_{i})| + C, \quad t_{i} \le t < t_{i+1}. \label{lambda}
\end{equation}
We note from its expression that $\lambda^s$ is piecewise constant on the interval $[t_{i},t_{i+1})$ for $i=1, \dots, K-1$, and  $\lambda^s_t=0$ for $t\in [0, t_1) \cup [t_K, T]$. 
The constant $\beta \in [0,1]$ captures the impact of the financial market situation on the policyholder(s) decision to terminate the contract. While this approach is conceptually similar to that considered in \cite{VA1}, equation \eqref{lambda} employs the modulus function rather than the parabolic function used in \cite{VA1}. 

Let $\mathcal{G}$ be the $\sigma$-algebra defined as follows
\begin{equation}
\mathcal{G}:= \sigma(\tau^s) \vee \mathcal{F}. \label{G}
\end{equation}

 \begin{assumption}\label{assumption G}
Under the real-world probability $P$, 
the couple $(\tau_{x_1},\tau_{x_2})$, introduced in the previous section,  is independent of $\mathcal{G}$, meaning that mortality  is independent of the financial market and the surrender time.
\end{assumption}

The probability of no surrender is evaluated under the probability $Q\odot P$, introduced in Proposition \ref{QP}, and is given by 
\begin{equation} \label{def:taus}
Q\odot P(\tau^s \geq t_i |\mathcal{F}_{t_i}^{L^1,L^2}) = \exp\Big( -\int_{0}^{t_{i}} \lambda^s_u \mathrm{d}u \Big) ,
\end{equation} 
for all $1 \le  i \le K-1$, and 
\begin{eqnarray}
Q\odot P(\tau^s \geq t |\mathcal{F}_{t}^{L^1,L^2})&=& Q\odot P(\tau^s = \infty |\mathcal{F}_{t}^{L^1,L^2})\notag\\ &=& \exp\Big( -\int_{0}^{t_K} \lambda^s_u \mathrm{d} u \Big),\label{def2:taus}
\end{eqnarray} 
for $t_{K-1} < t$.
Note that the last integral equals $\exp(-\int_{0}^{T} \lambda^s_u du)$ because  $\lambda^s_t=0$ for $t\in  [t_K, T]$.

\section{Pricing the insurance policy}\label{secpricing}
The price  ${\rm P}$ of the insurance policy at time $t=0$ is equal to the sum of the prices of its benefits, that is
  \begin{equation}\label{price contract}
  {\rm P}= {\rm P}^{\rm GMAB} + {\rm P}^{\rm SB} +  {\rm P}^{\rm DB},
  \end{equation}
  where, by using the $Q\odot P$ probability measure 
  \begin{eqnarray*}
  {\rm P}^{\rm GMAB} &=& \mathbb{E}_{Q\odot P}\Big[e^{-\int_0^T r(u)\mathrm{d}u} \,  {\rm GMAB}(T)\Big],\\
  {\rm P}^{\rm SB} &=&\sum_{i=1}^{K-1}  \mathbb{E}_{Q\odot P}\Big[   e^{-\int_0^{t_{i}} r(u)\mathrm{d}u} {\rm SB}(t_{i})\Big],\\
  {\rm P}^{\rm DB} &=& \sum_{i=1}^N  \mathbb{E}_{Q\odot P}\Big[e^{-\int_0^{\bar t_i} r(u)\mathrm{d}u} {\rm DB}(\bar t_i)\Big].
  \end{eqnarray*}

  In fact, as will be seen in the sequel, the $Q\odot P$ probability allows to evaluate the mortality risk under the real-world probability measure $P$, while providing standard risk-neutral valuation under $Q$ for other risks and payoffs dependent on the financial market.

    In the following subsections, we provide explicit formulae for these quantities, as shown in Theorems \ref{Guaranteed benefit}, \ref{surrender}, and \ref{PDB} below. The proofs of these theorems are presented in Section \ref{secproofs}.

\subsection{Price of the guaranteed minimum accumulation benefit}
In this section, we derive the price of the guaranteed minimum accumulation benefit (GMAB). We know from \eqref{formula GMAB} that the payoff of the GMAB is described as:
\begin{equation*}
\text{GMAB}(T)=\max(IS_T,Ie^{\delta T})\mathds{1}_{\{\tau^s>T\}}\mathds{1}_{\{\tau_{x_1}>T\}\cup\{\tau_{x_2}>T\}}.
\end{equation*}
Then, its price is given by
\begin{align}
\text{P}^{\text{GMAB}}=& \mathbb{E}_{Q\odot P}\Big[e^{-\int_0^Tr(u)\mathrm{d}u}\text{GMAB}(T)\Big] \notag\\
=& \mathbb{E}_{Q\odot P}\Big[\mathbb{E}_{Q\odot P}\Big[e^{-\int_0^T r(u)\mathrm{d}u}\text{GMAB}(T)|\mathcal{F}_T^{L^1, L^2}\Big]\Big]\notag\\
=&\mathbb{E}_{Q\odot P}\Big[e^{-\int_0^T r(u)\mathrm{d}u}\max(IS_T,Ie^{\delta T})\mathbb{E}_{Q\odot P}\Big[\mathds{1}_{\{\tau^s>T\}}\mathds{1}_{\{\tau_{x_1}>T\}\cup\{\tau_{x_2}>T\}}|\mathcal{F}_T^{L^1, L^2}\Big]\Big], \label{GMAB new1}
\end{align}
where the last equality follows since $e^{-\int_0^T r(u)du}\max(IS_T,Ie^{\delta T})$ is $\mathcal{F}_T^{L^1, L^2}$-measurable. Note that $\mathcal{F}_T^{L^1, L^2} \subset \mathcal{F} \subset \mathcal{G}$, where $\mathcal{G}$ is defined in \eqref{G}. Therefore, using \eqref{QP2}, \ref{assumption G}, and \eqref{def2:taus}, we have
\begin{eqnarray}
&& \mathbb{E}_{Q\odot P}\Big[\mathds{1}_{\{\tau^s>T\}}\mathds{1}_{\{\tau_{x_1}>T\}\cup\{\tau_{x_2}>T\}}|\mathcal{F}_T^{L^1, L^2}\Big]\notag\\
&=&\mathbb{E}_{Q\odot P}\Big[ \mathbb{E}_{Q\odot P}\Big[\mathds{1}_{\{\tau^s>T\}}\mathds{1}_{\{\tau_{x_1}>T\}\cup\{\tau_{x_2}>T\}}|\mathcal{G}\Big]|\mathcal{F}_T^{L^1, L^2}\Big]\notag\\
&=& \mathbb{E}_{Q\odot P}\Big[\mathds{1}_{\{\tau^s>T\}} \mathbb{E}_{Q\odot P}\Big[\mathds{1}_{\{\tau_{x_1}>T\}\cup\{\tau_{x_2}>T\}}|\mathcal{G}\Big]|\mathcal{F}_T^{L^1, L^2}\Big]\notag\\
&=& \mathbb{E}_{Q\odot P}\Big[\mathds{1}_{\{\tau^s>T\}} E_{ P}\Big[\mathds{1}_{\{\tau_{x_1}>T\}\cup\{\tau_{x_2}>T\}}|\mathcal{G}\Big]|\mathcal{F}_T^{L^1, L^2}\Big]\notag\\
&=& \mathbb{E}_{Q\odot P}\Big[\mathds{1}_{\{\tau^s>T\}} P(\{\tau_{x_1}>T\}\cup\{\tau_{x_2}>T\})|\mathcal{F}_T^{L^1, L^2}\Big]\notag\\
&=& P(\{\tau_{x_1}>T\}\cup\{\tau_{x_2}>T\})\mathbb{E}_{Q\odot P}\Big[\mathds{1}_{\{\tau^s>T\}} |\mathcal{F}_T^{L^1, L^2}\Big]\notag\\
&=&P(\{\tau_{x_1}>T\}\cup\{\tau_{x_2}>T\})e^{-\int_0^{t_K}\lambda^s(u)\mathrm{d}u}. \label{GMAB new2}
\end{eqnarray}
 Substituting \eqref{GMAB new2} into \eqref{GMAB new1}, and using the facts that $e^{-\int_0^T r(u)du}\max(IS_T,Ie^{\delta T}) e^{-\int_0^{t_K}\lambda^s(u)\mathrm{d}u}$ is $\mathcal{F}$-measurable and $Q\odot P$ coincides with $Q$ on $\mathcal{F}$ (see Proposition \ref{QP}), we obtain 
\begin{align*}
\text{P}^{\text{GMAB}}=&P(\{\tau_{x_1}>T\}\cup\{\tau_{x_2}>T\})\mathbb{E}_{Q\odot P}\Big[e^{-\int_0^T r(u)\mathrm{d}u}\max(IS_T,Ie^{\delta T}) e^{-\int_0^{t_K}\lambda^s(u)\mathrm{d}u}\Big]\\
=& P(\{\tau_{x_1}>T\}\cup\{\tau_{x_2}>T\})\mathbb{E}_{Q}\Big[e^{-\int_0^T r(u)\mathrm{d}u}\max(IS_T,Ie^{\delta T}) e^{-\int_0^{t_K}\lambda^s(u)\mathrm{d}u}\Big].
\end{align*}
Let
\begin{equation}
\frac{\mathrm{d}Q^T}{\mathrm{d}Q}=\frac{1}{B(0,T)B(T)},\label{density forward measure}
\end{equation}
such that $Q^T$ is the $T$-forward measure. Then,
\begin{equation*}
\text{P}^{\text{GMAB}}=P(\{\tau_{x_1}>T\}\cup\{\tau_{x_2}>T\})B(0,T)\mathbb{E}_{Q^T}\Big[\max(IS_T,Ie^{\delta T})e^{-\int_0^{t_K}\lambda^s(u)\mathrm{d}u}\Big].
\end{equation*}
Recall that
\[
\max(IS_T,Ie^{\delta T})= Ie^{\delta T}\Big( 1+ \big(\frac{IS_T}{Ie^{\delta T}} -1\big)^+\Big).
\]
From this, we deduce that
\begin{equation*}
\text{P}^{\text{GMAB}}=P(\{\tau_{x_1}>T\}\cup\{\tau_{x_2}>T\})B(0,T)Ie^{\delta T}(A_1+A_2),
\end{equation*}
where
\begin{align*}
A_1:=\mathbb{E}_{Q^T}\Big[e^{-\int_0^{t_{K}}\lambda^s(u)\mathrm{d}u}\Big]\text{ and }
A_2:=\mathbb{E}_{Q^T}\Big[e^{-\int_0^{t_{K}}\lambda^s(u)\mathrm{d}u}\left(\frac{S_T}{e^{\delta T}}-1\right)^+\Big].
\end{align*}
 Observe that 
\begin{equation*}
P_T:=P(\{\tau_{x_1}>T\}\cup\{\tau_{x_2}>T\})=P(\tau_{x_1}>T)+P(\tau_{x_2}>T)-P(\{\tau_{x_1}>T\}\cap\{\tau_{x_2}>T\}).
\end{equation*}
Then applying Theorem \ref{joint density}, we have
\begin{eqnarray*}
P(\tau_{x_1}>T)=\int_T^{T^*}\int_0^{T^*}\rho(t_{1},t_{2})\mathrm{d}t_2\mathrm{d}t_1 \text{ and }
P(\tau_{x_2}>T)=\int_T^{T^*}\int_0^{T^*} \rho(t_{1},t_{2})\mathrm{d}t_1\mathrm{d}t_2.
\end{eqnarray*}
 Using the above and \eqref{Eq1:jointprobability}, we get
 \begin{align}
 P_T:=&P(\{\tau_{x_1}>T\}\cup\{\tau_{x_2}>T\})\notag \\
 =& \int_T^{T^*}\int_0^{T^*}\rho(t_{1},t_{2})\mathrm{d}t_2\mathrm{d}t_1  + \int_T^{T^*}\int_0^{T^*} \rho(t_{1},t_{2})\mathrm{d}t_1\mathrm{d}t_2 - e^{\frac{\sigma^2(T)}{2}-m(T)}, \label{QT}
 \end{align}
 where $m(T)$ and $\sigma^2(T)$ are given in \eqref{m} and \eqref{variance}.
 
We introduce the following notation which will be used in the sequel. Let
 \begin{align} \label{def:wk}
 	w_l & := \int_0^{t_{l}}A(s,T)\mathrm{d}s +\int_0^T f(0,s)\mathrm{d}s  - \delta T  -\omega(t_{l}) -p(t_{l})
 	\intertext{ for $l=1,\dots,K-1$ and }
 	w_K & := \int_0^{T}A(s,T)\mathrm{d}s +\int_0^Tf(0,s)\mathrm{d}s  - \delta T  -\omega(T). \label{def:wk2}
 \end{align}
  Define also $\Delta t_l:=t_l -t_{l-1}$, and $R:=(0,\ldots,0,r)\in \mathbb{R}^K$, with $1<r<2$, and for all $0 \le s \le T$,  and $u \in \R^{K-1}$, $v \in \C^K$ we define the following
  \begin{eqnarray}
  D(u,T) &:=& \exp\Big(\i\sum_{l=1}^{K-1} u_{l}w_l \Big), \notag\\
  \tilde D(v,T) &:=& D(v_1,\dots,v_{K-1},T) \exp\Big(\i v_{K}w_K\Big), \notag\\
  E(s,u,T)&:=& \Sigma_1(s,T)- \i \Sigma_1(s,T) \sum_{l=1}^{K-1}   u_{l} \mathds{1}_{\{0\leq s \leq t_{l}\}}\label{E},\\
  \tilde E(s,v,T)&:=& E(s,v_1,\dots,v_{K-1},T) - \i  \Sigma_1(s,T)v_K,\notag\\
   F(s,u,T)&:=& -\Sigma_2(s,T)+\i (\sigma_2(s)+\Sigma_2(s,T))\sum_{l=1}^{K-1}  u_{l} \mathds{1}_{\{0\leq s \leq t_{l}\}}, \notag\\
   \tilde F(s,v,T) &:=& F(s,v_1,\dots,v_{K-1},T) + \i (\sigma_2(s)+\Sigma_2(s,T)) v_K , \notag
   \end{eqnarray}
   \begin{eqnarray*}
   M(u,T) &:=& D(u,T)\; e^{ \int_0^T\theta^1_s(E(s,u,T))\mathrm{d}s + \int_0^T\theta^2_s(F(s,u,T))\mathrm{d}s }
    \; \prod_{l=2}^K \frac{2\beta \Delta  t_{l}}{u_{l-1}^2 + (\beta \Delta  t_{l})^2} , \notag \\
   N(v,T) &:=& \tilde D(v-\i R,T) \; e^{ \int_0^T\theta^1_s(\tilde E(s,v-\i R,T))\mathrm{d}s + \int_0^T\theta^2_s(\tilde{F}(s,v-\i R,T))\mathrm{d}s } \notag \\
   &&  \times \frac{1}{(\i v_{K} +r-1)(\i v_{K}+r)} 
\prod_{l=2}^K \frac{2\beta \Delta  t_{l}}{v_{l-1}^2 + (\beta \Delta  t_{l})^2} , \text{ for } v\in\mathbb{R}^K. \notag
  \end{eqnarray*}  
 
 The price of the GMAB is given in the following.
 
 \begin{theorem} \label{Guaranteed benefit}
  The price ${\rm P}^{\rm GMAB}$ is given by
  \begin{align*}
  	{\rm P}^{\rm GMAB} = P_TB(0,T)Ie^{\delta T}(A_1+A_2)
  \end{align*}
  with $P_T$ as in \eqref{QT} and
  \begin{eqnarray*}
     A_1 &=&\frac{e^{-C(t_K -t_1)}}{(2\pi)^{K-1 }}e^{-\int_0^T A(s,T)\mathrm{d}s} 
     \int_{\mathbb{R}^{K-1}}M(u,T)\mathrm{d}u, \\
     A_2 &=& \frac{e^{-C(t_K -t_1)}}{(2\pi)^K}e^{-\int_0^T A(s,T)\mathrm{d}s} 
     \int_{\mathbb{R}^{K}} N(u,T) \mathrm{d}u,
   \end{eqnarray*}
   with $A(s,T)$ as in \eqref{dc}.
  \end{theorem}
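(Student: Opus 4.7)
The plan is to evaluate the two remaining expectations $A_1$ and $A_2$. The paper has already reduced $\text{P}^{\text{GMAB}}$ to $P_T\, B(0,T)\, I e^{\delta T}(A_1 + A_2)$, with $P_T$ furnished by Theorem \ref{joint density} and \eqref{Eq1:jointprobability}, so only the two expectations under $Q^T$ remain. My route is: (i) rewrite $e^{-\int_0^{t_K} \lambda^s du}$ as a product using the piecewise-constant structure of $\lambda^s$, (ii) represent the moduli $|D(t_l)|$ and the call payoff via Fourier/Laplace inversion, (iii) swap expectation and integration, (iv) undo the $Q^T$ change of measure, and (v) close everything with the cumulant formula \eqref{cumulant}.

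\textbf{Step 1: explicit form of $D(t_l)$.} Combining \eqref{Stock price}, the bond formula for $B(t,T)$, and the definition \eqref{Dt}, together with $\int_t^T f(t,s) ds = -\log B(t,T)$ and $-\log B(0,T) = \int_0^T f(0,s) ds$, I would verify
\[ D(t_l) = w_l - \int_0^{t_l} \Sigma_1(s,T) dL^1_s + \int_0^{t_l}(\sigma_2(s) + \Sigma_2(s,T)) dL^2_s, \]
with $w_l$ as in \eqref{def:wk}, and similarly $D(T) = w_K - \int_0^T \Sigma_1(s,T) dL^1_s + \int_0^T (\sigma_2(s)+\Sigma_2(s,T)) dL^2_s$, where $p(T)=0$ makes $S_T e^{-\delta T} = e^{D(T)}$. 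Combined with the piecewise-constant form of $\lambda^s$,
\[ e^{-\int_0^{t_K} \lambda^s du} = e^{-C(t_K - t_1)} \prod_{l=1}^{K-1} e^{-\beta \Delta t_{l+1} |D(t_l)|}. \]

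\textbf{Step 2: Fourier representations.} I apply
\[ e^{-a|x|} = \frac{1}{2\pi}\int_\R \frac{2a}{u^2 + a^2} e^{iux} du, \qquad (e^y - 1)^+ = \frac{1}{2\pi}\int_\R \frac{e^{(iv+r)y}}{(iv+r)(iv+r-1)} dv \quad (r>1), \]
the first with $a = \beta \Delta t_{l+1}$, $x = D(t_l)$ over $l=1,\dots,K-1$, and (for $A_2$ only) the second with $y = D(T)$. Fubini then moves the $Q^T$-expectation inside the resulting $(K-1)$- (resp.~$K$-) fold Fourier integral; integrability of the absolute value of the integrand is ensured by the boundedness assumptions \eqref{AssM1}, \eqref{AssM22}, \eqref{AssM2} together with Assumption \ref{ass:EM}, which guarantee the cumulant exponents stay finite along the required complex contours.

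\textbf{Step 3: from $Q^T$ back to $Q$, and the cumulant formula.} Using $\frac{dQ^T}{dQ} = [B(0,T) B(T)]^{-1}$ and evaluating the bond formula at $t=T$ gives
\[ e^{-\int_0^T r(s) ds} = B(0,T) \exp\Bigl(-\int_0^T A(u,T) du + \int_0^T \Sigma_1(u,T) dL^1_u - \int_0^T \Sigma_2(u,T) dL^2_u\Bigr), \]
so the $B(0,T)$ cancels and the prefactor $e^{-\int_0^T A(s,T) ds}$ appears. The remaining stochastic integrals in the exponent collect (using $\int_0^{t_l}(\cdot) = \int_0^T (\cdot)\mathds{1}_{s\leq t_l}$) into single integrals against $L^1$ and $L^2$; for $A_1$ the resulting left-continuous deterministic integrands are exactly $E(s,u,T)$ and $F(s,u,T)$, while for $A_2$, after substituting $v_K \mapsto v_K - ir$ in the complex exponent, one reads off $\tilde E(s,v-iR,T)$ and $\tilde F(s,v-iR,T)$. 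Independence of $L^1,L^2$ combined with \eqref{cumulant} yields the $\exp\{\int_0^T \theta^1_s(\cdot) ds + \int_0^T \theta^2_s(\cdot) ds\}$ factors; the deterministic exponent pieces combine to $D(u,T)$ in the $A_1$ case and $\tilde D(v-iR,T)$ in the $A_2$ case. Reindexing the product $\prod_{l=1}^{K-1} \to \prod_{l=2}^K$ then produces $M(u,T)$ and $N(v,T)$ exactly as defined.

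\textbf{Main obstacle.} Step 1 and the algebraic matching of all exponent terms to the definitions of $E,F,\tilde E,\tilde F,D,\tilde D$ is essentially bookkeeping. The genuine subtlety is the Fubini justification in step 2: the Fourier factors $2\beta\Delta t_{l+1}/(u_l^2 + (\beta\Delta t_{l+1})^2)$ only decay quadratically in $u_l$, and in the $A_2$ case the extra factor $1/[(iv_K+r)(iv_K+r-1)]$ only contributes quadratic decay in $v_K$. Absolute integrability therefore relies essentially on the real parts $\mathrm{Re}(\theta^j_s(\cdot))$ keeping the integrand bounded; this is where Assumption \ref{ass:EM} (which allows $\theta^j_s$ to be defined along the contour $\mathrm{Re} = 0$ or $\mathrm{Re} = -r$) and the bounds \eqref{AssM1}, \eqref{AssM22}, \eqref{AssM2} (which keep the contour inside the admissible strip) are used.
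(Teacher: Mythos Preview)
Your proposal is correct and follows essentially the same route as the paper: decompose $e^{-\int_0^{t_K}\lambda^s du}$ via the piecewise structure of $\lambda^s$, pass to Fourier space, change from $Q^T$ back to $Q$, and close with the cumulant formula \eqref{cumulant}. The only cosmetic difference is that where you write out the Fourier inversions for $e^{-a|x|}$ and $(e^y-1)^+$ by hand and argue Fubini directly, the paper invokes \cite[Theorem 3.2]{Eberlein2010} as a black box that packages exactly this ``Fourier transform of the payoff times extended characteristic function'' identity together with its integrability hypotheses; one small point is that the paper takes $1<r<2$ (not merely $r>1$), the upper bound being what makes the real part of $\tilde F(s,v-\i R,T)$ stay within the strip $[-M_2,M_2]$ under \eqref{AssM22}, \eqref{AssM2}.
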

 \begin{proof}
 	See Section \ref{secproofs}.
 	\end{proof}

\subsection{Price of the surrender benefit}
In this we derive the price of the surrender benefit. We start by recalling that ${\bf t}:=(t_0,t_1, \ldots, t_K)^\top$, with $0=t_0<t_1< \ldots <t_K< T$. Premature surrender is possible at any time point $t_i\in {\bf t}$ with $i = 1, \ldots,K-1.$ From \eqref{formula SB}, we know that the payoff of the surrender benefit at time $t_i\in {\bf t}$ is given by
\begin{equation*}
\text{SB}(t_i)=IS_{t_i}\widetilde P(t_i)\mathds{1}_{\{\tau^s=t_i\}} \mathds{1}_{\{\tau_{x_1}>t_i\}\cup\{\tau_{x_2}>t_i\}}.
\end{equation*}
Therefore, the  price of the surrender benefit is given as follows
\begin{equation*}
\text{P}^{\text{SB}}=\sum_{i=1}^{K-1}\mathbb{E}_{Q\odot P}\Big[e^{-\int_0^{t_i}r(u)\mathrm{d}u}\text{SB}(t_i)\Big].
\end{equation*}
Note that by similar computations to those performed in \eqref{GMAB new1} and \eqref{GMAB new2}, we derive a similar expression for the surrender benefit as follows
\begin{equation}
\mathbb{E}_{Q\odot P}\Big[e^{-\int_0^{t_i}r(u)\mathrm{d}u}\text{SB}(t_i)\Big]= \mathbb{E}_{Q\odot P}\Big[e^{-\int_0^{t_i}r(u)du}IS_{t_i}\widetilde P(t_i)\mathbb{E}_{Q\odot P}\Big[\mathds{1}_{\{\tau^s=t_i\}}|\mathcal{F}_{t_i}^{L^1,L^2}\Big]\Big]P_{t_i} \label{surrender new1}
\end{equation}
where 
\[
P_{t_i}:=P\big(\{\tau_{x_1}>t_i\}\cup\{\tau_{x_2}>t_i\} \big) = P\big(\tau_{x_1}>t_i\big) + P\big(\tau_{x_2}>t_i\big) - P\big(\{\tau_{x_1}>t_i\}\cap\{\tau_{x_2}>t_i\}\big).
\]
Using once more Theorem \ref{joint density} and \eqref{Eq1:jointprobability}, we have
 \begin{equation}
 P_{t_i}= \int_{t_i}^{T^*}\int_0^{T^*}\rho(t_{1},t_{2})\mathrm{d}t_2\mathrm{d}t_1  + \int_{t_i}^{T^*}\int_0^{T^*} \rho(t_{1},t_{2})\mathrm{d}t_1\mathrm{d}t_2 - e^{\frac{\sigma^2(t_i)}{2}-m(t_i)}, \label{Qti}
 \end{equation}
Let $B \in \mathcal{F}_{t_i}^{L^1,L^2}\subset \mathcal{F}_{t_{i+1}}^{L^1,L^2}$ and use \eqref{def:taus} to get,  
\begin{eqnarray*}
\mathbb{E}_{Q\odot P}\left[\mathds{1}_{\{\tau^s=t_i\}}\mathds{1}_{B} \right]&=& \mathbb{E}_{Q\odot P}\Big[ \big(\mathds{1}_{\{t_{i} \le \tau^s\}} -  \mathds{1}_{ \{t_{i+1}\leq \tau^s\}}\big)\mathds{1}_{B} \Big]\\
&=&\mathbb{E}_{Q\odot P}\Big[ \big( e^{-\int_0^{t_{i}} \lambda^s(u) \mathrm{d}u } -  e^{-\int_0^{t_{i+1}} \lambda^s(u) \mathrm{d}u }\big)\mathds{1}_{B} \Big].
\end{eqnarray*}
Since this hold any arbitrary $B \in \mathcal{F}_{t_i}^{L^1,L^2}$, we deduce that 
$$
\mathbb{E}_{Q\odot P}\Big[\mathds{1}_{\{\tau^s=t_i\}}|\mathcal{F}_{t_i}^{L^1,L^2}\Big]=e^{-\int_0^{t_{i}} \lambda^s(u) \mathrm{d}u } -  e^{-\int_0^{t_{i+1}} \lambda^s(u) \mathrm{d}u }.
$$
 Substituting this into \eqref{surrender new1}, we obtain
 \begin{align}
\mathbb{E}_{Q\odot P}\Big[e^{-\int_0^{t_i}r(u)\mathrm{d}u}\text{SB}(t_i)\Big]=& \mathbb{E}_{Q\odot P}\Big[e^{-\int_0^{t_i}r(u)\mathrm{d}u}IS_{t_i}\widetilde P(t_i)\Big(e^{-\int_0^{t_{i}} \lambda^s(u) du } -  e^{-\int_0^{t_{i+1}} \lambda^s(u) \mathrm{d}u }\Big)\Big]P_{t_i}\notag\\
=& \mathbb{E}_{Q}\Big[e^{-\int_0^{t_i}r(u)\mathrm{d}u}IS_{t_i}\widetilde P(t_i)\Big(e^{-\int_0^{t_{i}} \lambda^s(u) \mathrm{d}u } -  e^{-\int_0^{t_{i+1}} \lambda^s(u) \mathrm{d}u }\Big)\Big]P_{t_i}\label{eqsb2},
\end{align}
where the last equality follows from the facts that $Q\odot P$ and $Q$ coincide on $\mathcal{F}$ and the random variable inside the expectations is $\mathcal{F}$-measurable.

Consider $w_l$ as given in \eqref{def:wk} and \eqref{def:wk2}. For $0 \le s \le T$, $i\in \{2,\ldots, K-1\}$, $u \in \R^{i-1}$, and $v \in \R^{i}$, define
 \begin{align}
    D^{i}(u,T) :=& \exp\Big( \i\sum_{l=1}^{i-1} u_{l}w_l - \omega(t_{i}) \Big), \notag\\
    \tilde D^{i}(v,T) :=& D^{i}(v_1,\ldots,v_{i-1},T)  \exp\Big( \i\  v_{i}w_i  \Big),\notag\\
  	E^{i}(s,u,T):=&  - \i\Sigma_1(s,T)\sum_{l=1}^{i-1} u_{l} \mathds{1}_{\{0\leq s \leq t_{l}\}} 
	  ,\notag \\
	\tilde E^{i}(s,v,T) :=& E^{i}(s,v_1,\ldots,v_{i-1},T) - \i   \Sigma_1(s,T)v_{i}, \notag\\\label{Ei}
  	F^{i}(s,u,T):=& \i(\sigma_2(s)+\Sigma_2(s,T))\sum_{l=1}^{i-1} u_{l} \mathds{1}_{\{0\leq s \leq t_{l}\}}  + \sigma_2(s), \\
  	\tilde F^{i}(s,v,T) :=& F^{i}(s,v_1,\ldots,v_{i-1},T) + \i (\sigma_2(s)+\Sigma_2(s,T))v_{i},  \notag \\
  	M^{i}(u,T) :=& D^{i}(u,T) \;	\exp\Big(  \int_0^{t_{i}} \Big( \theta^1_s( E^{i}(s,u,T) )+ \theta^2_s ( F^{i}(s,u,T)) \Big)  \mathrm{d}s  \Big) \notag \\
  	&  \times \prod_{l=2}^{i} \frac{2\beta \Delta  t_{l}}{u_{l-1}^2 + (\beta \Delta  t_{l})^2}, \notag \\
    N^{i}(v,T) :=& \tilde D^{i}(v,T) \;	\exp\Big(  \int_0^{t_{i}} \Big( \theta^1_s( \tilde E^{i}(s,v,T) )+ \theta^2_s ( \tilde F^{i}(s,v,T)) \Big) \mathrm{d}s  \Big) \notag \\
  	&  \times  \prod_{l=2}^{i+1} \frac{2\beta \Delta  t_{l}}{v_{l-1}^2 + (\beta \Delta  t_{l})^2},\notag
  	\end{align}
and for $v \in \R$, define
\begin{eqnarray*}
	\tilde E^{1}(s,v,T) &:=&  - \i   \Sigma_1(s,T)v, \\
	\tilde F^{1}(s,v,T) &:=& \sigma_2(s) + \i (\sigma_2(s)+\Sigma_2(s,T))v,\\
	 N^{1}(v,T) &:=&  \;\exp\big( \i v w_1 - \omega(t_{1})\big)	\exp\Big(  \int_0^{t_{1}} \Big( \theta^1_s( \tilde E^{1}(s,v,T) )+ \theta^2_s ( \tilde F^{1}(s,v,T)) \Big)  \mathrm{d}s  \Big)  \\
  	&&  \times   \frac{2\beta \Delta  t_{2}}{v^2 + (\beta \Delta  t_{2})^2}.  
	\end{eqnarray*}
The price of the SB is given in the following.
\begin{theorem}\label{surrender} 
   The price ${\rm P}^{\rm SB}$ of the surrender benefit is given by  
   \begin{eqnarray*} 
   {\rm P}^{\rm SB} &=& I \sum_{i=1}^{K-1} \widetilde P(t_{i}) (B_{i}^{1} - B_{i}^{2})P_{t_i}, 
   \end{eqnarray*}
   where $P_{t_i}$ is as in \eqref{Qti}, $B_1^{1}=1$ and for $i\in \{2, \ldots, K-1\}$,
   \begin{eqnarray*}
      B_{i}^{1}&=& \frac{ e^{-C(t_{i} -t_1)} }{ (2\pi)^{i-1}} 
      \int_{\mathbb{R}^{i-1}} M^{i}(u,T)  \mathrm{d}u, 
 \end{eqnarray*}
  and, for $i\in\{1, \ldots, K-1\},$ 
 \begin{eqnarray*}
      B_{i}^{2}&=&  \frac{ e^{-C(t_{i+1} -t_1)} }{ (2\pi)^{i}} 
      \int_{\mathbb{R}^{i}} N^{i}(u,T)  \mathrm{d}u.
 \end{eqnarray*}
\end{theorem}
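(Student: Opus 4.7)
Building on identity \eqref{eqsb2}, the problem reduces to computing, for $i\in\{1,\dots,K-1\}$, the two pure $Q$--expectations
\[
B_i^1:=\mathbb{E}_Q\Big[e^{-\int_0^{t_i}r(u)\mathrm{d}u}S_{t_i}e^{-\int_0^{t_i}\lambda^s(u)\mathrm{d}u}\Big]\quad\text{and}\quad B_i^2:=\mathbb{E}_Q\Big[e^{-\int_0^{t_i}r(u)\mathrm{d}u}S_{t_i}e^{-\int_0^{t_{i+1}}\lambda^s(u)\mathrm{d}u}\Big].
\]
Using \eqref{Stock price}, the discounted stock becomes $\exp\big(\int_0^{t_i}\sigma_2(s)\mathrm{d}L^2_s-\omega(t_i)\big)$, and the piecewise--constant form of $\lambda^s$ with $\lambda^s\equiv 0$ on $[0,t_1)$ lets me expand $\int_0^{t_i}\lambda^s(u)\mathrm{d}u=C(t_i-t_1)+\beta\sum_{l=1}^{i-1}|D(t_l)|\Delta t_{l+1}$, and analogously for the integral up to $t_{i+1}$. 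The case $i=1$ then collapses immediately, since the exponent in $B_1^1$ vanishes and the $Q$--martingale property of $(B_t^{-1}S_t)$ gives $B_1^1=1$.

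The core step is to express $D(t_l)$ in terms of the driving L\'evy processes. Substituting $Y_{t_l}=\int_0^{t_l}r(s)\mathrm{d}s+\int_0^{t_l}\sigma_2(s)\mathrm{d}L^2_s-\omega(t_l)$ and the closed form for $B(t_l,T)$ obtained from Lemma \ref{lemm1a} into \eqref{Dt}, the short--rate integrals cancel and yield
\[
D(t_l)=w_l-\int_0^{t_l}\Sigma_1(u,T)\mathrm{d}L^1_u+\int_0^{t_l}(\sigma_2(u)+\Sigma_2(u,T))\mathrm{d}L^2_u,
\]
with $w_l$ as in \eqref{def:wk}. I then apply the Cauchy characteristic function identity $e^{-a|x|}=(2\pi)^{-1}\int_{\mathbb{R}}\frac{2a}{a^2+y^2}e^{\i xy}\mathrm{d}y$ with $a=\beta\Delta t_{l+1}$ and $x=D(t_l)$ to every factor $e^{-\beta|D(t_l)|\Delta t_{l+1}}$, interchange integration and expectation by Fubini, and collect the $L^1$-- and $L^2$--integrands into the functions $E^i,F^i$ (for $B_i^1$) and $\tilde E^i,\tilde F^i$ (for $B_i^2$) defined in the statement. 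Their real parts are $0$ and $\sigma_2(s)$ respectively, so by \eqref{AssM1}, \eqref{AssM22}, \eqref{AssM2} they lie in $[-(1+\epsilon_j)M_j,(1+\epsilon_j)M_j]$; independence of $L^1$ and $L^2$ under $Q$ together with the cumulant identity \eqref{cumulant} then evaluates the inner expectations as $\exp\big(\int_0^{t_i}\theta_s^1(\cdot)+\theta_s^2(\cdot)\mathrm{d}s\big)$. The deterministic prefactor $\exp(\i\sum_l u_l w_l-\omega(t_i))=D^i(u,T)$ (resp.\ $\tilde D^i(v,T)$) together with the Cauchy kernels $2\beta\Delta t_l/(u_{l-1}^2+(\beta\Delta t_l)^2)$ reconstructs $M^i(u,T)$ and $N^i(v,T)$, producing the claimed expressions.

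The only non-routine step is the Fubini interchange, which I expect to be the main obstacle: it requires the Cauchy kernels times the modulus of the inner $Q$--expectation to be jointly integrable on $\mathbb{R}^{i-1}$ (resp.\ $\mathbb{R}^i$). This is immediate once I note that $|\exp(\int g^1\mathrm{d}L^1+\int g^2\mathrm{d}L^2)|=\exp(\int\mathrm{Re}(g^1)\mathrm{d}L^1+\int\mathrm{Re}(g^2)\mathrm{d}L^2)$, whose $Q$--expectation equals $\exp\big(\int\theta_s^1(0)+\theta_s^2(\sigma_2(s))\mathrm{d}s\big)$ and is therefore a finite constant independent of the Fourier variables. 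Since each Cauchy kernel is an $L^1(\mathbb{R})$ probability density, this uniform bound is integrable against their product, so Fubini applies and the remaining calculations are bookkeeping.
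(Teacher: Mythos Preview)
Your argument is correct and arrives at the same formulas, but it is organised differently from the paper's proof. The paper introduces the spot measure $Q^{S,i}$ with density $e^{-\int_0^{t_i}r(u)\mathrm{d}u}S_{t_i}$ against $Q$, so that $B_i^1=\mathbb{E}_{Q^{S,i}}\big[e^{-\int_0^{t_i}\lambda^s(u)\mathrm{d}u}\big]$ and $B_i^2=\mathbb{E}_{Q^{S,i}}\big[e^{-\int_0^{t_{i+1}}\lambda^s(u)\mathrm{d}u}\big]$; it then invokes \cite[Theorem~3.2]{Eberlein2010} as a black box for the Fourier valuation step and only reverts to $Q$ when evaluating the joint moment generating function. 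You bypass the change of measure entirely, carrying the factor $\exp\big(\int_0^{t_i}\sigma_2(s)\mathrm{d}L^2_s-\omega(t_i)\big)$ along directly, and replace the cited Fourier theorem by the elementary Cauchy identity $e^{-a|x|}=(2\pi)^{-1}\int_{\mathbb{R}}\frac{2a}{a^2+y^2}e^{\i xy}\mathrm{d}y$ applied factor by factor, together with an explicit Fubini justification. The two routes are computationally equivalent---the paper's measure change is undone as soon as the MGF is computed---but yours is more self-contained and makes the integrability check transparent, while the paper's is shorter by delegating that check to \cite{Eberlein2010}.
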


\subsection{Price of the death benefit}

     As for the death benefit, we still have that $\bar{\bf t}:=(\bar t_0,\bar{t}_1, \ldots, \bar{t}_N)^\top$ with $0=\bar t_0 < \bar{t}_1< \ldots < \bar{t}_N=T$. The mortality of the two members of the couple is monitored by the insurer at the time points ${\bar t_i}$. 
      We know from \eqref{formula DB} that the death benefit at time $\bar t_i \in \bar{\bf t}$ is given by   
      \begin{align}
\text{DB}(\bar{t}_i)=&\max(IS_{\bar{t}_i},Ie^{\delta \bar{t}_i})\mathds{1}_{\{\tau^s\geq \bar{t}_i\}}[\mathds{1}_{\{\bar{t}_{i-1}\leq\tau_{x_1}<\bar{t}_i\}}+\mathds{1}_{\{\bar{t}_{i-1}\leq\tau_{x_2}<\bar{t}_i\}}\notag\\ &+(\alpha-2)\mathds{1}_{\{\bar{t}_{i-1}\leq\tau_{x_1}<\bar{t}_i\}}\mathds{1}_{\{\bar{t}_{i-1}\leq\tau_{x_2}<\bar{t}_i\}}]\notag\\
=& \max(IS_{\bar{t}_i},Ie^{\delta \bar{t}_i})\mathds{1}_{\{\tau^s\geq \bar{t}_i\}} X_{\alpha, i}\label{definition death benefit}
\end{align}
where $1<\alpha<2$, and we used obvious notation in the last line. Therefore, the price of the death benefit is given as follows:
\begin{align}\label{eqdb1}
\text{P}^{\text{DB}}=&\sum_{i=1}^N \mathbb{E}_{Q\odot P}\Big[e^{-\int_0^{\bar{t}_i}r(u) \mathrm{d}u}\text{DB}(\bar{t}_i)\Big]\notag\\
=& \sum_{i=1}^N \mathbb{E}_{Q\odot P}\Big[e^{-\int_0^{\bar{t}_i}r(u) \mathrm{d}u}\max(IS_{\bar{t}_i},Ie^{\delta \bar{t}_i})\mathds{1}_{\{\tau^s\geq \bar{t}_i\}} X_{\alpha, i}\Big].
\end{align}
For $i\in\{1, \ldots N\}$, we define $P(i)$ by
\begin{align*}
P(i):=&\mathbb{E}_P(X_{\alpha,i})\\ 
=&P\big(\bar{t}_{i-1}\leq\tau_{x_1}<\bar{t}_i \big) + P\big(\bar{t}_{i-1}\leq\tau_{x_2}<\bar{t}_i \big)+(\alpha -2) P\big(\{\bar{t}_{i-1}\leq\tau_{x_1}<\bar{t}_i\}\cap \{\bar{t}_{i-1}\leq\tau_{x_2}<\bar{t}_i\}\big).
\end{align*}
Using Theorem \ref{joint density}, we have
\begin{align*}
 &P\big(\bar{t}_{i-1}\leq\tau_{x_1}<\bar{t}_i \big)=\int_{\bar{t}_{i-1}}^{\bar{t}_i}\int_0^{T^*}\rho(t_{1},t_{2}) \mathrm{d}t_2 \mathrm{d}t_1, \,\,\,\,\,\,\, P\big(\bar{t}_{i-1}\leq\tau_{x_2}<\bar{t}_i \big)=\int_{\bar{t}_{i-1}}^{\bar{t}_{i}}\int_0^{T^*} \rho(t_{1},t_{2}) \mathrm{d}t_1 \mathrm{d}t_2,\\
& P\big(\{\bar{t}_{i-1}\leq\tau_{x_1}<\bar{t}_i\}\cap \{\bar{t}_{i-1}\leq\tau_{x_2}<\bar{t}_i\}\big)=\int_{\bar{t}_{i-1}}^{\bar{t}_i}\int_{\bar{t}_{i-1}}^{\bar{t}_i}\rho(t_{1},t_{2}) \mathrm{d}t_1 \mathrm{d}t_2.
\end{align*}
 Substituting this in the expression of $P(i)$ yields
 \begin{align}
 P(i)=& \int_{\bar{t}_{i-1}}^{\bar{t}_i}\int_0^{T^*}\rho(t_{1},t_{2}) \mathrm{d}t_2 \mathrm{d}t_1 + \int_{\bar{t}_{i-1}}^{\bar{t}_{i}}\int_0^{T^*} \rho(t_{1},t_{2}) \mathrm{d}t_1 \mathrm{d}t_2\notag\\
 & +(\alpha -2)\int_{\bar{t}_{i-1}}^{\bar{t}_i}\int_{\bar{t}_{i-1}}^{\bar{t}_i}\rho(t_{1},t_{2}) \mathrm{d}t_1 \mathrm{d}t_2. \label{Qi}
 \end{align}

Consider $w_l$ given by \eqref{def:wk} and \eqref{def:wk2}, and in addition, for $i\in \{1, \ldots, N\}$, define
\begin{equation}
	w_{\bar t_i}:=   \int_0^{\bar t_i}A(s,\bar t_i) \mathrm{d}s +\int_0^{\bar t_i}f(0,s) \mathrm{d}s-\omega(\bar t_i) . \label{def:wk3}
 \end{equation}
  Recall that $\Delta t_l:=t_l -t_{l-1}$. For all $j\in \{1,\ldots, K-1\}$, let $R:=(0,\ldots,0,r)\in \mathbb{R}^{j+1}$, with $1<r<2$, $u \in \R^{j}$, and $v \in \C^{j+1}$.
For all $0 \le s \le T$, for all $j\in \{1,\ldots, K-2\}$ and all $i\in \{1, \ldots, N\}$ such that $t_j < \bar t_i \le t_{j+1}$, and for $j=K-1$ and all $i$ such that    $t_{K-1} < \bar t_i \leq T$, we define the following quantities
\begin{align}
D^{j}(u,T) :=& \exp\Big( \i\sum_{l=1}^{j} u_{l}w_l    \Big),\notag\\
\tilde D^{j,i}(v,T) :=& D^{j}(v_1,\dots,v_{j},T) \exp\Big(\i v_{j+1}w_{\bar t_i}\Big), \notag\\
E_{j,i}(s,u,T):=& \Sigma_1(s,\bar t_i) -\i\Sigma_1(s,T)\sum_{l=1}^{j} u_{l} \mathds{1}_{\{0\leq s \leq t_{l}\}} 
	  ,\notag\\
	 \tilde E_{j,i}(s,v,T):=& E_{j,i}(s,v_1,\dots,v_{j},T) - \i \Sigma_1(s,\bar t_i)v_{j+1},\notag\\
F_{j,i}(s,u,T):=& -\Sigma_2(s,\bar t_i)+ \i(\sigma_2(s)+\Sigma_2(s,T))\sum_{l=1}^{j} u_{l} \mathds{1}_{\{0\leq s \leq t_{l}\}} , \label{DB notation}\\
 \tilde F_{j,i}(s,v,T):=& F_{j,i}(s,v_1,\dots,v_{j},T) + \i  (\sigma_2(s)+\Sigma_2(s, \bar t_i))v_{j+1} , \notag
 \end{align}
\begin{align*}
M^{j,i}(u,T):=& D^{j}(u,T) \; \exp\Big(  \int_0^{\bar t_{i}} \Big( \theta^1_s( E_{j,i}(s,u,T) )+ \theta^2_s ( F_{j,i}(s,u,T)) \Big)  \mathrm{d}s  \Big)\notag\\
& \times \prod_{l=2}^{j+1}  \frac{2\beta \Delta  t_{l}}{u_{l-1}^2 + (\beta \Delta  t_{l})^2}, \notag\\
N^{j,i}(v,T):=& \tilde D^{j,i}(v-\i R,T) \; \exp\Big(  \int_0^{\bar t_{i}} \Big( \theta^1_s( \tilde E_{j,i}(s,v- \i R,T) )+ \theta^2_s ( \tilde F_{j,i}(s,v- \i R,T)) \Big)  \mathrm{d}s  \Big)\notag\\
& \times\frac{\exp\big(-\delta \bar t_i (\i v_{j+1} +r)\big)}{(\i v_{j+1} +r-1)(\i v_{j+1}+r)}\prod_{l=2}^{j+1}  \frac{2\beta \Delta  t_{l}}{v_{l-1}^2 + (\beta \Delta  t_{l})^2}, \text{ for } v \in \mathbb{R}^{j+1}. \notag
\end{align*}

Further, we define  for $0\leq s\leq T$, $u\in \mathbb{R}$ and $i\in\{1,\ldots,N\}$ such that $\bar t_i \leq t_1$,
\begin{align*}
E_i(s,u):=& \Sigma_1(s, \bar t_i)- (r+\i u)\Sigma_1(s, \bar t_i),\\
F_i(s,u) :=& -\Sigma_2(s,\bar t_i) +(r+ \i u) (\sigma_2(s)+\Sigma_2(s,\bar t_i)),\\
N^i(u):=& \exp\Big((r +\i u)w_{\bar t_i} \Big)\;\exp\Big( \int_0^{\bar t_i}\theta^1_s(E_i(s,u)) \mathrm{d}s + \int_0^{\bar t_i}\theta^2_s(F_i(s,u)) \mathrm{d}s\Big)\\
& \times \frac{\exp\big(-\delta \bar t_i(\i u +r)\big)}{(\i u +r-1)(\i u+r)}.
\end{align*}
 The price of the DB is given in the following.
\begin{theorem}\label{PDB}
The price ${\rm P}^{\rm DB}$ of the death benefit is given by
\begin{align*} 
{\rm P}^{\rm DB} =& \sum_{i:\: \bar t_i \le t_1}P(i)( Ie^{\delta \bar{t}_i}B(0, \bar t_i) + Ie^{\delta \bar{t}_i}B(0, \bar t_i) A_{0,i})\\
& + \sum_{j=1}^{K-2}\sum_{i:\:\bar t_i\in(t_j,t_{j+1}]}  P(i)Ie^{\delta \bar{t}_i}B(0,\bar t_i)\big(A^1_{j,i} + A^2_{j,i}\big)\\
& + \sum_{i:\: \bar t_i\in (t_{K-1},T]} P(i)Ie^{\delta \bar{t}_i}B(0,\bar t_i)\big(A^1_{K-1,i} + A^2_{K-1,i}\big),
\end{align*}
where $P(i)$ is as in \eqref{Qi}, 
and for $j\in \{1,\ldots, K-1\}$
\begin{align*}
&A_{0,i}=\frac{e^{-\int_0^{\bar t_i} A(s,\bar t_i)\mathrm{d}s}}{2\pi}\int_{\mathbb{R}} N^i(u)\mathrm{d}u,\,\,\,\,\
A_{j,i}^1= \frac{e^{-C(t_{j+1} -t_1)}}{(2\pi)^{j}}e^{-\int_0^{\bar t_i} A(s,\bar t_i)\mathrm{d}s }\int_{\mathbb{R}^{j}}M^{j,i}(u,T) \mathrm{d}u,\\
&A_{j,i}^2= \frac{e^{-C(t_{j+1} -t_1)}}{(2\pi)^{j+1}}e^{-\int_0^{\bar t_i} A(s,\bar t_i)\mathrm{d}s }\int_{\mathbb{R}^{j+1}}N^{j,i}(u,T) \mathrm{d}u.
\end{align*}
\end{theorem}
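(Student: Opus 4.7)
The plan is to adapt, term by term in $i$, the three-step structure already used for the GMAB and SB prices, splitting the computation according to where $\bar t_i$ falls relative to the surrender grid $\mathbf{t}$. First I would condition $\mathbb{E}_{Q\odot P}\big[e^{-\int_0^{\bar t_i}r(u)\mathrm{d}u}\text{DB}(\bar t_i)\big]$ on $\mathcal{F}_{\bar t_i}^{L^1,L^2}$ to pull out the financial payoff, and then on $\mathcal{G}=\sigma(\tau^s)\vee \mathcal{F}$. By Proposition \ref{QP}(b), Assumption \ref{assumption G}, and the fact that $X_{\alpha,i}$ depends only on $(\tau_{x_1},\tau_{x_2})$ and is therefore independent of $\mathcal{G}$ under $P$, the inner conditional expectation factorises as
\[
\mathbb{E}_{Q\odot P}\Big[\mathds{1}_{\{\tau^s\ge \bar t_i\}}X_{\alpha,i}\,\Big|\,\mathcal{F}_{\bar t_i}^{L^1,L^2}\Big]=P(i)\,\mathbb{E}_{Q\odot P}\Big[\mathds{1}_{\{\tau^s\ge \bar t_i\}}\,\Big|\,\mathcal{F}_{\bar t_i}^{L^1,L^2}\Big],
\]
with $P(i)=\mathbb{E}_P[X_{\alpha,i}]$ equal to \eqref{Qi} by Theorem \ref{joint density}. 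Since the surviving factor is $\mathcal{F}$-measurable, Proposition \ref{QP} lets me replace $Q\odot P$ by $Q$ in the outer expectation.

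Next, because $\tau^s$ is concentrated on $\{t_1,\dots,t_{K-1},\infty\}$ and $\lambda^s$ vanishes outside $[t_1,t_K)$, the event $\{\tau^s\ge \bar t_i\}$ is all of $\Omega$ when $\bar t_i\le t_1$, and coincides with $\{\tau^s\ge t_{j+1}\}$ when $t_j<\bar t_i\le t_{j+1}$ for some $j\in\{1,\dots,K-1\}$ (with the convention $t_K:=T$ in the top case). By \eqref{def:taus}--\eqref{def2:taus} the corresponding conditional probability equals $1$ in the first case and $e^{-C(t_{j+1}-t_1)}\prod_{l=2}^{j+1}e^{-\beta|D(t_{l-1})|\Delta t_l}$ in the second. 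After moving to the $\bar t_i$-forward measure $Q^{\bar t_i}$ defined analogously to \eqref{density forward measure}, I would decompose $\max(IS_{\bar t_i},Ie^{\delta\bar t_i})=Ie^{\delta\bar t_i}+(IS_{\bar t_i}-Ie^{\delta\bar t_i})^+$: the constant part produces either $B(0,\bar t_i)Ie^{\delta \bar t_i}$ (Case~1) or the $A_{j,i}^1$ contribution (Cases~2 and~3), while the call part produces $A_{0,i}$ or $A_{j,i}^2$ respectively.

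For the Fourier inversion I would apply the identity $e^{-\beta\Delta t_l|x|}=\frac{1}{2\pi}\int_{\mathbb{R}}\frac{2\beta\Delta t_l}{u^2+(\beta\Delta t_l)^2}e^{-\i ux}\mathrm{d}u$ at $x=D(t_{l-1})$ for each $l=2,\dots,j+1$, which generates the Cauchy kernels appearing inside $M^{j,i}$ and $N^{j,i}$. Combining \eqref{Stock price}, \eqref{forward rate}, and \eqref{Dt} one checks that $D(t_{l-1})=w_{l-1}-\int_0^{t_{l-1}}\Sigma_1(u,T)\mathrm{d}L^1_u+\int_0^{t_{l-1}}(\sigma_2(u)+\Sigma_2(u,T))\mathrm{d}L^2_u$, so after exchanging the finite-dimensional Fourier integrals with the $Q^{\bar t_i}$-expectation via Fubini, the remaining expectation becomes that of an exponential of linear combinations of stochastic integrals against the independent Lévy processes $L^1$ and $L^2$. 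Applying \eqref{cumulant} with the integrands $E_{j,i},F_{j,i}$ (resp.\ their tilded versions) and pulling out the deterministic factor $e^{-\int_0^{\bar t_i}A(s,\bar t_i)\mathrm{d}s}$ coming from the $Q\to Q^{\bar t_i}$ change of numeraire produces exactly $M^{j,i}(u,T)$ (resp.\ $N^{j,i}(v,T)$). For the call piece, the Carr--Madan type identity $(e^y-K)^+=\frac{1}{2\pi}\int_{\mathbb{R}}\frac{K^{1-(r+\i u)}}{(r+\i u)(r+\i u-1)}e^{(r+\i u)y}\mathrm{d}u$ with $1<r<2$ introduces one extra Fourier variable and accounts for the complex shift by $-\i R$ appearing in $\tilde D^{j,i}$, $\tilde E_{j,i}$, and $\tilde F_{j,i}$; the sub-case $\bar t_i\le t_1$ is the degenerate instance where no surrender-related Fourier variables are needed, yielding the single integral $A_{0,i}$. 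Summing the resulting contributions over $i$ produces the claimed formula. The main technical obstacle will be the joint justification of the Fubini swaps and the verification that the complex shifts $-\i R$ remain inside the strip of analyticity permitted by Assumption \ref{ass:EM}; this is where the bounds \eqref{AssM1}, \eqref{AssM22}, \eqref{AssM2} together with the prescription $1<r<2$ come into play.
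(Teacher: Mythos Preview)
Your proposal is correct and follows essentially the same route as the paper: factor out $P(i)$ via conditioning and Assumption~\ref{assumption G}, identify $\{\tau^s\ge\bar t_i\}$ with $\{\tau^s\ge t_{j+1}\}$ according to where $\bar t_i$ sits in the surrender grid, pass to the $\bar t_i$-forward measure, split $\max(IS_{\bar t_i},Ie^{\delta\bar t_i})$ into a constant plus a call, and evaluate each piece by Fourier inversion. The only cosmetic difference is that the paper packages the Fourier step as an application of \cite[Theorem~3.2]{Eberlein2010}, whereas you spell out the inverse-Fourier/Carr--Madan identities and Fubini directly; the content and the technical checks (the bounds \eqref{AssM1}, \eqref{AssM22}, \eqref{AssM2} together with $1<r<2$) are the same.
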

where $M^{j,i}, N^{j,i}$ and $N^i$ are described in \eqref{DB notation}.

\section{Proofs}\label{secproofs}
This section is devoted to the proofs of the main results.
\subsection{Proof of Theorem \ref{Guaranteed benefit}}
	We have seen above that 
	\begin{equation*}
		\text{P}^{\text{GMAB}}=P(\{\tau_{x_1}>T\}\cup\{\tau_{x_2}>T\})B(0,T)Ie^{\delta T}(A_1+A_2),
	\end{equation*}
	where
	\begin{align*}
		A_1:=\mathbb{E}_{Q^T}\Big[e^{-\int_0^{t_{K}}\lambda^s(u)\mathrm{d}u}\Big] \text{ and }
		A_2:=\mathbb{E}_{Q^T}\Big[e^{-\int_0^{t_{K}}\lambda^s(u)\mathrm{d}u}\Big(\frac{S_T}{e^{\delta T}}-1\Big)^+\Big].
	\end{align*}
	We start by calculating $A_1$ and $A_2$. Using the definition of $\lambda^s$ in \eqref{lambda}, we get
	\begin{eqnarray*}
		A_1 = e^{-C(t_K -t_1)} \mathbb{E}_{Q^T}\Big[\prod_{i=2}^K e^{-\beta \Delta t_i |D(t_{i-1})|}\Big]
		= e^{-C(t_K -t_1)} \mathbb{E}_{Q^T}\Big[f(D(t_1), \ldots, D(t_{K-1}))\Big],
	\end{eqnarray*}
	where 
	$f(x_1, \ldots, x_{K-1})
	:=\prod_{l=2}^{K} \big( e^{-\beta \Delta t_{l} |x_{l-1}|}\big)
	=\prod_{l=2}^{K} f_{l}(x_{l-1}).$ 
	Let $\hat f$ denote  the Fourier transform of a generic function $f$. Then, for any $y\in \mathbb{C}$, 
	\begin{eqnarray}
		\hat{f_l}(y)= \int_{\mathbb{R}}e^{\i yt}e^{-\beta \Delta t_l |t|}\mathrm{d}t 
		= \frac{2\beta \Delta  t_{l}}{y^2 + (\beta \Delta  t_{l})^2} . \label{Fourier 1}
	\end{eqnarray}
	Thus we deduce that
	\begin{equation}\label{eq:fhat}
		\hat{f}(y_1, \ldots, y_{K-1})=  \prod_{l=2}^K \frac{2\beta \Delta  t_{l}}{y_{l-1}^2 + (\beta \Delta  t_{l})^2},
	\end{equation}
	and we can see that $\hat{f}\in L^1(\mathbb{R}^{K-1})$. Thanks to \cite[Theorem 3.2]{Eberlein2010}, it holds
	\begin{equation}
		\mathbb{E}_{Q^T}\Big[f(D(t_1), \ldots, D(t_{K-1}))\Big] = \frac{1}{(2\pi)^{K-1}}\int_{\mathbb{R}^{K-1}}\tilde M(\i u)\hat{f}(-u)\mathrm{d}u, \label{Price-Fourier}
	\end{equation}
	where for any $u=(u_1, \ldots, u_{K-1})$, $\tilde M(\i u)$ is defined as follows
	\begin{equation*}
		\tilde M(\i u):=\mathbb{E}_{Q^T}\Big[e^{\i u_1 D(t_1) + \ldots + \i u_{K-1}D(t_{K-1})}\Big].
	\end{equation*}
	Using the representation \eqref{D:appendix} in Appendix \ref{app:prelim}, we get
	\begin{eqnarray*}
		\tilde M(\i u) &=& \exp\Big[\i\sum_{l=1}^{K-1} u_{l}\Big(-p(t_{l}) - \delta T+ \int_0^T f(0,s)\mathrm{d}s + \int_0^{t_{l}}A(s,T)\mathrm{d}s -\omega(t_{l})\Big)\Big]\\
		&& \quad \times \mathbb{E}_{Q^T}\Big[\exp\Big(\i\sum_{l=1}^{K-1} u_l \Big(\int_0^{t_{l}} (\sigma_2(s)+\Sigma_2(s,T))\mathrm{d}L^2_s - \int_0^{t_{l}}\Sigma_1(s,T)\mathrm{d}L^1_s\Big)\Big)\Big].
	\end{eqnarray*}
	Using the representation \eqref{Bank ac}, the density of $Q^T$ given in \eqref{density forward measure} can be rewritten as follows
	\begin{equation*} 
		\frac{dQ^T}{dQ}= \exp\Big( -\int_0^T A(s,T)\mathrm{d}s + \int_0^T \Sigma_1(s,T)\mathrm{d}L^1_s -\int_0^T \Sigma_2(s,T)\mathrm{d}L^2_s \Big).
	\end{equation*}
	Therefore
	\begin{eqnarray*}
		\lefteqn{ \mathbb{E}_{Q^T}\Big[\exp\Big( \i\sum_{l=1}^{K-1} \int_0^{t_{l}} (u_{l}(\sigma_2(s)+\Sigma_2(s,T)))\mathrm{d}L^2_s - \int_0^{t_{l}}u_{l}\Sigma_1(s,T)\mathrm{d}L^1_s\Big)\Big]} \hspace{1cm}\\
		&=& \mathbb{E}_Q\Big(\exp\Big(- \int_0^TA(s,T)\mathrm{d}s + \int_0^T\Sigma_1(s,T))\mathrm{d}L^1_s -\int_0^T\Sigma_2(s,T)\mathrm{d}L^2_s  \Big) \\
		&& \qquad \quad \times \exp\Big( \i\sum_{l=1}^{K-1} (\int_0^{t_{l}} u_{l}(\sigma_2(s)+\Sigma_2(s,T)))\mathrm{d}L^2_s - \int_0^{t_{l}}u_{l}\Sigma_1(s,T)\mathrm{d}L^1_s)\Big)\Big]\\
		&=& e^{-\int_0^T A(s,T)\mathrm{d}s}\mathbb{E}_Q\Big[\exp\Big( \int_0^T E(s,u,T)\mathrm{d}L^1_s + \int_0^T F(s,u,T)\mathrm{d}L^2_s\Big)\Big],
	\end{eqnarray*}
	where $E$ and $F$ are given in \eqref{E}. In virtue of \eqref{AssM1} and \eqref{AssM22}, the above expectation exists. By the independence of $L^1$ and $L^2$ and using equation \eqref{cumulant}, the last quantity simplifies to the following:
	\[
	\exp\Big( -\int_0^T A(s,T)\mathrm{d}s + \int_0^T\theta^1_s(E(s,u,T))\mathrm{d}s + \int_0^T\theta^2_s(F(s,u,T))\mathrm{d}s \Big).
	\]
	Consequently, using the definition of $D(u,T)$ \eqref{E}, we have
	\begin{eqnarray*}
		\tilde M(\i u)&=& D(u,T)  \exp\Big(-\int_0^T A(s,T)\mathrm{d}s + \int_0^T\theta^1_s(E(s,u,T))\mathrm{d}s + \int_0^T\theta^2_s(F(s,u,T))\mathrm{d}s \Big).
	\end{eqnarray*}
	Therefore, the representation for $A_1$ follows using \eqref{Price-Fourier}.

	It remains to calculate $A_2$. 

It follows from \eqref{Dt} that, $D(T)=Y_T -\delta T$ and since $S_T=e^{Y_T}$, we have  $\frac{S_T}{e^{\delta T}} 
= e^{D(T)}$.

Consequently, by the same arguments used previously in the calculation of $A_1$, we have 
	\begin{eqnarray*}
		A_2 &=& e^{-C(t_K -t_1)} \mathbb{E}_{Q^T}\Big[f(D(t_1), \ldots, D(t_{K-1}))\Big(e^{D(T)}-1\Big)^+\Big]\\
		&=& e^{-C(t_K -t_1)} \mathbb{E}_{Q^T}\Big[F(D(t_1), \ldots, D(t_{K-1}), D(T))\Big],
	\end{eqnarray*}
	where
	\[
	F(x_1, \ldots, x_K):=f(x_1,\dots,x_{K-1})  (e^{x_K }-1)^+ .
	\] 
	To ensure integrability, we define $g(x_1,\ldots, x_K):=F(x_1, \ldots, x_K) e^{-rx_K}$, with $1<r<2$. Let
	\[
	g_K (x_K) :=  (e^{x_K }-1)^+ e^{-r x_K}. 
	\]
	Thus, $g_K \in L^1(\mathbb{R})$, such that $g\in L^1(\mathbb{R}^{K})$.
	Elementary computations shows that
	for all $ y \in \mathbb{R}$
	\[
	\hat{g}_K(y)= \frac{1}{(\i y -r+1)(\i y-r)}. 
	\]
	Note that $|\hat{g}_K(y)|_{\mathbb{C}}= (((1-r)^2+ y^2)(r^2+y^2))^{-1/2}$, and thus, $\hat{g}_K\in L^1(\mathbb{R})$. Consequently, combining this with \eqref{eq:fhat}, we deduce that $\hat{g}\in L^1(\mathbb{R}^{K})$, and 
	\begin{equation}
		\hat{g}(y_1, \ldots, y_K)=  \frac{1}{(\i y_{K} -r+1)(\i y_{K}-r)} \prod_{l=2}^K \frac{2\beta \Delta  t_{l}}{y_{l-1}^2 + (\beta \Delta  t_{l})^2}. \label{Fourier 2}
	\end{equation}
	Since $g, \hat{g}\in L^1(\mathbb{R}^{K})$, applying \cite[Theorem 3.2]{Eberlein2010}, we obtain 
	\begin{equation}
		\mathbb{E}_{Q^T}\Big[F(D(t_1), \ldots, D(t_{K-1}), D(T))\Big] = \frac{1}{(2\pi)^{K}}\int_{\mathbb{R}^{K}}\tilde N(R+\i u)\hat{F}(\i R-u)\mathrm{d}u, \label{Price-Fourier'}
	\end{equation}
	where $R:=(0, \ldots, 0, r)\in\mathbb{R}^K$, $1<r<2$, and $\tilde N(R+\i u)$ is defined as follows 
	\begin{equation*}
		\tilde N(R+\i u):=\mathbb{E}_{Q^T}\Big[e^{\i u_1 D(t_1) + \ldots + \i u_{K-1}D(t_{K-1})+ (\i u_K +r)D(T)}\Big].
	\end{equation*}
	Using once more the notation from \eqref{E}, we get
	\begin{eqnarray*}
		\tilde N(R+\i u)
		&=& \tilde D(u -\i R ,T)
		e^{-\int_0^T A(s,T)ds}  \\
		&& \quad \times \mathbb{E}_Q\Big[\exp\Big( \int_0^{T} \tilde E(s,u-\i R,T) \mathrm{d}L^1_s + \int_0^{T} \tilde F(s,u-\i R,T) \mathrm{d}L^2_s \Big)\Big].
	\end{eqnarray*}
	Since $1<r <2$, and in virtue of \eqref{AssM1}, \eqref{AssM22} and \eqref{AssM2}, we have $|(1-r)\Sigma_1(s,T)| \leq M_1$, and  $|r\sigma_2(s) +(r-1)\Sigma_2(s,T)| \leq 2 \frac{M_2}{3} + (r-1)\frac{M_2}{3} \leq M_2$. Therefore, the above expectation exists.
	By the independence of $L^1$ and $L^2$, and \eqref{cumulant}, we deduce that
	\begin{eqnarray}
		\tilde N(R+\i u)
		&=& \tilde D(u-\i R  ,T)  e^{-\int_0^T A(s,T)ds}\notag \\
		&& \times \exp\Big( \int_0^T \theta^1_s(\tilde{E}(s,u-\i R,T)) \mathrm{d}s + \int_0^T \theta^2_s(\tilde{F}(s,u-\i R,T))    \mathrm{d}s \Big). \label{M'}
	\end{eqnarray}
	Observe that for any $u\in \mathbb{R}^{K}$, we have 
	\[
	\hat{g}(u)=\int_{\mathbb{R}^{K}}e^{\i \langle u, x \rangle } e^{- \langle R , x \rangle} F(x)\mathrm{d}x = \hat{F}(u+\i R). 
	\]
	Consequently, we have that
	\begin{eqnarray}
		\hat{F}(\i R - u) =\hat{g}(-u) 
		&=& \frac{1}{(\i u_{K} +r-1)(\i u_{K}+r)} \prod_{l=2}^K \frac{2\beta \Delta  t_{l}}{u_{l-1}^2 + (\beta \Delta  t_{l})^2}. \label{Fourier 3}
	\end{eqnarray}
	The claim follows by substituting \eqref{M'} and \eqref{Fourier 3} into \eqref{Price-Fourier'}.
	
\subsection{Proof of Theorem \ref{surrender}}

	We know from \eqref{eqsb2} that 
		\begin{eqnarray*}
			\mathbb{E}_{Q\odot P}\Big[e^{-\int_0^{t_i}r(u)\mathrm{d}u}\text{SB}(t_i)\Big]
			&=& I\widetilde P(t_i) \mathbb{E}_{Q}\Big[e^{-\int_0^{t_i}r(u)\mathrm{d}u}S_{t_i}\Big(e^{-\int_0^{t_{i}} \lambda^s(u) \mathrm{d}u } -  e^{-\int_0^{t_{i+1}} \lambda^s(u) \mathrm{d}u }\Big)\Big]P_{t_i}
		\end{eqnarray*}
		But
		\begin{align*}
			&\mathbb{E}_{Q}\Big[e^{-\int_0^{t_i}r(u)du}S_{t_i}\Big(e^{-\int_0^{t_{i}} \lambda^s(u) \mathrm{d}u } -  e^{-\int_0^{t_{i+1}} \lambda^s(u) \mathrm{d}u }\Big)\Big]\\
			=&  \mathbb{E}_{Q}\Big[  e^{-\int_0^{t_{i}}r_u \mathrm{d}u} S_{t_{i}} e^{-\int_0^{t_{i}} \lambda^s(u) \mathrm{d}u }      \Big] 
			-  \mathbb{E}_{Q}\Big[  e^{-\int_0^{t_{i}}r_u \mathrm{d}u} S_{t_{i}} e^{-\int_0^{t_{i+1}} \lambda^s(u) \mathrm{d}u }      \Big]=:B_i^1+B_i^2.
		\end{align*}
		To compute $B^1_{i}$ and $B^2_{i}$, we use the spot probability measure $Q^{S,i},\ i=1,\dots,K-1$ defined by its Radon-Nikodym derivative
		\begin{equation}
			\frac{\mathrm{d}Q^{S,i}}{\mathrm{d}Q} = e^{-\int_0^{t_{i}}r_u \mathrm{d}u} S_{t_{i}}. \label{bis forward measure}
		\end{equation}
		Note that equation \eqref{bis forward measure} defines a density process since the discounted stock price $\big(e^{-\int_0^{t}r_u du}S(t)\big)_t$ is a $Q$-martingale. Using this new measure, we deduce the following  
		\begin{align}
			B^1_i =&\mathbb{E}_{Q}\Big[  e^{-\int_0^{t_{i}}r_u \mathrm{d}u}  S_{t_{i}}  e^{-\int_0^{t_{i}} \lambda^s(u) \mathrm{d}u }    \Big] = \mathbb{E}_{Q^{S,i}}\Big[ e^{-\int_0^{t_{i}} \lambda^s(u) \mathrm{d}u }    \Big], \label{B^1} \\
			B^2_i =&\mathbb{E}_{Q}\Big[  e^{-\int_0^{t_{i}}r_u du}  S_{t_{i}} e^{-\int_0^{t_{i+1}} \lambda^s(u) du }     \Big] =  \mathbb{E}_{Q^{S,i}}\Big[ e^{-\int_0^{t_{i+1}} \lambda^s(u) \mathrm{d}u }    \Big]. \label{B^2}
		\end{align}
		We start by computing $B_i^1$. By its construction,  $\lambda^s(u)=0$ for $u\in[0, t_1)$, so $B_1^{1}=1$.
		For $i\in \{2, \ldots, K-1\}$, the computation of $B^1_i$ is  analogous to the computation of $A_1$ in the proof of Theorem \ref{Guaranteed benefit}. In fact, similar to  \eqref{Price-Fourier}, we have
		
		It holds
		\begin{align}   		
			e^{C (t_{i} - t_1) } \mathbb{E}_{Q^{S,i}} \Big[ e^{-\int_0^{t_{i}} \lambda^s(u) \mathrm{d}u }    \Big] 
			=& \mathbb{E}_{Q^{S,i}}\Big[f(D(t_1), \ldots, D(t_{i-1}))\Big]\notag\\
			=&	\frac{1}{(2\pi)^{i-1}}\int_{\mathbb{R}^{i-1}}\tilde M^{i-1}(\i u)\hat{f}(-u)\mathrm{d}u, \label{B1}
		\end{align}
		where 
		\begin{align*}
		f(x_1, \ldots, x_{i-1}):=&\prod_{l=2}^{i} e^{-\beta \Delta t_{l} |x_{l-1}|}, \text{ with } \Delta t_l= t_l  - t_{l-1}; \,\,\,
			\hat{f}(u_1, \ldots, u_{i-1})=  \prod_{l=2}^{i} \frac{2\beta \Delta  t_{l}}{u_{l-1}^2 + (\beta \Delta  t_{l})^2} ,\\
			\tilde M^{i-1}(\i u) =& \mathbb{E}_{Q^{S,i}}\Big[e^{\i u_1 D(t_1) + \ldots + \i u_{i-1}D(t_{i-1})}\Big].
		\end{align*}

		Using \eqref{D:appendix} we deduce  that
		\begin{eqnarray*}
			\tilde M^{i-1}(\i u) &=&
			\mathbb{E}_{Q}\Big[e^{\i u_1 D(t_1) + \ldots + \i u_{i-1}D(t_{i-1})+ \int_0^{t_{i}} \sigma_2(s)\mathrm{d}L^2_s  -\omega(t_{i})}\Big]\\
			&=& \exp\Big( \i\sum_{l=1}^{i-1} u_{l}(-p(t_{l}) - \delta T+ \int_0^T f(0,s)\mathrm{d}s + \int_0^{t_{l}}A(s,T)\mathrm{d}s -\omega(t_{l})) \Big)\\
			&& \times \mathbb{E}_{Q}\Big[
			\exp\Big( \i\sum_{l=1}^{i-1} \Big(\int_0^{t_{l}} u_{l}(\sigma_2(s)+\Sigma_2(s,T))\mathrm{d}L^2_s - \int_0^{t_{l}}u_{l} \Sigma_1(s,T)\mathrm{d}L^1_s\Big) \\
			&&   \qquad \qquad \qquad +  \int_0^{t_{i}} \sigma_2(s)\mathrm{d}L^2_s  -\omega(t_{i}) \Big)   \Big].
		\end{eqnarray*}
		Consider  $ E^{i}(s,u,T)$ and $F^{i}(s,u,T)$ defined in \eqref{Ei}. Using \eqref{cumulant}, we have   
		\begin{align*}
		& \mathbb{E}_{Q}\Big[\exp\Big(  \int_0^{t_{i}} E^{i}(s,u,T) \mathrm{d}L^1_s + \int_0^{t_{i}} F^{i}(s,u,T) \mathrm{d}L^2_s \Big) \Big] \\
			= &\exp\Big(  \int_0^{t_{i}} \Big( \theta^1_s( E^{i}(s,u,T) )+ \theta^2_s ( F^{i}(s,u,T)) \Big) \mathrm{d}s  \Big),
		\end{align*}
From the definition of $D^{i}(u,T)$ in \eqref{Ei}, we deduce that
		\begin{equation*}
			\tilde M^{i-1}(\i u) = D^{i}(u,T) 	\exp\Big(  \int_0^{t_{i}} \Big( \theta^1_s( E^{i}(s,u,T) )+ \theta^2_s ( F^{i}(s,u,T)) \Big) \mathrm{d}s  \Big), 
		\end{equation*}
		and the expression of $B^1_{i}$ follows from \eqref{B1}. Finally, observe that
		\begin{equation*}
			B_{i}^{2} = e^{-C(t_{i+1} -t_1)} \mathbb{E}_{Q^{S,i}} \Big[\prod_{l=2}^{i+1}  e^{-\beta \Delta t_l |D(t_{l-1})|} \Big].
		\end{equation*}
		The representation of $B^2_i$ follows from the same arguments used above.

\subsection{Proof of Theorem \ref{PDB}}

We recall from \ref{eqdb1} that the price is given by
\begin{align}
	\text{P}^{\text{DB}}=&\sum_{i=1}^N \mathbb{E}_{Q\odot P}\Big[e^{-\int_0^{\bar{t}_i}r(u) \mathrm{d}u}\text{DB}(\bar{t}_i)\Big]\notag\\
	=& \sum_{i=1}^N \mathbb{E}_{Q\odot P}\Big[e^{-\int_0^{\bar{t}_i}r(u) \mathrm{d}u}\max(IS_{\bar{t}_i},Ie^{\delta \bar{t}_i})\mathds{1}_{\{\tau^s\geq \bar{t}_i\}} X_{\alpha, i}\Big].
\end{align}
We focus on the summands of the above equation and distinguish between two cases: when $\bar t_i \le t_1$ and when $t_1 < \bar t_i$. We start with the  description of the second case which is the more technical one. We treat the first case at the end.
	
	For $j\in \{1,\ldots, K-2\}$ and $i$ such that $t_j < \bar t_i \le t_{j+1}$, as well as for $j=K-1$ and  $i$ such that $t_{K-1} < \bar t_i\leq T$,  by similar computations to those done in \eqref{GMAB new1} and \eqref{GMAB new2}, we get 
	\begin{eqnarray}
		&& \mathbb{E}_{Q\odot P}\Big[e^{-\int_0^{\bar{t}_i} r(u)\mathrm{d}u} \mathds{1}_{\{ \tau^s \geq \bar{t}_i\}}\max(I S(\bar{t}_i), Ie^{\delta \bar{t}_i})X_{\alpha, i}\Big]\notag\\
		&&=\mathbb{E}_{Q\odot P}\Big[e^{-\int_0^{\bar{t}_i} r(u)\mathrm{d}u} \mathds{1}_{\{ \tau^s \geq t_{j+1}\}}\max(I S(\bar{t}_i), Ie^{\delta \bar{t}_i})X_{\alpha, i}\Big]\notag\\
		&&= P(i)\mathbb{E}_{Q\odot P}\Big[e^{-\int_0^{\bar{t}_i} r(u)\mathrm{d}u}\max(I S(\bar{t}_i), Ie^{\delta \bar{t}_i})\mathbb{E}_{Q\odot P}\big[\mathds{1}_{\{ \tau^s \geq t_{j+1}\}}|\mathcal{F}_{\max(t_{j+1}, \bar t_i)}^{L^1,L^2}\big]\Big]\notag\\
		&&=P(i)\mathbb{E}_{Q\odot P}\Big[e^{-\int_0^{\bar{t}_i} r(u)\mathrm{d}u} e^{-\int_0^{t_{j+1}} \lambda^s(u) \mathrm{d}u}\max(I S(\bar{t}_i), Ie^{\delta \bar{t}_i})\Big]\notag\\
		&&=P(i)\mathbb{E}_{Q}\Big[e^{-\int_0^{\bar{t}_i} r(u)\mathrm{d}u} e^{-\int_0^{t_{j+1}} \lambda^s(u) \mathrm{d}u}\max(I S(\bar{t}_i), Ie^{\delta \bar{t}_i})\Big]\notag\\
		&&=Ie^{\delta \bar{t}_i} P(i)\mathbb{E}_{Q}\Big[e^{-\int_0^{\bar{t}_i} r(u)\mathrm{d}u} e^{-\int_0^{t_{j+1}} \lambda^s(u) \mathrm{d}u}\Big
		(1+ \Big(\frac{IS_{\bar t_i}}{Ie^{\delta \bar{t}_i}}-1\Big)^+\Big) \Big].    \label{DB1}
	\end{eqnarray}
	The $\bar t_i$-forward measure $Q^{\bar t_i}$ is defined by its Radon-Nikodym density as follows
	\begin{equation}
		\frac{\mathrm{d}Q^{\bar t_i}}{\mathrm{d}Q}= \frac{1}{B(0,\bar t_i)B(\bar t_i)}. \label{second forward measure}
	\end{equation}
	Let $\mathbb{E}_{Q^{\bar t_i}}$ be the expectation with respect to $Q^{\bar t_i}$, then using \eqref{second forward measure} yields
	\begin{eqnarray}
		&& \mathbb{E}_{Q\odot P}\Big[e^{-\int_0^{\bar{t}_i} r(u)\mathrm{d}u} \mathds{1}_{\{ \tau^s \geq \bar{t}_i\}}\max(I S(\bar{t}_i), Ie^{\delta \bar{t}_i})X_{\alpha, i}\Big]\notag\\ 
		&&=Ie^{\delta \bar{t}_i}P(i)B(0,\bar t_i)\Big( \mathbb{E}_{Q^{\bar t_i}}\Big[ e^{-\int_0^{t_{j+1}} \lambda^s(u) \mathrm{d}u}\Big] + \mathbb{E}_{Q^{\bar t_i}}\Big[ e^{-\int_0^{t_{j+1}} \lambda^s(u) \mathrm{d}u}\Big(\frac{S_{\bar t_i}}{e^{\delta \bar{t}_i}}-1\Big)^+ \Big]\Big)\notag\\
		&&= \ Ie^{\delta \bar{t}_i}P(i)B(0,\bar t_i)\big( A_{j,i}^1 + A_{j,i}^2\big). \label{A1A2'}
	\end{eqnarray}  
Using similar arguments as in \eqref{Price-Fourier}
	\begin{equation}
		e^{C(t_{j+1} -t_1)} A_{j,i}^1= \mathbb{E}_{Q^{\bar t_i}}\Big[f(D(t_1), \ldots, D(t_{j}))\Big]=\frac{1}{(2\pi)^{j}}\int_{\mathbb{R}^{j}}\tilde M^{j}_i(\i u)\hat{f}(-u)\mathrm{d}u, \label{Aji1}
	\end{equation}
	where
	\begin{align*}
		f(x_1, \ldots, x_{j})
		=&\prod_{l=2}^{j+1}  e^{-\beta \Delta t_{l} |x_{l-1}|}, 
		\text{ with }\Delta t_l =t_l -t_{l-1}; \,\,\,\,\,	\hat{f}(u_1, \ldots, u_{j})=  \prod_{l=2}^{j+1} \frac{2\beta \Delta  t_{l}}{u_{l-1}^2 + (\beta \Delta  t_{l})^2},\\
			\tilde M^{j}_i(\i u) :=& \mathbb{E}_{Q^{\bar t_i}}\Big[e^{\i u_1 D(t_1) + \ldots + \i u_{j}D(t_{j})}\Big].
	\end{align*}

	Using the dynamics of the bank account in \eqref{Bank ac}, the Radon-Nikodym density in \eqref{second forward measure}  can be written as follows
	\begin{equation} 
		\frac{\mathrm{d}Q^{\bar t_i}}{\mathrm{d}Q}= \exp\Big( -\int_0^{\bar t_i} A(s,\bar t_i)\mathrm{d}s + \int_0^{\bar t_i} \Sigma_1(s,\bar t_i)\mathrm{d}L^1_s -\int_0^{\bar t_i} \Sigma_2(s,\bar t_i)\mathrm{d}L^2_s \Big). \label{second forward measure'}
	\end{equation}
	Therefore, using the representation of $D(t)$ in \eqref{D:appendix}, using \eqref{DB notation} and equation \eqref{cumulant}, we deduce that
	\begin{align*}
		\tilde M^{j}_i(\i u) =& \mathbb{E}_{Q^{\bar t_i}}\Big[e^{\i u_1 D(t_1) + \ldots + \i u_{j}D(t_{j})}\Big]\\
		=& \mathbb{E}_Q\Big[e^{\i u_1 D(t_1) + \ldots + \i u_{j}D(t_{j})-\int_0^{\bar t_i} A(s,\bar t_i)ds + \int_0^{\bar t_i} \Sigma_1(s,\bar t_i)\mathrm{d}L^1_s-\int_0^{\bar t_i} \Sigma_2(s,\bar t_i)\mathrm{d}L^2_s }\Big]\\
		=& \exp\Big( \i\sum_{l=1}^{j} u_{l}\big(-p(t_{l}) - \delta T+ \int_0^T f(0,s)\mathrm{d}s + \int_0^{t_{l}}A(s,T)\mathrm{d}s -\omega(t_{l})\big)-\int_0^{\bar t_i} A(s,\bar t_i)\mathrm{d}s  \Big)\\
		& \times \mathbb{E}_{Q}\Big[
		\exp\Big( \i\sum_{l=1}^{j} \Big(\int_0^{t_{l}} u_{l}(\sigma_2(s)+\Sigma_2(s,T))\mathrm{d}L^2_s - \int_0^{t_{l}}u_{l} \Sigma_1(s,T)\mathrm{d}L^1_s\Big) \\
		& \hspace{1cm}  + \int_0^{\bar t_i} \Sigma_1(s,\bar t_i)\mathrm{d}L^1_s-\int_0^{\bar t_i} \Sigma_2(s,\bar t_i)\mathrm{d}L^2_s \Big)   \Big].\\
		=&\exp\Big( \i\sum_{l=1}^{j} u_{l}\big(-p(t_{l}) - \delta T+ \int_0^T f(0,s)\mathrm{d}s + \int_0^{t_{l}}A(s,T)\mathrm{d}s -\omega(t_{l})\big)-\int_0^{\bar t_i} A(s,\bar t_i)\mathrm{d}s  \Big)\\
		&\times \mathbb{E}_{Q}\Big[\exp\Big(  \int_0^{\bar t_{i}} E_{j,i}(s,u,T) \mathrm{d}L^1_s + \int_0^{\bar t_{i}} F_{j,i}(s,u,T) \mathrm{d}L^2_s \Big) \Big] \\
		=&\exp\Big( \i\sum_{l=1}^{j} u_{l}\big(-p(t_{l}) - \delta T+ \int_0^T f(0,s)\mathrm{d}s + \int_0^{t_{l}}A(s,T)\mathrm{d}s -\omega(t_{l})\big)-\int_0^{\bar t_i} A(s,\bar t_i)\mathrm{d}s  \Big)\\
		&\times \exp\Big(  \int_0^{\bar t_{i}} \Big( \theta^1_s( E_{j,i}(s,u,T) )+ \theta^2_s ( F_{j,i}(s,u,T)) \Big) \mathrm{d}s  \Big)
	\end{align*}
	
	From the definition of $D^{j}(u,T)$ in \eqref{DB notation},  we deduce that
	\begin{equation*}
		\tilde M^{j}_i(\i u) = D^{j}(u,T)  \exp\Big( -\int_0^{\bar t_i} A(s,\bar t_i)\mathrm{d}s + \int_0^{\bar t_{i}} \Big( \theta^1_s( E_{j,i}(s,u,T) )+ \theta^2_s ( F_{j,i}(s,u,T)) \Big) \mathrm{d}s  \Big). 
	\end{equation*}
	By combining \eqref{Aji1} and the definition of $M^{j,i}(u,T)$ in \eqref{DB notation}, we finally obtain
	\[
	A_{j,i}^1=  \frac{e^{-C(t_{j+1} -t_1)}}{(2\pi)^{j}}e^{-\int_0^{\bar t_i} A(s,\bar t_i)ds }\int_{\mathbb{R}^{j}}M^{j,i}(u,T) \mathrm{d}u.
	\]
	
	Let us now compute  $A_{j,i}^2$. For $Y_t=\log(S_t)$, we have
	\[
	\Big(\frac{S_{\bar t_i}}{e^{\delta \bar{t}_i}}-1\Big)^+ = \Big(\exp\big(Y_{\bar t_i} -\delta \bar t_i\big) -1\Big)^+,
	\]
 Note that  
	\begin{equation}
		e^{C(t_{j+1} -t_1)}A_{j,i}^2= \mathbb{E}_{Q^{\bar t_i}}\big[h\big(D(t_1), \ldots, D(t_{j}), Y_{\bar t_i} \big)\big], \label{Aji2}
	\end{equation}
	with $h(x_1, \ldots, x_{j+1}):=f(x_1,\dots,x_{j})  (e^{x_{j+1} -\delta\bar t_i }-1)^+ ,$  where $f$ given in \eqref{Aji1}.
	To ensure integrability,  we define $H(x_1,\ldots, x_{j+1}):=h(x_1, \ldots, x_{j+1}) e^{-rx_{j+1}}$, for some $1<r<2$, and 
	\[
	H_{j+1} (x_{j+1}) :=  (e^{x_{j+1} -\delta\bar t_i }-1)^+ e^{-r x_{j+1}}. 
	\]
	Therefore, $H_{j+1} \in L^1(\mathbb{R})$, and $H\in L^1(\mathbb{R}^{j+1})$.
	By simple  integration we get that
	for all $ y \in \mathbb{R}$
	\[
	\hat{H}_{j+1}(y)= \frac{\exp\big(\delta\bar t_i(\i y-r)\big)}{(\i y -r+1)(\i y-r)}. 
	\] 
	We have $|\hat{H}_{j+1}(y)|_{\mathbb{C}}= e^{-r\delta\bar t_i}(((1-r)^2+ y^2)(r^2+y^2))^{-1/2}$, so,
	$\hat{H}_{j+1}\in L^1(\mathbb{R})$. Consequently, using the above equality and \eqref{eq:fhat}, we obtain $\hat{H}\in L^1(\mathbb{R}^{j+1})$, and
	\begin{equation}
		\hat{H}(y_1, \ldots, y_{j+1})= \frac{\exp\big(\delta\bar t_i(\i y_{j+1} -r)\big)}{(\i y_{j+1} -r+1)(\i y_{j+1}-r)} \prod_{l=2}^{j+1} \frac{2\beta \Delta  t_{l}}{y_{l-1}^2 + (\beta \Delta  t_{l})^2}. \label{DB fourier2}
	\end{equation}
	Since $H, \hat{H}\in L^1(\mathbb{R}^{j+1})$, we deduce  from \cite[Theorem 3.2]{Eberlein2010} that
	\begin{equation}
		\mathbb{E}_{Q^{\bar t_i}}\big[h\big(D(t_1), \ldots, D(t_{j}), Y_{\bar t_i}\big)\big] = \frac{1}{(2\pi)^{j+1}}\int_{\mathbb{R}^{j+1}}\tilde N_i^{j+1}(R+\i u)\hat{h}(\i R-u)\mathrm{d}u, \label{DB fourier3}
	\end{equation}
	where $R=(0, \ldots, 0, r)\in\mathbb{R}^{j+1}$, $1<r<2$, and  
	\begin{equation}
		\tilde N_i^{j+1}(R+\i u):= \mathbb{E}_{Q^{\bar t_i}}\big[e^{\i u_1 D(t_1) + \ldots + \i u_{j}D(t_{j})+ (\i u_{j+1} +r)Y_{\bar t_i}}\big]. \label{DB tildeN}
	\end{equation}
	Using once more $Y_{\bar t_i}=\log(S_{\bar t_i})$, we deduce  from \eqref{Stock price} that 
	\[
	Y_{\bar t_i}=\int_0^{\bar t_i} r(s)\mathrm{d}s+\int_0^{\bar t_i} \sigma_2(s)\mathrm{d}L^2_s -\omega(\bar t_i),
	\]
	Thanks to \eqref{interest}, we obtain
	\begin{align*}
		Y_{\bar t_i}=& \int_0^{\bar t_i} f(0,s)\mathrm{d}s + \int_0^{\bar t_i}A(s,\bar t_i)\mathrm{d}s - \int_0^{\bar t_i}\Sigma_1(s,\bar t_i)\mathrm{d}L^1_s + \int_0^{\bar t_i}\Sigma_2(s,\bar t_i)\mathrm{d}L^2_s +\int_0^{\bar t_i} \sigma_2(s)\mathrm{d}L^2_s -\omega(\bar t_i)\\
		=& \int_0^{\bar t_i} f(0,s)\mathrm{d}s + \int_0^{\bar t_i}A(s,\bar t_i)\mathrm{d}s - \int_0^{\bar t_i}\Sigma_1(s,\bar t_i)\mathrm{d}L^1_s+ \int_0^{\bar t_i}(\sigma_2(s)+\Sigma_2(s,\bar t_i))\mathrm{d}L^2_s-\omega(\bar t_i).
	\end{align*}
	Substituting this into \eqref{DB tildeN}, we get
	\begin{align*}
		&\tilde N_i^{j+1}(R+\i u)\\
		=& \mathbb{E}_Q\Big[e^{\i u_1 D(t_1) + \ldots + \i u_{j}D(t_{j}) +(\i u_{j+1} +r)Y_{\bar t_i}-\int_0^{\bar t_i} A(s,\bar t_i)ds + \int_0^{\bar t_i} \Sigma_1(s,\bar t_i)\mathrm{d}L^1_s -\int_0^{\bar t_i} \Sigma_2(s,\bar t_i)\mathrm{d}L^2_s}\Big]\\
		=& \exp\Big( \i\sum_{l=1}^{j} u_{l}\big(-p(t_{l}) - \delta T+ \int_0^T f(0,s)\mathrm{d}s + \int_0^{t_{l}}A(s,T)\mathrm{d}s -\omega(t_{l})\big)-\int_0^{\bar t_i} A(s,\bar t_i)\mathrm{d}s   \Big)\\
		&\times \exp\Big((\i u_{j+1} +r) \big( \int_0^{\bar t_i} f(0,s)\mathrm{d}s + \int_0^{\bar t_i}A(s,\bar t_i)\mathrm{d}s-\omega(\bar t_i)\big) \Big)\\
		&  \times \mathbb{E}_{Q}\Big[
		\exp\Big( \i\sum_{l=1}^{j} \Big(\int_0^{t_{l}} u_{l}(\sigma_2(s)+\Sigma_2(s,T))\mathrm{d}L^2_s - \int_0^{t_{l}}u_{l} \Sigma_1(s,T)\mathrm{d}L^1_s\Big)\\ 
		& +\;(\i u_{j+1} +r) \Big(\int_0^{\bar t_{i}} (\sigma_2(s)+\Sigma_2(s, \bar t_i))\mathrm{d}L^2_s 
		- \int_0^{\bar t_{i}}\Sigma_1(s,\bar t_i)\mathrm{d}L^1_s \Big)\\ 
		& +\int_0^{\bar t_i} \Sigma_1(s,\bar t_i)\mathrm{d}L^1_s -\int_0^{\bar t_i} \Sigma_2(s,\bar t_i)\mathrm{d}L^2_s \Big)   \Big].
	\end{align*}
	Using again the notations in \eqref{DB notation}, the last quantity  can be rewritten as follows
	\begin{align*}
		\tilde N^{j+1}_i(R+\i u)
		= &\tilde D^{j,i}(u -\i R ,T)e^{-\int_0^{\bar t_i} A(s,\bar t_i)\mathrm{d}s}\\
		&\times
		\; \mathbb{E}_Q\Big[\exp\Big( \int_0^{\bar t_i} \tilde E_{j,i}(s,u-\i R,T) \mathrm{d}L^1_s + \int_0^{\bar t_i} \tilde F_{j,i}(s,u-\i R,T) \mathrm{d}L^2_s \Big)\Big].
	\end{align*}
	Since $1<r <2$, using \eqref{AssM2}, \eqref{AssM1}  and \eqref{AssM22}, we have  
	$$|r \sigma_2(s)+(r-1)\Sigma_2(s, \bar t_i)|\leq 2\sigma_2(s) +(r-1)|\Sigma_2(s, \bar t_i)|\leq 2\frac{M_2}{3} +\frac{M_2}{3} \leq M_2$$
	and 
	$$|(1-r)\Sigma_1(s,\bar t_i)| \leq M_1.$$
	Therefore, the above expectation exists.
	By the independence of $L^1$ and $L^2$ and \eqref{cumulant}, we deduce that \small
	\begin{eqnarray}
		\tilde N_i^{j+1}(R+\i u)
		&=& \tilde D^{j,i}(u-\i R  ,T) e^{-\int_0^{\bar t_i} A(s,\bar t_i)\mathrm{d}s}\notag\\
		&&\times \exp\Big( \int_0^{\bar t_i} \theta^1_s(\tilde E_{j,i}(s,u-\i R,T)) \mathrm{d}s + \int_0^{\bar t_i} \theta^2_s(\tilde{F}_{j,i}(s,u-\i R,T))    \mathrm{d}s \Big). \label{DB N'}
	\end{eqnarray}\normalsize
	On the other hand, for any $u\in \mathbb{R}^{j+1}$, it holds
	\[
	\hat{H}(u)=\int_{\mathbb{R}^{j+1}}e^{\i \langle u, x \rangle } e^{- \langle R , x \rangle} h(x)dx = \hat{h}(u+\i R). 
	\]
	Therefore, we get
	\begin{equation}
		\hat{h}(\i R - u) =\hat{H}(-u) 
		= \frac{\exp\big(-\delta \bar t_i (\i u_{j+1} +r)\big)}{(\i u_{j+1} +r-1)(\i u_{j+1}+r)} \prod_{l=2}^{j+1} \frac{2\beta \Delta  t_{l}}{u_{l-1}^2 + (\beta \Delta  t_{l})^2}. \label{DB fourier4}
	\end{equation}
	Finally, substituting \eqref{DB N'} and \eqref{DB fourier4} into \eqref{DB fourier3}, and using the definition of $ N^{j,i}(u,T)$ in \eqref{DB notation}, we obtain
	\[
	A_{j,i}^2= \frac{e^{-C(t_{j+1} -t_1)}}{(2\pi)^{j+1}}e^{-\int_0^{\bar t_i} A(s,\bar t_i)ds }\int_{\mathbb{R}^{j+1}}N^{j,i}(u,T) \mathrm{d}u.
	\]
	It remains to check the case $\bar t_i \le t_1$ for $i\in \{1, \ldots, N\}$. Note that for $\bar t_i \le t_1$, $\mathds{1}_{ \{\tau^s \geq \bar{t}_i\}}=\mathds{1}_{ \{\tau^s \geq t_1\}}=1$. Thus
	\begin{eqnarray*}
		\mathbb{E}_{Q\odot P}\Big[e^{-\int_0^{\bar t_i} r(u)\mathrm{d}u} {\rm DB}(\bar t_i)\Big]
		&=& \mathbb{E}_{Q\odot P}\Big[e^{-\int_0^{\bar{t}_i} r(u)\mathrm{d}u} \mathds{1}_{ \{\tau^s \geq \bar{t}_i\}}\max(I S(\bar{t}_i), Ie^{\delta \bar{t}_i})X_{\alpha,i}\Big]\\
		&=&\mathbb{E}_{Q\odot P}\Big[e^{-\int_0^{\bar{t}_i} r(u)\mathrm{d}u} \max(I S(\bar{t}_i), Ie^{\delta \bar{t}_i})X_{\alpha,i}\Big].
	\end{eqnarray*}
 By similar arguments as above, we get
	\begin{eqnarray*}
		\mathbb{E}_{Q\odot P}\Big[e^{-\int_0^{\bar{t}_i} r(u)\mathrm{d}u} \max(I S(\bar{t}_i), Ie^{\delta \bar{t}_i})X_{\alpha,i}\Big] =P(i) \mathbb{E}_{Q}\Big[e^{-\int_0^{\bar{t}_i} r(u)\mathrm{d}u} \max(I S(\bar{t}_i), Ie^{\delta \bar{t}_i})\Big].
	\end{eqnarray*}
	By the same arguments that have been used above, we also get 
	\begin{align*}
		\mathbb{E}_{Q}\Big[e^{-\int_0^{\bar{t}_i} r(u)\mathrm{d}u} \max(I S(\bar{t}_i), Ie^{\delta \bar{t}_i})\Big]
		 =& B(0, \bar t_i) \mathbb{E}_{Q^{\bar t_i}}\Big[\max(I S(\bar{t}_i), Ie^{\delta \bar{t}_i})\Big]\\
		=& B(0, \bar t_i)Ie^{\delta \bar{t}_i} \mathbb{E}_{Q^{\bar t_i}}\Big[(1+ \Big(\frac{IS_{\bar t_i}}{Ie^{\delta \bar{t}_i}}-1\Big)^+)\Big]\\
		=& B(0, \bar t_i)Ie^{\delta \bar{t}_i} + B(0, \bar t_i)Ie^{\delta \bar{t}_i} \mathbb{E}_{Q^{\bar t_i}}\Big[\Big(\exp\big(Y_{\bar t_i} -\delta \bar t_i\big) -1\Big)^+\Big]\\
		 =& B(0, \bar t_i)Ie^{\delta \bar{t}_i} + B(0, \bar t_i)Ie^{\delta \bar{t}_i} \mathbb{E}_{Q^{\bar t_i}}\big[h_1(Y_{\bar t_i})\big],
	\end{align*}
	with $h_1(x):=(e^{x -\delta \bar t_i }-1)^+$. For some $1<r<2$, let 
	$H_1 (x) :=  (e^{x -\delta \bar t_i }-1)^+ e^{-r x}.$ Thanks to \cite[Theorem 3.2]{Eberlein2010}, we have  
	\begin{equation}
		\mathbb{E}_{Q^{\bar t_i}}\big[h_1\big(Y_{\bar t_i}\big)\big] = \frac{1}{2\pi}\int_{\mathbb{R}} \tilde N_i(r+\i u)\hat{h}_1(\i r-u)\mathrm{d}u, \label{DB fourier5}
	\end{equation}
	where
	\[
	\tilde N_i(r+\i u):=\mathbb{E}_{Q^{\bar t_i}}\big[e^{(r+ \i u )Y_{\bar t_i}}\big].
	\]
	Once again similar arguments as before, and the definitions in \eqref{DB notation}, we have
	\begin{align}
		&\tilde N_i(r+\i u)\notag\\
		=& \exp\Big((r +\i u)w_{\bar t_i} -\int_0^{\bar t_i} A(s,\bar t_i)\mathrm{d}s \Big)\; \mathbb{E}_Q\Big[\exp\Big(\int_0^{\bar t_i}E_i(s,u)\mathrm{d}L^1_s + \int_0^{\bar t_i}F_i(s,u)\mathrm{d}L^2_s\Big)\Big]\notag\\
		=& \exp\Big((r +\i u)w_{\bar t_i} -\int_0^{\bar t_i} A(s,\bar t_i)\mathrm{d}s \Big)\;\exp\Big( \int_0^{\bar t_i}\theta^1_s(E_i(s,u))\mathrm{d}s + \int_0^{\bar t_i}\theta^2_s(F_i(s,u))\mathrm{d}s\Big). \label{DB Ni}
	\end{align}
	On the other hand, we have
	\begin{equation}
		\hat{h}_1(\i r - u) =\hat{H}_1(-u) 
		= \frac{\exp\big(-\delta \bar t_i(\i u +r)\big)}{(\i u +r-1)(\i u+r)}. \label{DB fourier6}
	\end{equation}
	Substituting \eqref{DB Ni} and \eqref{DB fourier6} into \eqref{DB fourier5},  we obtain
	\[
	\mathbb{E}_{Q^{\bar t_i}}\big[h_1\big(Y_{\bar t_i}\big)\big]=\frac{1}{2\pi}e^{-\int_0^{\bar t_i} A(s,\bar t_i)\mathrm{d}s}\int_{\mathbb{R}} N^i(u)\mathrm{d}u=A_{0,i},
	\]
	Thus, 
	\begin{equation*}
		\mathbb{E}_{Q}\Big[e^{-\int_0^{\bar{t}_i} r(u)\mathrm{d}u} \max(I S(\bar{t}_i), Ie^{\delta \bar{t}_i})\Big] =B(0, \bar t_i)Ie^{\delta \bar{t}_i} + B(0, \bar t_i)Ie^{\delta \bar{t}_i}A_{0,i}.
	\end{equation*} 
	This implies that 
	\[
	\mathbb{E}_{Q\odot P}\Big[e^{-\int_0^{\bar t_i} r(u)\mathrm{d}u} {\rm DB}(\bar t_i)\Big]=P(i)( Ie^{\delta \bar{t}_i}B(0, \bar t_i) + Ie^{\delta \bar{t}_i}B(0, \bar t_i) A_{0,i}).
	\]

\section{Numerical implementation}
In this section, we numerically compute the price of the life insurance contract discussed in our paper. Following the approach of \cite{Eberlein-R}, we assume that the function $\sigma_2(\cdot)$ in \eqref{Stock price} is a positive constant, i.e. $\sigma_2(s)=\sigma_2$. The L\'evy processes $L^1$ and $L^2$ are modeled as Normal Inverse Gaussian (NIG) processes.
The cumulant function for these NIG processes is given by the following expression:
\[
\theta^i(z)= \delta_i\Big(\sqrt{\alpha_i^2 -\beta_i^2} - \sqrt{\alpha_i^2-(\beta_i +z)^2}\Big), \quad -\alpha_i -\beta_i <Re(z) < \alpha_i -\beta_i,
\] 
for $i\in \{1,2\}$. At this step, we need to determine the constants $\alpha_1, \beta_1, \delta_1, \alpha_2, \beta_2,\delta_2$, the constant $\sigma_2$, and the constants $a,b$ given in \eqref{sigma un} and \eqref{beta un}, respectively. These constants are taken from the calibration done in \cite{Eberlein-R} and are provided in the following table. For the mortality model, to simplify notation, we will use $\phi^i=(\lambda_i, \mu_i, \sigma_i, \epsilon_i, \kappa_i)$ as defined in Theorem \ref{joint density}. All the parameters for the contract are also summarised in the table below

\begin{table} [H]
	\centering
	\caption{Parameters of the VA contract}
	\small\begin{tabular}{ | c c | c c c | c c| c c | c c c| }
		\hline
		\multicolumn{2}{c}{Insurance policy}   & \multicolumn{5}{c}{Financial market model}  & \multicolumn{2}{c}{Surrender model} & \multicolumn{3}{c}{Mortality model} \\ 
		\hline \hline
		& & & $L_t^1$ &  $L_t^2$ & & & & & & & \\
		$I$ & 100 & $\alpha_i$ & $3.12$ & $3.31$ & & &  $\beta$ & $0.02$ & $\lambda_i$ & $0.3$ & $0.3$\\
		$T$ & 3, 10 years & $\beta_i$ & $1.87$ &  $-1.43$ & & & $C$ & $0.005$ & $\mu_i$ & $0.07$ & $0.05$\\
		$\delta$ & 0.02 p.a. &$\delta_i$& $9.24$ & $6.21$ & & & & & $\sigma_i$ & $0.005$ & $0.002$\\
		$\tilde{P}(t_l)$ &$0.95+0.05t_l/T$ & & &  & $a$ &  $0.00258$ & & & $\epsilon_i$ & $1$ & $1$\\
		$t_l -t_{l-1}$ & 1 year &  &  &  & $b$ & $0.00143$ &  & & $\kappa_i$ & $0.5$ & $0.5$ \\
		$\bar{t}_i-\bar{t}_{i-1}$ & 6 months &  & &  & $\sigma_2$ & $0.1559$ & & & & &\\ 
		\hline
	\end{tabular}
	\caption*{}
	\label {table: Params_VA_contract}
\end{table}

\subsection{Implementation}

The high peaks around the origin in both functions $M$ and $N$,
as shown in Figures \ref{Real_imag_MM} and \ref{Real_imag_N}, create significant challenges for standard Monte Carlo integration. These sharp variations result in high variance, making it difficult to achieve accurate estimates without requiring an impractically large number of samples. This issue is well-documented in the literature, where integrals involving localized peaks or regions with rapid changes are prone to inefficiencies in Monte Carlo methods due to their disproportionate impact on the estimator's overall variance (see \cite{Caflisch98, Liu01})

To address these challenges, we employ Monte Carlo integration with importance sampling, which enhances the efficiency of evaluating multidimensional integrals. While quadrature methods are suitable for lower-dimensional cases (e.g., when $T\leq 4$), they become impractical for higher-dimensional problems due to the exponential growth in computational cost. Monte Carlo methods, particularly with importance sampling, provide a practical solution by reducing variance through the strategic sampling of more ``important" regions of the integrand (see \cite{RoCa10}). Generally, the Monte Carlo estimate of a high-dimensional integral is given by 
\begin{align*}
	I = \int_{\Omega}^{}f(u)\mathrm{d}u
\end{align*}
over the domain $\Omega\subseteq \mathbb{R}^d$ and can be approximated using Importance Sampling (see \cite{ElMar22}) by 
\begin{align*}
	I = \frac{1}{N}\sum_{i=1}^{N}\frac{f(x_i)}{p(x_i)},
\end{align*}
where $x_i$ is drawn randomly in $\Omega$ with probability density function $p(x)$, and $N$ is the total number of samples. Following \cite{VA1}, we obtain the accuracy of the Monte Carlo integral, $I_{MC}$, in terms of absolute value bias with respect to the quadrature integral, $I_Q$, i.e.
\begin{align}\label{Bias}
	100\times\frac{\left|I_{MC}-I_Q\right|}{I_{MC}}, 
\end{align} 
and the percentage error 
\begin{align}\label{Std_error}
	100\times\frac{1}{I_{MC}}\sqrt{\frac{\sum_{i=1}^{N}(I_i-I_{MC})^2}{N(N-1)}}.
\end{align}
The next two graphs show the behaviour of the integrands, $M$ and $N$ defined in \eqref{E}.

\begin{figure}[H]
	\centering
	\includegraphics[width=16cm,height=6cm]{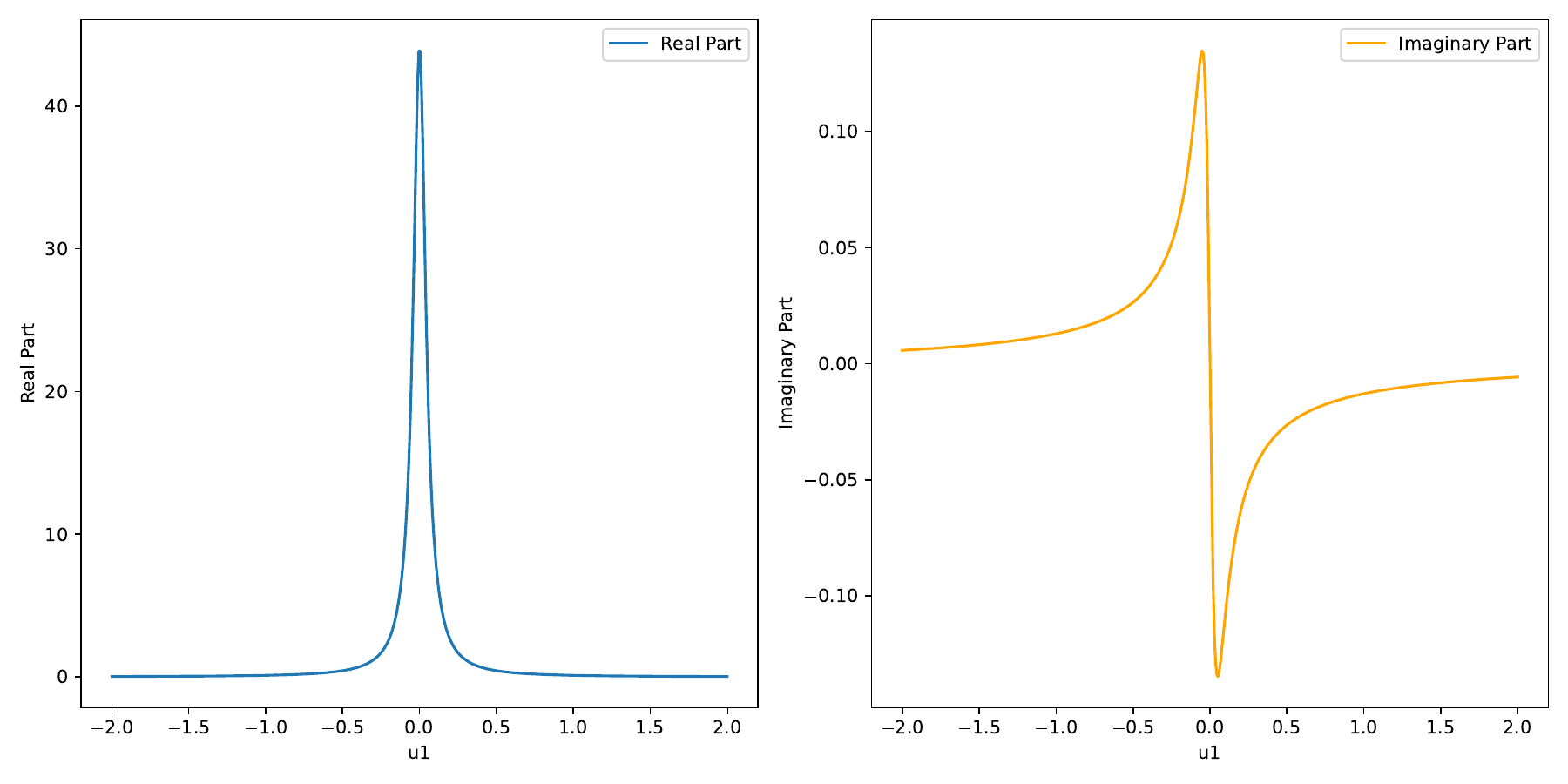}
	\caption{$M(u,T), T = 3\text{ years }(K=2)$. Left panel : real part of $M(u,T)$. Right panel: imaginary part of $M(u,T)$. Parameters: table \ref{table: Params_VA_contract}.}
	\label{Real_imag_MM}
\end{figure}

\begin{figure}[H]
	\centering
	\includegraphics[width=16cm,height=6cm]{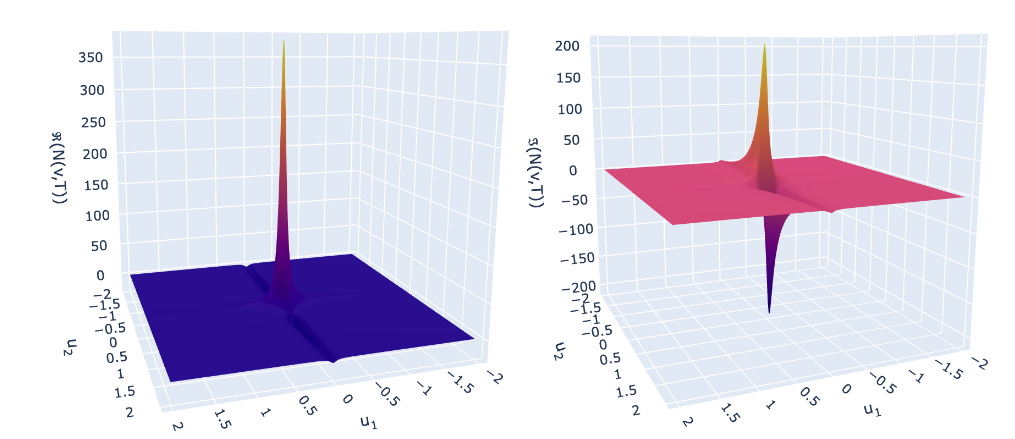}
	\caption{$N(v,T), T = 3\text{ years }(K=2)$. Left panel : real part of $N(v,T)$. Right panel: imaginary part of $N(v,T)$. Parameters: table \ref{table: Params_VA_contract}.}
	\label{Real_imag_N}
\end{figure}

\subsection{Benchmarking and testing}

To provide a reliable estimate and benchmark for the value of the integrals in $A_1$, $A_2$, $B_i^1, B_i^2$, $A^1_{j,i}$, and $A^2_{j,i}$ obtained through the Monte Carlo procedure\footnote{The integration was conducted on an Apple Mac Mini equipped with an M2 Pro chip, 32 GB of unified memory, a 12-core CPU, and a 19-core GPU, utilizing parallel computing to enhance performance.}, we implement the integration in lower dimension (for $T=3$ in the case of integrals related to the GMAB price and DB price, and $T=4$ for integrals related to SB price) using adaptive quadrature methods in Python (see https://scipy.org/citing-scipy/) and compare the results in Table \ref{table: Benchmark_MC_vs_quad}.

We implement an example of a contract with maturity $T = 3$, (i.e. $K = 2$) and annually spaced termination dates $t_i$. This implies that $A_1$ and $A_2$ are, respectively one- and two-dimensional integrals, $B_1^1 = 1$ and $B_1^2$ is one-dimensional integral. Moreover, $A_{0,i}$ and $A_{j,i}^1$ are one-dimensional integrals, and $A_{j,i}^2$ is a two-dimensional integral. All numerical results from the benchmark shown in Table \ref{Benchmark_table}, confirm that the biases and standard errors are below $0.5\%$.

\begin{table}[H]
	\caption{Benchmarking Monte Carlo Integration with IS (the case for GMBA/SB/DB)}
	\label{Benchmark_table}
	\centering
	
	\setlength{\tabcolsep}{15pt} 
	\small\begin{tabular}{@{} c c c c c c c @{} >{\kern\tabcolsep}l @{}}    
		\toprule
		&  & $N$ & \multicolumn{1}{c}{Quadrature} & \multicolumn{3}{c}{Monte Carlo integration} &  \\
		\cmidrule(lr){3-3} \cmidrule(lr){4-4} \cmidrule(lr){5-8}
		$\text{Benefit}$ & $T =3,~~ K=2$ & value & value & Value & Bias(\%) & Std. Error (\%) &  \\
		\midrule
		GMAB & $A_1$ & $75$ million & 0.9907 & 0.9906 & 0.0029 & 0.002 & \\ 
		& $A_2$ &  & $0.1345$ & 0.1343& 0.1257 & 0.0935 &\\ 
		& 	\vtop{\hbox{\strut Processor}\hbox{\strut run time}} & & 6515.36 & 1415.6 &  &  &\\
		\midrule
		SB & $B_2^1$ & $50$ million & 1 & 1 & 0 & - &  \\ 
		& $B_2^2$ &  & 0.991103 & 0.99111 & 0.00062 & 0.0024 &\\ 
		& \vtop{\hbox{\strut Processor}\hbox{\strut run time}} & &  0.3513 & 1415.3997 &  &  &\\
		\midrule
		DB & $A^1_{2,3}$ & $50$ million & 0.9481 & 0.9482 & 0.0092  & 0.0054 & \\ 
		& $A^2_{2,3}$ &  & 0.1089 & 0.1087& 0.1612 & 0.3456 &\\ 
		& \vtop{\hbox{\strut Processor}\hbox{\strut run time}} & & 692.9797 & 1550.6255 &  &  &\\
		\bottomrule
	\end{tabular}
	\caption*{\small{Parameters: table \ref{table: Params_VA_contract}. `integrate' module in Scientific Python (Scipy): built-in functions \textit{quad}, \textit{dblquad}, \textit{tplquad}, \textit{nquad}. Bias\eqref{Bias} / standard error \eqref{Std_error} expressed as percentage of the actual value. Processor (GPU + CPU) overall run time  expressed in seconds. Sample size: $N$.}}
	\label{table: Benchmark_MC_vs_quad}
\end{table}

To  ensure that the MC integrals in the case of the SB price are reliable in the multidimension case, we compute the relevant integrals in the $SB$ price when $T=4$ (i.e $K=3$) and compare the results with the quadrature method in the table below.

\begin{table}[H]
	\caption{Benchmarking Monte Carlo Integration with IS for SB when $T = 4$}
	\centering
	
	\setlength{\tabcolsep}{15pt} 
	\small\begin{tabular}{@{} c c c c c c c @{} >{\kern\tabcolsep}l @{}}    
		\toprule
		&  & $N$ & \multicolumn{1}{c}{Quadrature} & \multicolumn{3}{c}{Monte Carlo integration} &  \\
		\cmidrule(lr){3-3} \cmidrule(lr){4-4} \cmidrule(lr){5-8}
		$\text{Benefit}$ & $T = 4, ~K-1=2$ & value & value & Value & Bias(\%) & Std. Error (\%) &  \\
		\midrule
		SB & $B_2^1$ & $25$ million & 0.99098 & 0.99103 & 0.0048 & 0.0034 & \\ 
		& $B_2^2$ &  & 0.98064 & 0.98064& 0.0002 & 0.0049 &\\ 
		& \vtop{\hbox{\strut Processor}\hbox{\strut run time}} & & $814.8838$ & 5584.1351 &  &  &\\
		\bottomrule
	\end{tabular}
	\caption*{\small{Parameters: table \ref{table: Params_VA_contract}. `integrate' module in Scientific Python (Scipy): built-in functions \textit{quad}, \textit{dblquad}, \textit{tplquad}, \textit{nquad}. Bias\eqref{Bias} / standard error \eqref{Std_error} expressed as percentage of the actual value. Processor (GPU + CPU) overall run time  expressed in seconds. Sample size : $N$.}}
	\label{table: Benchmark_SB_vs_quad_t_is_4}
\end{table}

The following table gives the prices of the life insurance policies for different maturity times.

\begin{table}[H]
	\centering
	\caption{VA by Monte Carlo integration with IS}
	\small\begin{tabular}{ | c | c  c c c | }
		\hline
		$T$ & GMAB &  DB & SB & VA\\
		\hline \hline
		3 years & $13.125$ & $87.2772$ &$0.2102$  & $100.6124$\\
		(Mean std error\%) & $0.0477$ & $0.1856$ & $0.0024$ &   \\
		Processor run time & $1429.5$ & $8426.576$ & $1418.17$ &  \\
		$N$ & $75$ million & $50$ million & $50$ million &  \\
		\midrule
		10 years &$0.0429$ & $108.9348$ & $0.6888$ &  $109.628$\\
		(Mean std error\%) & $0.0212$ & $0.0626$ & $0.0072$ &   \\
		Processor run time\tablefootnote{
			As the dimensions of integration and the number of integrals to estimate increase for both the SB and DB cases, the computational load on the processor and memory significantly intensifies. To mitigate this issue and accelerate the Monte Carlo integration process, the sample sizes for SB and DB were strategically reduced. Specifically, for $T=10$, sample sizes of $N=5\times10^{7}, N=2.5\times10^{7}, N=1\times 10^{7}$ were employed to estimate the integrals under GMAB, SB, and DB, respectively. } & $1233.28$ & $10832.67$ & $81535.25$ &  \\
		$N$ & $50$ million & $15$ million & $25$ million &  \\
		\hline
	\end{tabular}
	\caption*{Parameters: table \ref{table: Params_VA_contract}. Mean Standard error accross all integration procedure.\\ Overall CPU+GPU processing time in seconds. Sample size: $N$.}
	\label {table: VA_integs_MC}
\end{table}

\subsection{Numerical interpretation}
In this section, we provide some numerical interpretation. 
\begin{figure}[H]
	\centering
	\includegraphics[width=16cm,height=10cm]{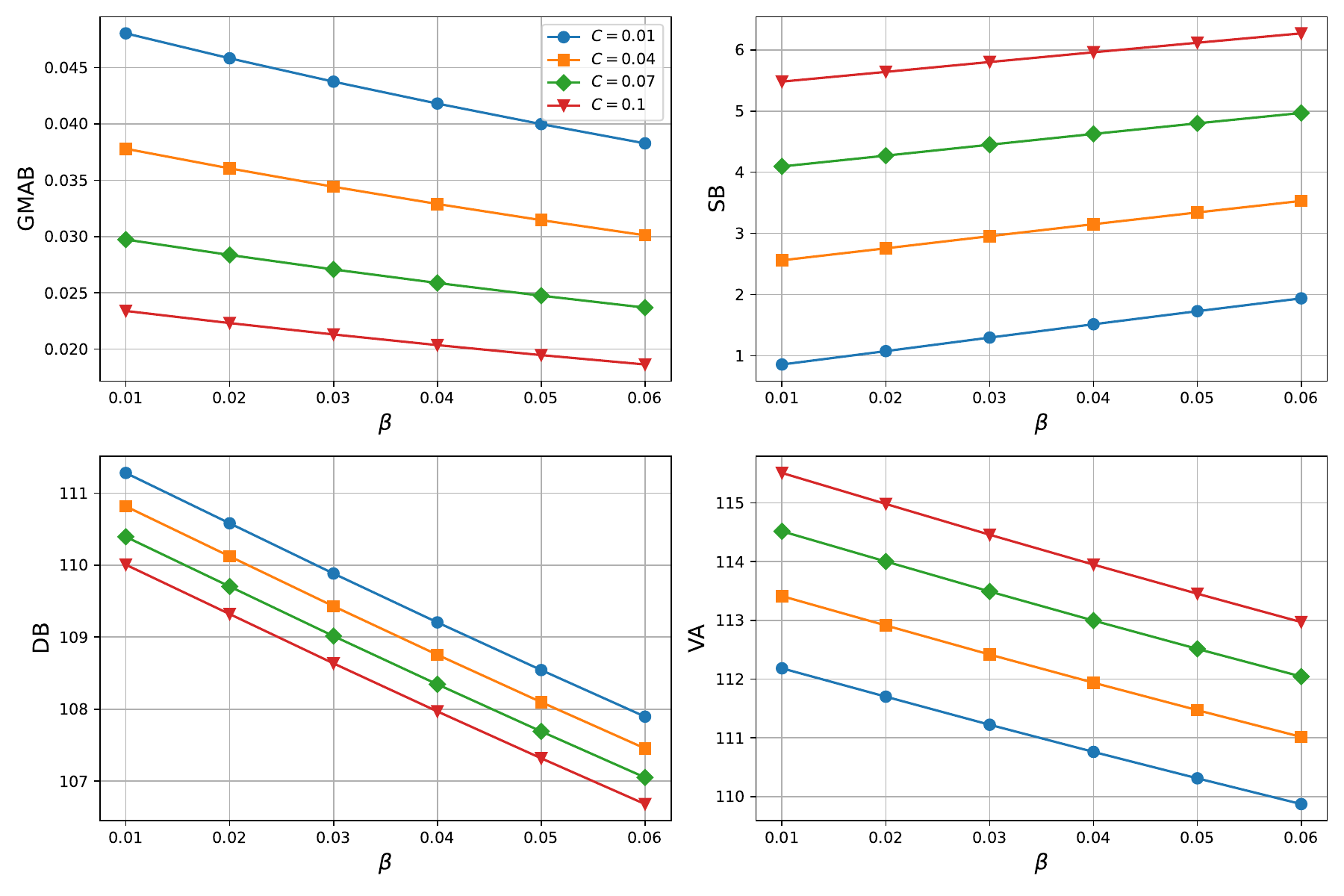}
	\caption{Sensitivity analysis : the parameters $(\beta,C)$ Top panels: left-hand-side—GMAB; right-hand-side: SB. Bottom panels: left-hand-side—DB; right-hand-side—VA. Maturity: T = 10 years.}.
	\label{fig:Sensitivity_P_to_beta_C}
\end{figure}
Figure \ref{fig:Sensitivity_P_to_beta_C} illustrates the impact of $\beta$ and $C$ on the values of the SB, GMAB, DB, as well as the overall value of the life policy. A higher value of $\beta$ corresponds to a greater intensity of surrender, which increases the value of the SB. As $\beta$ rises, the probability of surrender also increases, leading to a decrease in the values of the GMAB and DB, as shown in the graph. Additionally, the policyholder's overall value decreases as $\beta$ increases. A similar observation holds for changes in $C$: as $C$ increases, the values of SB, GMAB, and DB are affected. However, in this case, the policy's value increases, which makes sense as a higher $C$ discourages policyholders from surrendering the contract for non-essential personal reasons
\begin{figure}[H]
	\centering
	\includegraphics[width=16cm,height=10cm]{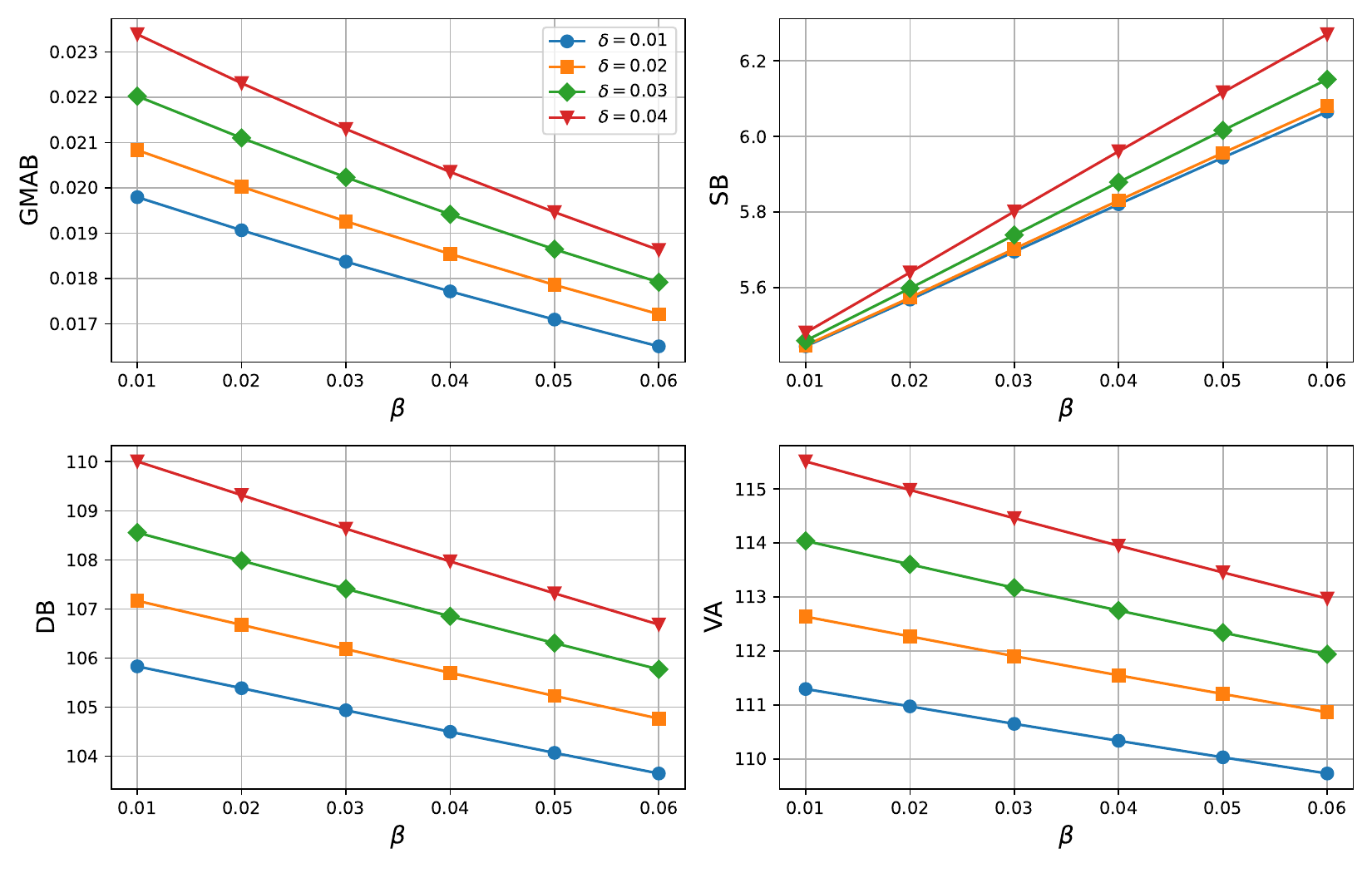}
	\caption{Sensitivity analysis : the parameters $(\beta,\delta)$ Top panels: left-hand-side—GMAB; right-hand-side: SB. Bottom panels: left-hand-side—DB; right-hand-side—VA. Maturity: T = 10 years.}
	\label{fig:Sensitivity_P_to_beta_delta}
\end{figure}

Figure \ref{fig:Sensitivity_P_to_beta_delta} illustrates the impact of $\beta$ and $\delta$ on the values of the SB, GMAB, DB, as well as the overall value of the life policy. As $\delta$ increases, the values of the SB, GMAB, and DB also rise, leading to an increase in the overall value of the policy.

\begin{figure}[H]
	\centering
	\includegraphics[width=14.5cm,height=5cm]{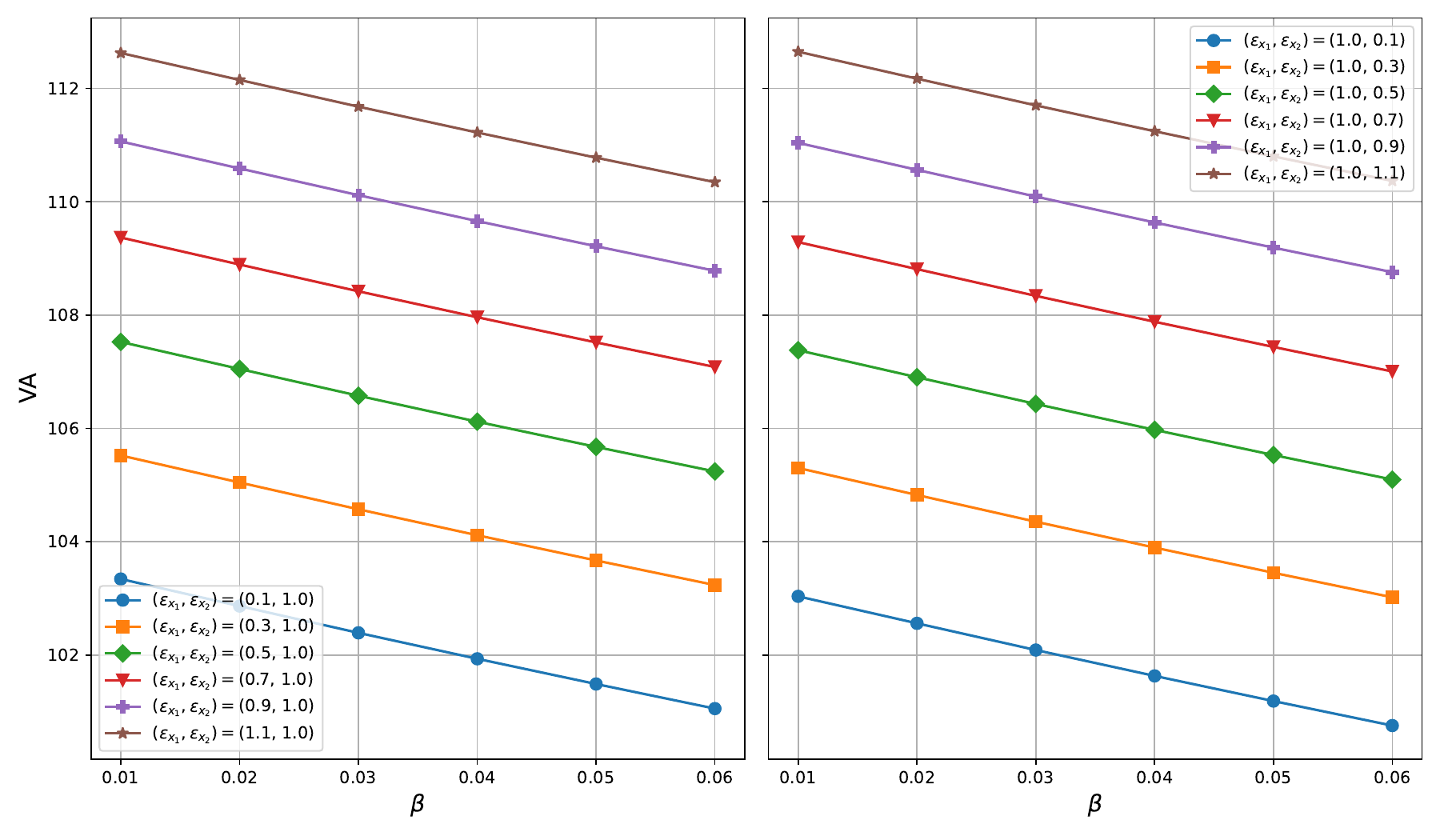}
	\caption{Sensitivity analysis : the mortality parameter $(\epsilon_{x1},\epsilon_{x2})$  Maturity: T = 10 years.}
	\label{fig:Sensitivity_P_to_mort_epsilon}
\end{figure}

Figure \ref{fig:Sensitivity_P_to_mort_epsilon} illustrates the impact of the short-term dependence structure of broken heart syndrome on the life policy's value. As the mortality jump $\epsilon_x$ increases, the value of the contract rises. This is consistent with pricing intuition. A higher $\epsilon_x$ indicates a greater likelihood that one spouse will die shortly after the other, leading to either a large lump-sum payment or two consecutive claims in quick succession. This increase in claims within a short time frame raises the financial risk for the insurer. To mitigate this, the value of the policy increases, compensating for the added risk.

\begin{figure}[H]
	\centering
	\includegraphics[width=14.5cm,height=6cm]{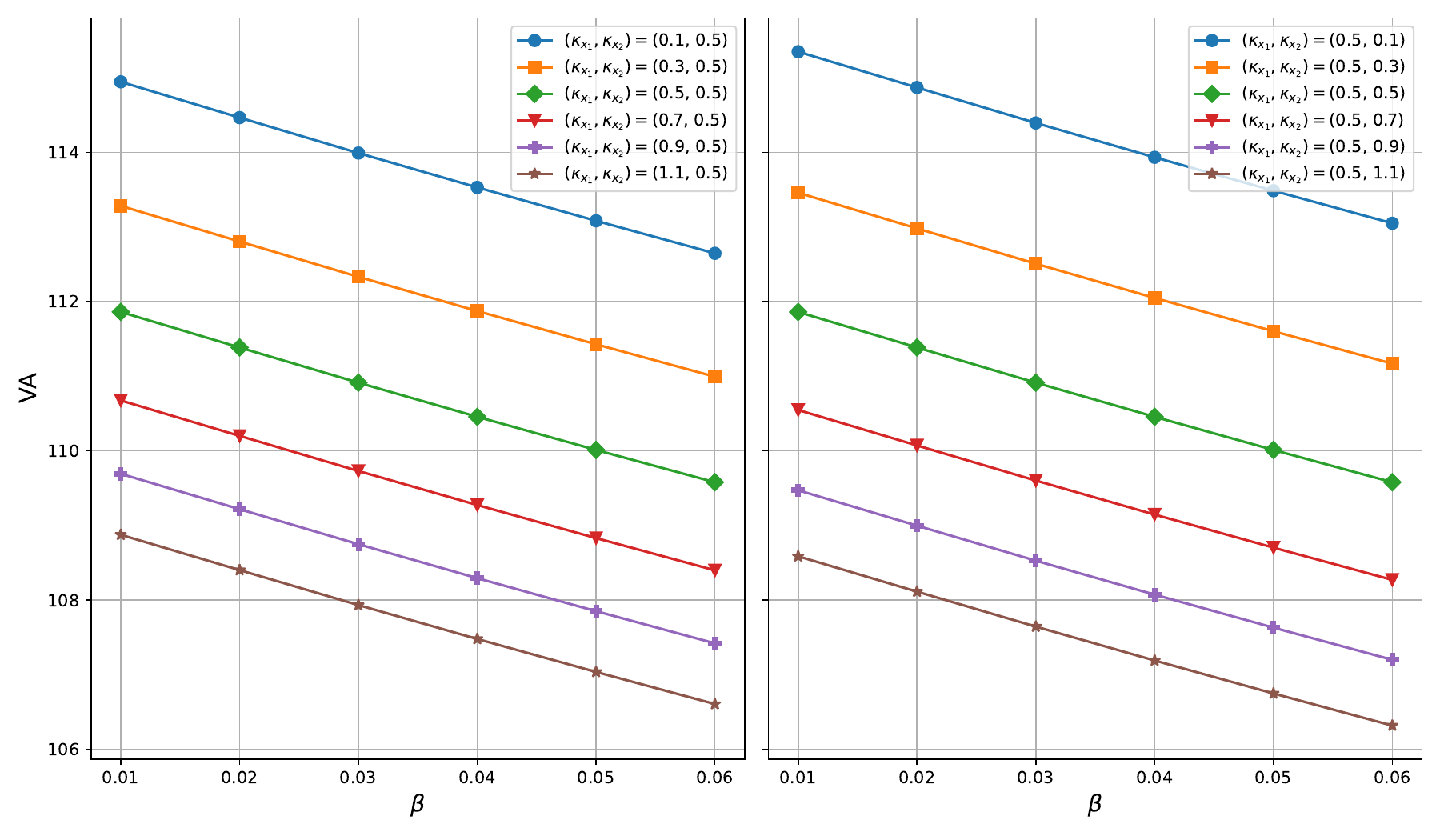}
	\caption{Sensitivity analysis : the mortality parameter $(\kappa_{x1},\kappa_{x2})$-VA. Maturity: T = 10 years.}
	\label{fig:Sensitivity_P_to_mort_kappa}
\end{figure}

Figure \ref{fig:Sensitivity_P_to_mort_kappa} shows the effect of the long-term dependence structure on the life policy's value. As $\kappa_q$ increases, the contract value decreases, which is consistent with pricing intuition. From its definition in \eqref{rq}, when $\kappa_q$ rises, the bereavement effect $r_q$ decreases. This reduction in $r_q$ implies a weaker impact of the broken heart syndrome on the surviving spouse's mortality, leading to a lower risk. Consequently, the policy price decreases to reflect this reduced risk

\appendix

\section{Some important  representations}\label{app:prelim}

We recall the following notation from Section \ref{secgenset}. \begin{eqnarray*}
	A(u,T)= \int_u^T \alpha(u,s)\mathrm{d}s, \ \ \ \Sigma_1(u,T)= \int_u^T \sigma_1(u,s)\mathrm{d}s,\ \ \
	\Sigma_2(u,T) = \int_u^T \beta(u,s)\mathrm{d}s. 
\end{eqnarray*}
We have the following  representation.
\begin{lemma}\label{lemm1a} Let $f(t,s)$ be the quantity defined in \eqref{forward rate}. For any $0\leq t\leq T$, we have \label{lem:HJM}
	$$
	-\int_{t}^T f(t,s)\mathrm{d}s = \int_0^{t}r(s)\mathrm{d}s -\int_0^T f(0,s)\mathrm{d}s - \int_0^{t}A(u,T)\mathrm{d}u + \int_0^{t}\Sigma_1(u,T)\mathrm{d}L^1_u - \int_0^{t}\Sigma_2(u,T)\mathrm{d}L^2_u.
	$$
\end{lemma}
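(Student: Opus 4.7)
The plan is to establish the identity by a direct computation that applies the (stochastic) Fubini theorem twice: once to the integral $\int_t^T f(t,s)\mathrm{d}s$ and once to the integral $\int_0^t r(s)\mathrm{d}s$. Both sides will then be expressed in terms of the same building blocks, namely $\int_0^t f(0,s)\mathrm{d}s$, $\int_t^T f(0,s)\mathrm{d}s$, the differences $A(u,T)-A(u,t)$, $\Sigma_1(u,T)-\Sigma_1(u,t)$ and $\Sigma_2(u,T)-\Sigma_2(u,t)$, and reconciling the two expressions will yield the claim.

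More precisely, I would first substitute the expression \eqref{forward rate} for $f(t,s)$ into $-\int_t^T f(t,s)\mathrm{d}s$ and interchange the order of integration, so that for every $u\in[0,t]$ the inner $s$-integral runs over $[t,T]$. Writing
\[
\int_t^T\alpha(u,s)\mathrm{d}s=A(u,T)-A(u,t),\qquad \int_t^T\sigma_1(u,s)\mathrm{d}s=\Sigma_1(u,T)-\Sigma_1(u,t),
\]
and similarly for $\beta$, this produces
\begin{align*}
-\int_t^T f(t,s)\mathrm{d}s
=&\,-\int_t^T f(0,s)\mathrm{d}s-\int_0^t\bigl(A(u,T)-A(u,t)\bigr)\mathrm{d}u\\
&\,+\int_0^t\bigl(\Sigma_1(u,T)-\Sigma_1(u,t)\bigr)\mathrm{d}L^1_u-\int_0^t\bigl(\Sigma_2(u,T)-\Sigma_2(u,t)\bigr)\mathrm{d}L^2_u.
\end{align*}

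Next I would repeat the same manoeuvre on $r(s)=f(s,s)$: inserting \eqref{forward rate} (with $T$ replaced by $s$) and interchanging the order of integration produces
\[
\int_0^t r(s)\mathrm{d}s=\int_0^t f(0,s)\mathrm{d}s+\int_0^t A(u,t)\mathrm{d}u-\int_0^t\Sigma_1(u,t)\mathrm{d}L^1_u+\int_0^t\Sigma_2(u,t)\mathrm{d}L^2_u.
\]
Subtracting these two identities and using $\int_0^t f(0,s)\mathrm{d}s-\int_t^T f(0,s)\mathrm{d}s=\int_0^t f(0,s)\mathrm{d}s+\int_0^T f(0,s)\mathrm{d}s-2\int_0^T\ldots$ -- or more directly, combining the $f(0,s)$ terms via $\int_0^T=\int_0^t+\int_t^T$ -- one sees that every $A(u,t),\Sigma_1(u,t),\Sigma_2(u,t)$ term cancels, leaving exactly the identity asserted in the lemma.

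The only non-trivial step is the application of the stochastic Fubini theorem to swap the deterministic $s$-integral with the Lévy stochastic integrals $\int_0^\cdot \sigma_1(u,s)\mathrm{d}L^1_u$ and $\int_0^\cdot \beta(u,s)\mathrm{d}L^2_u$. This is where I expect the main (though still routine) obstacle, and it is dealt with by invoking the standard stochastic Fubini theorem for semimartingales once the boundedness and measurability hypotheses on $\alpha(\cdot,\cdot),\sigma_1(\cdot,\cdot),\beta(\cdot,\cdot)$ (already assumed in the setup of \eqref{forward rate}) are in place; in particular the exponential choices \eqref{sigma un}--\eqref{beta un} and the bounds \eqref{AssM1}--\eqref{AssM22} guarantee the joint integrability conditions required. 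Once this commutation is justified, the remainder of the proof is algebraic bookkeeping of the sort sketched above.
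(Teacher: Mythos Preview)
Your proposal is correct and follows essentially the same route as the paper's proof: both arguments substitute \eqref{forward rate} into $-\int_t^T f(t,s)\,\mathrm{d}s$ and into $\int_0^t r(s)\,\mathrm{d}s=\int_0^t f(s,s)\,\mathrm{d}s$, invoke the stochastic Fubini theorem to swap the $s$-integral with the $\mathrm{d}u$, $\mathrm{d}L^1_u$, $\mathrm{d}L^2_u$ integrals, and then combine so that the $A(u,t)$, $\Sigma_1(u,t)$, $\Sigma_2(u,t)$ pieces cancel. The only cosmetic difference is that the paper inserts $\int_0^t r(s)\,\mathrm{d}s-\int_0^t r(s)\,\mathrm{d}s$ at the outset and carries the whole computation through as a single chain of equalities, whereas you compute the two integrals separately and then recombine; the content is identical.
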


\begin{proof}
	
	Applying the Fubini's theorem for stochastic integrals, we obtain
	\begin{align*}
	& -\int_{t}^T f(t,s)\mathrm{d}s \\
	 =& \int_0^{t}r(s)\mathrm{d}s -\int_0^{t} r(s)\mathrm{d}s - \int_{t}^T f(0,s)\mathrm{d}s - \int_{t}^T \int_0^{t}\alpha(u,s)\mathrm{d}u\, \mathrm{d}s\\
		& +\int_{t}^T\int_0^{t}\sigma_1(u,s)\mathrm{d}L^1_u\mathrm{d}s -\int_{t}^T\int_0^{t}\beta(u,s)\mathrm{d}L^2_u\mathrm{d}s \\
		=& \int_0^{t}r(s)\mathrm{d}s -\Big[\int_0^{t}f(0,s)\mathrm{d}s + \int_0^{t}(\int_0^s \alpha(u,s)\mathrm{d}u)\mathrm{d}s - \int_0^{t}(\int_0^s \sigma_1(u,s)\mathrm{d}L^1_u)\mathrm{d}s  \\
		&+ \int_0^{t}(\int_0^s \beta(u,s)\mathrm{d}L^2_u)\mathrm{d}s\Big] - \int_{t}^T f(0,s)\mathrm{d}s - \int_0^{t}\int_{t}^T \alpha(u,s)\mathrm{d}s\, \mathrm{d}u +\int_0^{t}\int_{t}^T\sigma_1(u,s) \mathrm{d}s\,\mathrm{d}L^1_u\\
		& - \int_0^{t}\int_{t}^T\beta(u,s) \mathrm{d}s\,\mathrm{d}L^2_u\\
		=& \int_0^{t}r(s)\mathrm{d}s  -\Big[\int_0^{t}f(0,s)\mathrm{d}s + \int_0^{t}\int_u^{t} \alpha(u,s)\mathrm{d}s\,\mathrm{d}u - \int_0^{t}\int_u^{t} \sigma_1(u,s)\mathrm{d}s\,\mathrm{d}L^1_u + \int_0^{t}\int_u^{t} \beta(u,s)\mathrm{d}s\,\mathrm{d}L^2_u\Big] \\
		& - \int_{t}^T f(0,s)\mathrm{d}s - \int_0^{t}\int_{t}^T \alpha(u,s)\mathrm{d}s\mathrm{d}u +\int_0^{t}\int_{t}^T\sigma_1(u,s) \mathrm{d}s\,\mathrm{d}L^1_u - \int_0^{t}\int_{t}^T\beta(u,s) \mathrm{d}s\mathrm{d}L^2_u\\
		=& \int_0^{t}r(s)\mathrm{d}s - \int_{0}^T f(0,s)\mathrm{d}s - \int_0^{t}\int_{u}^T \alpha(u,s)\mathrm{d}s\mathrm{d}u +\int_0^{t}\int_{u}^T\sigma_1(u,s) \mathrm{d} s\mathrm{d}L^1_u - \int_0^{t}\int_{u}^T\beta(u,s) \mathrm{d} s\mathrm{d}L^2_u.
	\end{align*}
The result follows from the definition of $A(u,T), \Sigma_1(u,T)$ and $\Sigma_2(u,T)$.

\end{proof}

Using the previous lemma, we get  the following representation of $D(t)$ defined in \eqref{Dt}.
\begin{align}
	D(t)=& \int_0^t r(s)\mathrm{d}s+\int_0^t \sigma_2(s)\mathrm{d}L^2_s -\omega(t)  -p(t) + \int_{t}^Tf(t,s)\mathrm{d}s - \delta T  \notag \\
	=&  -p(t) - \delta T+\int_0^t r(s)\mathrm{d}s+\int_0^t \sigma_2(s)\mathrm{d}L^2_s -\omega(t)- \int_0^{t}r(s)\mathrm{d}s +\int_0^T f(0,s)\mathrm{d}s  \notag\\
	&+ \int_0^{t}A(u,T)\mathrm{d}u - \int_0^{t}\Sigma_1(u,T)\mathrm{d}L^1_u + \int_0^{t}\Sigma_2(u,T)\mathrm{d}L^2_u \notag\\
	=& -p(t) - \delta T+\int_0^t \sigma_2(s)\mathrm{d}L^2_s -\omega(t)+\int_0^T f(0,s)\mathrm{d}s + \int_0^{t}A(u,T)\mathrm{d}u  \notag\\
	&- \int_0^{t}\Sigma_1(u,T)\mathrm{d}L^1_u+ \int_0^{t}\Sigma_2(u,T)\mathrm{d}L^2_u. \label{D:appendix}
\end{align}
By setting $t=T$ in the previous lemma, we get 
\begin{equation}
	0 = \int_0^{t}r(s)\mathrm{d}s -\int_0^t f(0,s)\mathrm{d}s - \int_0^{t}A(s,t)\mathrm{d}s + \int_0^{t}\Sigma_1(s,t)\mathrm{d}L^1_s - \int_0^{t}\Sigma_2(s,t)\mathrm{d}L^2_s, \label{interest}
\end{equation}
therefore, we deduce the following representation for the bank account (see (49) in \cite{Eberlein-R}),
\begin{equation}
	B(t) = \frac{1}{B(0,t)}\exp\Big( \int_0^t A(s,t)\mathrm{d}s - \int_0^t \Sigma_1(s,t)\mathrm{d}L^1_s +\int_0^t \Sigma_2(s,t)\mathrm{d}L^2_s \Big). \label{Bank ac}
\end{equation}


\begin{thebibliography}{9}


\bibitem{Artzner}
P. Artzner, K-T. Eisele, T. Schmidt.
Insurance-Finance Arbitrage.
{\it Mathematical Finance}, 2023, DOI: 10.1111/mafi.12412

\bibitem{Baci}
A. R. Bacinello, P. Millossovich, A. Olivieri, E. Pitacco.
Variable annuities: A unifying valuation approach. {\it Insurance: Mathematics and Economics} {\bf 49}(3), 285-297.

\bibitem{VA1}
L. Ballotta, E. Eberlein, T. Schmidt, R. Zeineddine.
Variable annuities in a L\'evy-based hybrid model with surrender risk. 
{\it Quantitative Finance}, Vol. 20, 2020, Issue 5, 867-886.

\bibitem{VA2}
L. Ballotta, E. Eberlein, T. Schmidt, R. Zeineddine.
Fourier based methods for the management of complex life insurance products.
{\it Insurance: Mathematics and Economics}, 101, (2021), 320-341.
\bibitem{Brown}
J.R. Brown, J.M. Poterba.
Joint Life Annuities and Annuity Demand by Married Couples.
{\it  J.Risk Insur.}, Vol. 67, No. 4 (Dec., 2000), pp. 527-553.


\bibitem{Caflisch98}
R. E. Caflisch, . Monte Carlo and quasi-Monte Carlo methods. {\it Acta Numerica}, 1998 { \bf 7}, 1-49.


\bibitem{Eberlein2010}
E. Eberlein, K. Glau, A. Papapantoleon.
Analysis of Fourier transform valuation formulas and applications. 
{\it Appl. Math. Financ.}, 2010, {\bf 17}(3), 211-240.


\bibitem{EJR}
E. Eberlein, J. Jacod, S. Raible.
L\'evy term structure models: no-arbitrage and completeness.
{\it Finance and Stochastics}, 2005, {\bf 9}, 67-88.


\bibitem{EberleinRai}
E. Eberlein, S. Raible.
Term structure models driven by general L\'evy processes. 
{\it Math. Financ.}, 1999, {\bf 9} (1), 31-53.


\bibitem{Eberlein-R}
  E. Eberlein, M. Rudmann.
  Hybrid  L\'evy models: Design and Computational Aspects. 
 {\it Applied Mathematical Finance}, 2018, Vol. 25, NOS, 5-6, 533-556.
 
 \bibitem{ElMar22}
V. Elvira, L. Martino. Advances in Importance Sampling. Preprint. https://arxiv.org/abs/2102.05407. 2022
 
 
 \bibitem{Escobar}
 M. Escobar, M. Krayzler, F. Ramsauer, D. Saunders, R. Zagst.
 Incorporation of stochastic policyholder behavior in analytical pricing of GMABs and GMDBs. 
 {\it Risks}, 2016, {\bf 4}(4), 41.
 
 \bibitem{Frees}
 E.W. Frees, J. Carriere, E. Valdez.
 Annuity Valuation with Dependent Mortality. 
 {\it J.Risk Insur.}, 1996, Vol. 63, No. 2, 229-261.
 
 \bibitem{Hougaard}
 P. Hougaard. 
 Analysis of multivariate survival data. {\it In: Statistics for Biology and Health}. Springer-Verlag, New York, 2000, p. xviii+542.
 
 \bibitem{jevtic2017joint}
  T.R. Hurd, P. Jevti\'c.
  The joint mortality of couples in continuous time.
  {\it Insurance: Mathematics and Economics}, 75, (2017), 90-97.

\bibitem{Kolkiewicz}
A.W. Kolkiewicz, K.S. Tan.
Unit-linked life insurance contracts with lapse rates dependent on economic factors. {\it Ann. Actuar. Sci.,} 2006, {\bf 1}(01), 49-78.


\bibitem{Lando}
  D. Lando.
  On Cox Processes and Credit Risky Securities.
  {\it Review of Derivatives Research},  2, 99-120 (1998).


\bibitem{Le Courtois}
O. Le Courtois, H. Nakagawa.
On surrender and default risks.
{\it Math. Financ.},  2011, {\bf 23}(1), 143-168. 

\bibitem{Liu01}
J. S. Liu. Monte Carlo Strategies in Scientific Computing; Springer Series in Statistics. Springer, New York, NY. 2004.



 


\bibitem{Luciano}
E. Luciano, J. Spreeuw, E. Vigna.
Evolution of Coupled Lives\textquoteright \, Dependency Across
Generations and Pricing Impact.
{\it Faculty of Actuarial Science and Insurance, City University London}, 2012.

\bibitem{Luciano2}
E. Luciano, J. Spreeuw, E. Vigna.
Spouses\textquoteright \, Dependence across Generations and Pricing Impact on Reversionary Annuities.
{\it Risks}, 2016, {\bf 4} (2), 16.


\bibitem{Luciano2008}
  E. Luciano, E. Vigna.
  Mortality risk via affine stochastic intensities: calibration and empirical relevance.
  {\it Belg. Actuar. Bull.},  8 (1), 5-16, 2008.

\bibitem{Fitz}
C. M. Parkes, B. Benjamin, R. G. Fitzgerald.  
Broken Heart: A Statistical Study of Increased Mortality among Widowers. {\it British Medical Journal}, 1969, i: 740-743.

\bibitem{RoCa10}
C. P. Robert, G. Casella. Monte Carlo Statistical Methods. Springer Series in Statistics. Springer, New York, NY. 2010.


\bibitem{Sari}
K.N. Sari, R. Deautama, A. Febrisutisyanto. The Benefits of Family Annuity Calculation with Vine\textquoteright S\, Copula and Fuzzy Interest Rate.
{\it Journal of Mathematics and Its Applications}, 2023, Vol. 17, Issue 4, 2461-2470.

\bibitem{Ward}
A. Ward.
Mortality of Bereavement. {\it British Medical Journal}, 1976, i: 700-702.

  \end{thebibliography}
\end{document}